\documentclass[a4paper, cleveref, autoref, thm-restate]{lipics-v2021}

\pdfoutput=1 %
\hideLIPIcs  %

\bibliographystyle{plainurl}%

\title{Parameterized Complexity of Directed Traveling Salesman Problem}

\author{Václav Blažej}{Institute of Informatics, University of Warsaw, Warsaw, Poland \and \url{https://vaclavblazej.github.io/}}{v.blazej@uw.edu.pl}{https://orcid.org/0000-0001-9165-6280}{}
\author{Andreas Emil Feldmann}{University of Sheffield, Sheffield, United Kingdom \and \url{https://sites.google.com/site/aefeldmann/home}}{feldmann.a.e@gmail.com}{https://orcid.org/0000-0001-6229-5332}{}
\author{Foivos Fioravantes}{Faculty of Information Technology, Czech Technical University in Prague, Prague, Czech Republic}{foivos.fioravantes@fit.cvut.cz}{https://orcid.org/0000-0001-8217-030X}{}
\author{Paweł Rzążewski}{Warsaw University of Technology, Warsaw, Poland \and Institute of Informatics, University of Warsaw, Warsaw, Poland}{pawel.rzazewski@pw.edu.pl}{https://orcid.org/0000-0001-7696-3848}{}
\author{Ondřej Suchý}{Faculty of Information Technology, Czech Technical University in Prague, Prague, Czech Republic}{ondrej.suchy@fit.cvut.cz}{https://orcid.org/0000-0002-7236-8336}{Co-funded by the European Union under the project Robotics and advanced industrial production (reg. no. CZ.02.01.01/00/22\_008/0004590).}

\authorrunning{V. Blažej, A.\,E. Feldmann, F. Fioravantes, P. Rzążewski, and O. Suchý} %
\Copyright{Václav Blažej, Andreas Emil Feldmann, Foivos Fioravantes, Paweł Rzążewski, and Ondřej Suchý} %
\ccsdesc[500]{Theory of computation~Parameterized complexity and exact algorithms}
\ccsdesc[500]{Theory of computation~Graph algorithms analysis}
\keywords{Directed TSP, parameterized complexity, vertex integrity, treedepth} %
\category{} %
\relatedversiondetails[cite=BlazejFFRS2025]{Full Version}{https://arxiv.org/abs/2506.22127} %

\funding{\textit{Václav Blažej and Paweł Rzążewski}: The work on this article is a part of project BOBR that has received funding from the European Research Council (ERC) under the European Union’s Horizon 2020 research and innovation programme (grant agreement No. 948057).
}%
\acknowledgements{
The project was initiated at the workshop Homonolo 2023. We are grateful to the organizers and other participants for a nice and productive atmosphere.
}%

\nolinenumbers %

\EventEditors{Ho-Lin Chen, Wing-Kai Hon, and Meng-Tsung Tsai}
\EventNoEds{3}
\EventLongTitle{36th International Symposium on Algorithms and Computation (ISAAC 2025)}
\EventShortTitle{ISAAC 2025}
\EventAcronym{ISAAC}
\EventYear{2025}
\EventDate{December 7--10, 2025}
\EventLocation{Tainan, Taiwan}
\EventLogo{}
\SeriesVolume{359}
\ArticleNo{11}
\usepackage{xspace} %

\usepackage{tikz}
\usetikzlibrary{calc, shapes, backgrounds, patterns, arrows, decorations, decorations.pathreplacing, decorations.pathmorphing, fit, arrows.meta,patterns,patterns.meta,decorations.pathmorphing, bending}
\pgfdeclarelayer{bg}
\pgfdeclarelayer{bbg}
\pgfsetlayers{bbg,bg,main}
\graphicspath{{images/}}
\tikzstyle{defbox} = [every node/.style={fill=gray!30,draw,very thin,font=\small,minimum height=0.6cm,,minimum width=0.7cm}]
\tikzstyle{NPh} = [draw=red!80,very thick,fill=red!30]
\tikzstyle{Wh} = [draw=orange!95,very thick]
\tikzstyle{XP} = [fill=orange!25]
\tikzstyle{FPT} = [fill=green!25]
\tikzstyle{def} = [-{stealth'},every node/.style={draw,circle,inner sep=2pt}]
\tikzstyle{hide} = [draw=none,fill=none,rectangle]
\tikzstyle{vertex} = [draw,fill=gray,circle,inner sep=2pt] %

\usepackage{bm}
\usepackage{tikz}
\usetikzlibrary{calc,scopes,backgrounds,decorations.pathmorphing,fit,math}
\usetikzlibrary{decorations.markings}
\def\resultbox<#1>(#2)(#3)(#4;#5)(#6;#7)(#8;#9)%
{
    \tikzstyle{cpxity} = [minimum height=0.4cm,inner sep=0pt,align=center,node distance=0];
    \def\totalwidth{3.90cm};
    \node[inner ysep=3pt,inner xsep=0pt,text width=\totalwidth,align=center, fill=black!7] (#2) at (#3) {\small #1};
    \tikzmath{\boxwidth = \totalwidth/3;\nboxwidth = -\totalwidth/3;}
    \node[text width=\boxwidth,cpxity,#4,anchor=north west] (#2_TSP) at (#2.south west) {\small\ttfamily #5};
    \node[text width=\boxwidth,cpxity,below=0 of #2,#6] (#2_sTSP) {\small\ttfamily #7};
    \node[text width=\boxwidth,cpxity,#8,anchor=north east] (#2_WRP) at (#2.south east) {\small\ttfamily #9};
    \node[fit=(#2)(#2_TSP)(#2_sTSP)(#2_WRP),draw,inner sep=-0.01] {};
    \draw (#2_TSP.north west) -- (#2_WRP.north east);
    \draw (#2_sTSP.north west) -- (#2_sTSP.south west);
    \draw (#2_sTSP.north east) -- (#2_sTSP.south east);
    \coordinate (#2_top) at (#2.north);
    \coordinate (#2_bot) at (#2_sTSP.south);
}

\usepackage{hyperref}
\usepackage{cleveref}
\usepackage{todonotes}
\newcommand{\itemstyle}[1]{\textcolor{lipicsGray}{\sffamily\bfseries\upshape\mathversion{bold}#1}}

\newcommand{\N}{\ensuremath{\mathbb{N}}}
\newcommand{\Z}{\ensuremath{\mathbb{Z}}}

\newcommand{\NP}{\textsf{NP}\xspace}
\newcommand{\NPh}{\mbox{\NP-hard}\xspace}

\newcommand{\FPT}{\textsf{FPT}\xspace}
\newcommand{\XP}{\textsf{XP}\xspace}
\newcommand{\Wh}[1][1]{\mbox{\textsf{W[#1]}-hard}\xspace}
\newcommand{\Whness}[1][1]{\mbox{\textsf{W[#1]}-hardness}\xspace}

\newcommand{\TSP}{\textsc{Traveling Salesman Problem}\xspace}
\newcommand{\TSPshort}{\textsc{TSP}\xspace}
\newcommand{\DMTSP}{\textsc{Directed \TSP}\xspace}
\newcommand{\DMTSPshort}{\textsc{DTSP}\xspace}

\newcommand{\DTSPshort}{\textsc{DTSP}\xspace}
\newcommand{\DMsTSP}{\textsc{Directed Subset \TSPshort}\xspace}
\newcommand{\DMsTSPshort}{{\textsc{Ds\TSPshort}}\xspace}

\newcommand{\DsTSPshort}{{\textsc{Ds\TSPshort}}\xspace}
\newcommand{\DWRP}{\textsc{Directed Waypoint Routing Problem}\xspace}
\newcommand{\DWRPshort}{\textsc{DWRP}\xspace}
\newcommand{\CDSshort}{\textsc{CDS}\xspace}

\newcommand{\budget}{\ensuremath{b}} %
\newcommand{\wFn}{\ensuremath{\omega}} %
\newcommand{\WP}{\ensuremath{W}} %
\newcommand{\cFn}{\ensuremath{\kappa}} %

\newcommand{\tw}{\operatorname{tw}}
\newcommand{\pw}{\operatorname{pw}}

\newtheorem{rrule}{Reduction Rule}
\crefname{rrule}{Reduction Rule}{Reduction Rules}
\Crefname{rrule}{Reduction Rule}{Reduction Rules}
\Crefname{claim}{Claim}{Claims}

\usepackage{framed}
\usepackage{tabularx}

\newlength{\RoundedBoxWidth}
\newsavebox{\GrayRoundedBox}
\newenvironment{GrayBox}[1]%
   {\setlength{\RoundedBoxWidth}{.93\columnwidth}
    \def\boxheading{#1}
    \begin{lrbox}{\GrayRoundedBox}
       \begin{minipage}{\RoundedBoxWidth}}%
   {   \end{minipage}%
    \end{lrbox}%
    \begin{center}%
    \begin{tikzpicture}%
       \node(Text)[draw=black!20,fill=white,rounded corners,inner sep=2ex,text width=\RoundedBoxWidth]%
             {\usebox{\GrayRoundedBox}};%
        \coordinate(x) at (current bounding box.north west);%
        \node [draw=white,rectangle,inner sep=3pt,anchor=north west,fill=white]%
        at ($(x)+(6pt,.75em)$) {\boxheading};%
    \end{tikzpicture}%
    \end{center}}

\newenvironment{defproblemx}[1]{\noindent\ignorespaces%
                                \FrameSep=6pt%
                                \parindent=0pt%
                \begin{GrayBox}{#1}%
                \begin{tabular*}{\columnwidth}{!{\extracolsep{\fill}}@{\hspace{.1em}} >{\itshape} p{1.5cm} p{0.86\columnwidth} @{}}%
            }{
                \end{tabular*}%
                \end{GrayBox}%
                \ignorespacesafterend%
            }

\newcommand{\problemQuestion}[3]{%
  \begin{defproblemx}{#1}
    Input: & #2 \\
    Question: & #3
  \end{defproblemx}%
}

\usepackage{etoolbox}
\def\ShortVersion{1} %
\ifdefined\ShortVersion
\newcommand{\sv}[1]{#1}
\newcommand{\lv}[1]{}
\newcommand{\appendixText}{}
\newcommand{\toappendix}[1]{\gappto{\appendixText}{{#1}}}
\else
\newcommand{\sv}[1]{}
\newcommand{\lv}[1]{#1}
\newcommand{\appendixText}{}
\newcommand{\toappendix}[1]{#1}
\fi
\newcommand{\appmark}{$\star$}

\begin{document}

\begin{titlepage}

\maketitle

\begin{abstract}
The \DMTSP (\DMTSPshort) is a variant of the classical \TSP in which the edges in the graph are directed and a vertex and edge can be visited multiple times.
The goal is to find a directed closed walk of minimum length (or total weight) that visits every vertex of the given graph at least once.
In a yet more general version, \DWRP (\DWRPshort), some vertices are marked as terminals and we are only required to visit all terminals. Furthermore, each edge has its capacity bounding the number of times this edge can be used by a solution.

While both problems (and many other variants of \TSPshort) were extensively investigated, mostly from the approximation point of view, there are surprisingly few results concerning the parameterized complexity. Our starting point is the result of Marx et al. [APPROX/RANDOM 2016] who proved that \DMTSPshort is \Wh[1] parameterized by distance to pathwidth 3.
In this paper we aim to initiate the systematic complexity study of variants of \DMTSP with respect to various, mostly structural, parameters.

We show that \DWRPshort is \FPT parameterized by the solution size, the \emph{feedback edge number}  and the \emph{vertex integrity} of the underlying undirected graph.
Furthermore, the problem is \XP parameterized by treewidth.

On the complexity side, we show that the problem is \Wh[1] parameterized by the distance to constant treedepth.
\end{abstract}

\end{titlepage}

\section{Introduction}\label{sec:intro}

The \TSP(\TSPshort) is possibly the most famous combinatorial optimization problem. It is one of the few problems having entire monographs devoted solely to its aspects~\cite{ApplegateBCC2006,Cook2012,GutinP2007,ShmoysLKL1985}.
In this paper we focus on the directed variant, which can be formulated as follows.
The input instance is a directed graph $G$ with edges assigned positive integers (\emph{weights}) and an integer $\budget$, and we are to decide whether there is a closed walk in $G$ visiting every vertex \emph{at least} once and having total weight at most $\budget$.
We refer to this problem as \DMTSP (\DMTSPshort).
We also consider two natural generalizations of the problem.
In the first of them, \DMsTSP (\DMsTSPshort), we are given alongside the graph also a subset of its vertices $\WP$ called \emph{waypoints} or \emph{terminals} and the walk is only required to visit these vertices.
Obviously, an instance of \DMTSPshort can be interpreted as the instance of \DMsTSPshort by letting $\WP = V(G)$.
In some applications it is also preferred not to traverse any (or some) of the edges of the graph too often.
To this end, in \DWRP (\DWRPshort) (introduced in~\cite{AmiriFJS18}), we augment the input with a positive integer for each edge specifying its \emph{capacity}.
The walk is then required to traverse each edge at most as many times as its capacity.
A simple argument (provided later) shows that each edge is used at most $n:=|V(G)|$ times in any solution.
Hence, setting the capacities to $n$ is equivalent to dropping the capacity constraints completely.

Our focus is on exact parameterized algorithms for the problem.
Let us first summarize the known results for the undirected variant of the problem.
Since \textsc{Hamiltonian Cycle (HamC)} is \NPh{} even on planar graphs of maximum degree three~\cite{GareyJT76}, \TSPshort{} is para-\NPh{} with respect to (w.r.t.) the maximum degree of the input graph as well as w.r.t.\ the maximum number of visits to any vertex.
As the solution size is equal to the number of vertices, the dynamic programming of Bellman~\cite{Bellman62} and Held and Karp~\cite{HeldK62} which runs in $O(2^n \cdot n^2)$ time can be considered as an FPT algorithm w.r.t.\ solution size.
It can be also used to solve \textsc{Subset \TSPshort} in $O(2^{|W|} \cdot |W|^2)$ after an initial computation of distances between all pairs of waypoints.
This can be improved to $O(2^{\sqrt{|W|}\log |W|} n^{O(1)})$ if the graph is planar and the weights are bounded by a polynomial~\cite{KleinM14}.

On the one hand, a single-visit variant of \TSPshort{} is known to be \FPT{} w.r.t.\ treewidth (see also~\cite{CyganKN18}) %
and this result also extends to our multi-visit variant and to \textsc{Waypoint Routing (WPR)}~\cite{SchierreichS22}.
On the other hand, \TSPshort{} is hard on cliques with edge weights 1 and 2, shown by a simple reduction from \textsc{HamC}.
Therefore, it is para-\NPh w.r.t.\ any parameter which is constant on cliques.
Moreover, already \textsc{HamC} is \Wh{} w.r.t.\ cliquewidth~\cite{FGLSZ2018} (while \FPT w.r.t.\ the neighborhood diversity~\cite{Lampis12}).

Recently, the (undirected) \TSPshort{} was studied from the perspective of kernelization~\cite{VanBevernS2024,BlazejCKSSV22}.
It was shown to admit a polynomial kernel w.r.t.\ the feedback edge number and w.r.t.\ distance to constant size components.
Here, for a graph class $\mathcal{C}$, \emph{distance to $\mathcal{C}$} refers to the minimum number of vertices that must be deleted from the input graph to obtain a graph that belongs to~$\mathcal{C}$.
In the above case, the class contains all graphs with every connected component of size at most~$c$, for any fixed~$c$.
\TSPshort was shown not to admit polynomial kernel w.r.t.\ vertex integrity, unless \NP $\subseteq$ co\NP/poly, whereas \textsc{Subset TSP} does not admit polynomial kernel even w.r.t.\ distance to disjoint cycles, under the same assumption~\cite{BlazejCKSSV22}.
Here \emph{vertex integrity} of a graph is the minimum $k$ such that we can delete at most $k$ vertices from the graph in such a way that each connected component of the resulting graph has at most $k$ vertices.

Much less is known about the parameterized complexity of the directed case of the problem.
The mentioned hardness results translate to the directed case by symmetrically orienting the input graph.
While the algorithmic results do not transfer automatically to the directed setting, it is often possible to adapt them.
For example, the dynamic programming w.r.t.\ number of vertices~\cite{Bellman62,HeldK62} applies also to this case.
This can be again improved to~$O(2^{\sqrt{|W|}\log |W|} n^{O(1)})$ for \DsTSPshort{} in planar graphs~\cite{MarxPP22}; also lifting the weight-restriction from the undirected case.
In contrast, \DWRPshort is \NPh already for two waypoints~\cite{AmiriFJS18}. %

Similarly, the single visit variant is \FPT w.r.t. the treewidth.
However, Marx et al.~\cite{MarxSS16} showed, that \DMTSPshort is \Wh{} w.r.t.\ the pathwidth $\pw$ of the input graph  and there is no algorithm for \DMTSPshort running in $f(\pw) \cdot n^{o(\pw)}$ time for any computable function~$f$, unless the Exponential Time Hypothesis (ETH)~\cite{ImpagliazzoP2001,ImpagliazzoPZ01} fails.
More precisely, the proof shows that this is the case even w.r.t.\ distance to graphs of pathwidth~$3$. %
Furthermore, a more careful analysis of the proof (see \Cref{thm:d_to_pw} in the appendix) shows that the \Whness{} holds true even w.r.t.\ distance to pathwidth $2$. %
Thus, the fpt-algorithm for \textsc{WRP} w.r.t.\ treewidth cannot be transferred to the directed case.

Apparently, these are the only results concerning the parameterized complexity of \DMTSPshort, and, in particular, no fpt-algorithms w.r.t.\ structural parameters are known for the problem. 

\newcommand{\cpxbox}[6]{
    \node[#1]      at ($(#5)-(2*.08,2*.1)$) {\phantom{#6}};
    \node[#2]      at ($(#5)-(1*.08,1*.1)$) {\phantom{#6}};
    \node[#3] (#4) at (#5) {#6};
}
{%
\begin{figure}[tbh]
    \centering
    \begin{tikzpicture}
        \tikzstyle{fpt} = [fill=green!30]
        \tikzstyle{whard} = [fill=orange!40]
        \tikzstyle{unknown} = [fill=yellow!10]
        \def\ystep{-1.4}

        \resultbox<\textit{legend}>(leg)(-5,4.9*\ystep)%
        (unknown;\TSPshort)%
        (unknown;\DMsTSPshort)%
        (unknown;\DWRPshort);

        \resultbox<Vertex Cover Number>(VC)(0,0.9*\ystep)%
        (fpt;der)%
        (fpt;der)%
        (fpt;der);

        \resultbox<Dist. to Const. td>(DTD)(0,1.9*\ystep)%
        (unknown;?)%
        (unknown;?)%
        (whard;Thm~\ref{thm:dwp_td_hardness});

        \resultbox<Dist. to Const. pw>(DPW)(0,2.9*\ystep)%
        (whard;Thm~\ref{thm:d_to_pw})%
        (whard;der)%
        (whard;der);

        \resultbox<Feedback Edge Set No.>(FES)(5,1*\ystep)%
        (fpt;der)%
        (fpt;der)%
        (fpt;Thm~\ref{thm:fes});

        \resultbox<Feedback Vertex Set No.>(FVS)(5,2.3*\ystep)%
        (unknown;?)%
        (unknown;?)%
        (unknown;?);

        \resultbox<Treewidth + Max. Degree>(DTW)(5,3.8*\ystep)%
        (unknown;?)%
        (unknown;?)%
        (unknown;?);

        \resultbox<Vertex Integrity>(VI)(-4.5,2*\ystep)%
        (fpt;der)%
        (fpt;der)%
        (fpt;Thm~\ref{thm:vi});

        \resultbox<Treedepth>(TD)(-4.5,3*\ystep)%
        (unknown;?)%
        (unknown;?)%
        (whard;der);

        \resultbox<Pathwidth>(PW)(-4.5,4*\ystep)%
        (whard;der)%
        (whard;der)%
        (whard;der);

        \resultbox<Treewidth>(TW)(0,4.9*\ystep)%
        (whard;der)%
        (whard;der)%
        (whard;Thm~\ref{thm:tw});

        \begin{scope}[very thick]
            \draw ($(FES_bot)-(0.0,0)$) -- ($(FVS_top)+(0.0,0)$);
            \draw ($(FVS_bot)-(0.8,0)$) -- ($(TW_top)+(0.0,0)$);
            \draw ($(VC_bot)+(-0.8,0)$) -- ($(VI_top)+(0.8,0)$);
            \draw ($(VC_bot)+(0.0,0)$) -- ($(DTD_top)+(0.0,0)$);
            \draw ($(VC_bot)+(0.8,0)$) -- ($(FVS_top)+(-0.8,0)$);
            \draw ($(DTD_bot)+(-0.8,0)$) -- ($(TD_top)+(0.8,0)$);
            \draw ($(DTD_bot)+(0.0,0)$) -- ($(DPW_top)+(0.0,0)$);
            \draw ($(DPW_bot)+(-0.8,0)$) -- ($(PW_top)+(0.8,0)$);
            \draw ($(VI_bot)+(0.0,0)$) -- ($(TD_top)+(0.0,0)$);
            \draw ($(TD_bot)+(0.0,0)$) -- ($(PW_top)+(0.0,0)$);
            \draw ($(PW_bot)+(0.8,0)$) -- ($(TW_top)+(-0.8,0)$);
            \draw ($(DTW_bot)-(0.8,0)$) -- ($(TW_top)+(0.8,0)$);
        \end{scope}
    \end{tikzpicture}
    \caption{
        Complexity picture for the studied problems.
        Box fill shows tractability: green stands for \FPT, orange for \Wh and in \XP, and yellow for in \XP, but open whether \FPT or \Wh.
        The boxes either contain references to the corresponding theorem, or \texttt{der} if the result can be derived from the other results. 
        All the hardness results apply even for unit weights, whereas the algorithms do not assume any restriction on the weights.
    }%
    \label{fig:parameters}
\end{figure}
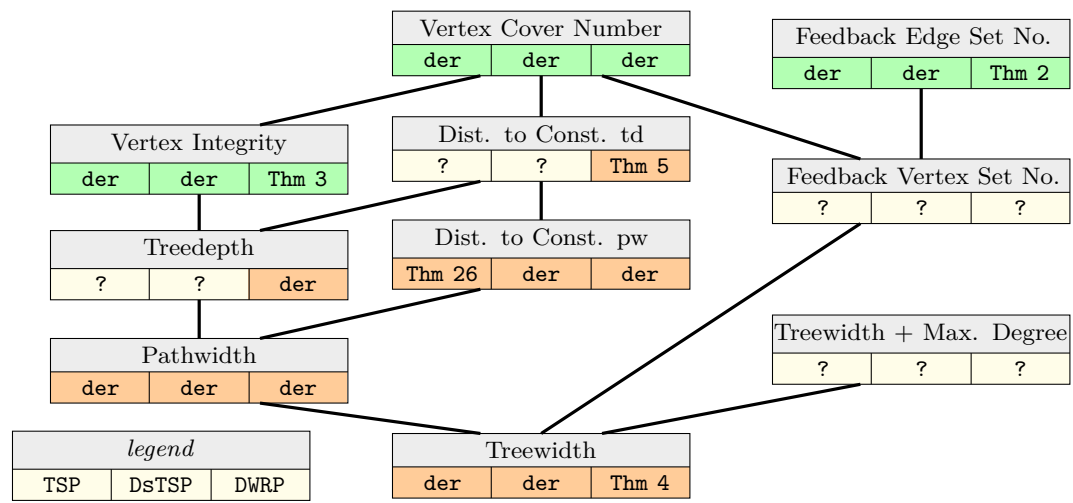
}

\subparagraph*{Our Contribution.}
We initiate the systematic study of the parameterized complexity of \DMTSPshort and its generalizations focusing on structural parameters.
In particular, we concentrate on parameters measuring the structure of the underlying undirected graph of the input.
We aim to fill the apparent gap in knowledge and address the lack of fpt-algorithms.

First, we revisit the standard parameterization, i.e., the parameterization by the solution size.
The budget bounds the number of edges in the solution, hence also the number of vertices visited, and in particular the size of $\WP$ (for any yes-instance of \DTSPshort and \DsTSPshort)
Thus for these problems, the known single exponential algorithms~\cite{Bellman62,HeldK62} and subexponential algorithm for planar graphs~\cite{MarxPP22} show that the problem is FPT w.r.t.\ the budget $\budget$.
For \DWRPshort{} the situation is not as straightforward.
Nevertheless, we show the following result.\footnote{Proofs of statements marked with (\appmark)~can be found in the appendix.}

\begin{restatable}[\appmark]{theorem}{thmsolsize}
\label{thm:solution_size}
 In $e^k4^kk^{O(\log k)}\cdot m \log m$ time we can decide whether the given $m$-edge instance of \DWRPshort has a solution consisting of at most $k$ edge occurrences.
\end{restatable}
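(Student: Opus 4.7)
We combine an edge blow-up, color coding, and a subset dynamic programming. Since a walk of at most $k$ edge occurrences uses every edge at most $k$ times, we first cap every capacity by $k$ and then replace each edge $e$ by $c_e$ unit-capacity parallel copies, obtaining a multigraph $G'$ with at most $mk$ edges. Solutions of the original instance with at most $k$ edge occurrences correspond bijectively to closed trails (edge-simple closed walks) in $G'$ of length at most $k$ visiting all waypoints. If $|W|>k$ we immediately return \emph{no} (at most $k$ distinct vertices can be visited), so we may assume $|W|\le k$. Fix an arbitrary starting waypoint $s\in W$.

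Next, color each edge of $G'$ independently and uniformly with a color from $[k]$. A fixed trail on $j\le k$ edges is \emph{colorful} (its edges receive pairwise distinct colors) with probability at least $k!/k^k\ge e^{-k}$. This is derandomized by iterating over an $(m',k)$-perfect hash family of size $e^k\cdot k^{O(\log k)}\cdot\log m'$, ensuring that for any $k$-edge subset of $G'$ at least one coloring from the family is injective on it.

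For a fixed coloring we solve the colorful version by dynamic programming. For each triple $(v,S,T)$ with $v\in V$, $S\subseteq[k]$, and $T\subseteq W$, we compute a boolean $f(v,S,T)$ indicating whether there exists a colorful trail from $s$ to $v$ in $G'$ whose edge-color set is exactly $S$ and whose set of visited waypoints is exactly $T$. We initialize $f(s,\emptyset,\{s\})$ to true, and, processing each edge $e=u\to v$ of color $c(e)$, set $f(v,S\cup\{c(e)\},T\cup(\{v\}\cap W))$ to true whenever $f(u,S,T)$ is true and $c(e)\notin S$. The input is a yes-instance iff $f(s,S,W)$ is true for some $S\subseteq [k]$ and some coloring from the family.

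The DP has $O(n\cdot 2^k\cdot 2^{|W|})=O(n\cdot 4^k)$ states, and each edge updates $O(2^k\cdot 2^{|W|})$ of them in constant time, giving $O(mk\cdot 4^k)$ time per coloring. Multiplying by the size of the hash family yields the claimed $e^k\cdot 4^k\cdot k^{O(\log k)}\cdot m\log m$ bound. The main technical point is to keep the edge-count and waypoint-coverage bookkeeping separate: edge coloring drives the $k!/k^k$ color-coding bound, while the waypoint mask $T$ is tracked independently, and the inequality $|W|\le k$ is what allows the extra $2^{|W|}$ factor to be absorbed into~$4^k$.
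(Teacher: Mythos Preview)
Your approach is essentially the same as the paper's: cap capacities at $k$, blow up each edge into unit-capacity parallel copies, observe $|W|\le k$, derandomize color coding on the edges via an $(m',k)$-perfect hash family, and run a DP over (endpoint, set of used colors, set of visited waypoints). The running-time accounting also matches.

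There is one genuine gap. Your DP stores only a boolean $f(v,S,T)$, so at the end you can only test whether \emph{some} colorful closed trail of length at most $k$ through all waypoints exists. But a \textsc{DWRP} instance carries edge weights $\omega$ and a budget $b$, and a ``solution'' must have total weight at most $b$; a colorful trail of at most $k$ edges can have arbitrarily large weight. The paper handles this by storing in each DP cell the \emph{minimum total weight} of a qualifying walk (rather than a boolean), and at the end comparing the best value found over all colorings to $b$. Replacing your boolean with a min-weight value fixes the argument at no asymptotic cost.

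Two smaller remarks: the correspondence between walks in $G$ and trails in $G'$ is existence-preserving but not literally bijective (there is a choice of which parallel copies to use), and your DP description should make explicit that states are processed in increasing order of $|S|$ so that all predecessors are available when a state is computed.
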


Next we turn to the structural parameters (see \Cref{fig:parameters} for an overview).
The result of Marx et al.~\cite{MarxSS16} shows that \DTSPshort{} is already \Wh{} w.r.t.\ distance to graphs of pathwidth $3$, i.e., for graphs with rather restricted structure. In fact, as we show in \Cref{thm:d_to_pw} (appendix), this holds also for unweighted graphs.
Therefore, in search for fpt-algorithms for the problems, the structure of the graph after removal of parameter many vertices or edges should be more restricted than having pathwidth~$3$.
Our first positive result is w.r.t. the feedback edge number, i.e., the minimum number of edges that needs to be removed to obtain a forest.

\begin{restatable}{theorem}{thmfes}
\label{thm:fes}
\DWRPshort can be solved in $k^{O(k)} + n^{O(1)}$ time, where $k$ is the feedback edge number of the input graph.
\end{restatable}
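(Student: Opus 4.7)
The plan is to combine a polynomial-time kernelization with an exponential-in-$k$ search on the resulting small instance. Let $U$ denote the underlying undirected multigraph of the input digraph $D$. Since the feedback edge number of $U$ is $k$, a spanning forest of $U$ can be computed in linear time, exposing $k$ non-forest edges; the remainder of $U$ decomposes into at most $O(k)$ branch vertices (those of undirected degree at least three) joined by topological paths whose internal vertices have undirected degree two.

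The kernelization applies three reduction rules exhaustively. \textbf{(R1)}~Delete any non-terminal vertex of undirected degree at most one; no solution uses any edge incident to it. \textbf{(R2)}~Smooth any non-terminal vertex $v$ of undirected degree two with neighbors $u,w$ by replacing the (at most four) directed edges incident to $v$ by pass-through edges $(u,w)$ and $(w,u)$ with combined weights and coupled capacities, together with self-loops at $u$ and $w$ encoding the backtracks $u\to v\to u$ and $w\to v\to w$, which arise only when forced by capacities. \textbf{(R3)}~On any maximal topological path of degree-two terminals between branch vertices, collapse the internal terminals into a single ``required'' edge-marker: a case analysis on the possible walk behaviours shows that visiting the two outermost internal terminals implies visiting every terminal in between, so at most two terminal markers per topological path suffice. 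After exhaustive application, the reduced underlying graph has $O(k)$ vertices and $O(k)$ topological edges, producing a reduced digraph $D'$ of total size $O(k)$, and this whole step runs in $n^{O(1)}$ time.

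On this kernel we solve \DWRPshort{} by enumerating \emph{supports}. Any feasible closed walk corresponds to a non-negative integer vector $x$ indexed by the $O(k)$ directed edges of $D'$ satisfying flow balance at every vertex, the capacity upper bounds $x_e \leq \mathrm{cap}(e)$, lower bounds $x_e \geq 1$ on the edges produced by~(R3), and whose support induces a weakly connected subdigraph covering every remaining terminal (by flow balance, weak connectivity implies strong connectivity and thus Eulerianness). We enumerate all $2^{O(k)}$ candidate supports $S \subseteq E(D')$; for each, we test weak connectivity in $O(k)$ time and solve the induced min-cost integer circulation with lower and upper bounds in polynomial (in $k$) time, returning the best cost. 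The total time on the kernel is $2^{O(k)} \cdot k^{O(1)} = k^{O(k)}$, which combined with the preprocessing yields the claimed $k^{O(k)} + n^{O(1)}$ bound.

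The main technical obstacle is to make rules~(R2) and~(R3) fully correct in the directed, capacitated setting. For~(R2), the four passages through a degree-two non-terminal (two one-way traversals and two backtracks) must each be representable by the new edges and loops with matching cost, and their capacities must be coupled so that tight-capacity instances cannot be cheated by the kernel; the cleanest formalization keeps joint upper-bound constraints rather than truly independent edge capacities. For~(R3), one must classify the walks visiting all internal terminals of a topological path into a small set of ``modes'' (a complete one-way traversal from either end, or meet-in-the-middle from both ends), each parameterized by a net flow and a cost, so that recording only the two outermost terminal markers together with these modes preserves the optimum exactly.
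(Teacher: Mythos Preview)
Your high-level plan—reduce to a structure of size $O(k)$ and then search exponentially—matches the paper's, but the kernelization step has two genuine gaps.

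First, (R1) removes only \emph{non-terminal} leaves, so terminal leaves survive all three rules and the reduced graph need not have $O(k)$ vertices (take a star with terminal leaves and $k=0$). The paper handles a terminal leaf $v$ with unique neighbour $u$ by deleting $v$, adding $u$ to the terminal set, and subtracting the cost of the forced round trip $u\to v\to u$ from the budget (rejecting if one of the two arcs is absent).

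Second, and more seriously, (R3) is not sound as stated. The claim that ``visiting the two outermost internal terminals implies visiting every terminal in between'' fails exactly in the meet-in-the-middle situation you yourself flag at the end: the walk may enter from $u$, touch only the terminal nearest $u$, return, and separately enter from $v$, touch only the terminal nearest $v$, and return—covering both outermost markers while skipping everything between them. If instead you encode the path by a single \emph{required} edge, you force at least one full traversal and thereby exclude the legitimate non-pass behaviours ($u\lcirclearrowleft v$, $u\rcirclearrowright v$, $u\circlearrowleft v$ in the paper's notation); when the middle arc of the path is missing in both directions, $u\circlearrowleft v$ may be the \emph{only} feasible behaviour, and your kernel would wrongly reject. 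These modes cannot all be packed into a constant-size \DWRPshort{} gadget, because which mode is optimal on one path depends on the global flow balance at its endpoints $u,v$. The paper resolves this not by a reduction rule but by \emph{branching}: it guesses one of seven types for each of the $O(k)$ topological paths (precomputing the associated cost once, in $n^{O(1)}$ time), and only then builds an $O(k)$-size compressed \DWRPshort{} instance per branch, which it solves by enumerating the $k^{O(k)}$ edge-multiplicity vectors. Your final paragraph essentially rediscovers the need for this mode classification; the step that does not go through is treating it as a local reduction rule rather than a global guess.
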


Then, we present our main algorithmic result, showing that \DWRPshort is \FPT w.r.t.\ vertex integrity.
Recall that vertex integrity allows, roughly speaking, to delete parameter many vertices so as to obtain a graph with connected components of parameter size.

\begin{restatable}{theorem}{thmvi}
\label{thm:vi}
 \DWRPshort can be solved in $k^{O(k^6)} \cdot n \cdot \log^2 n (\log n + \log U)$ time, where $k$ is the vertex integrity of the input graph and $U$ is the maximum weight of an edge in $G$.
\end{restatable}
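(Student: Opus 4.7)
The plan is to exploit the decomposition given by vertex integrity: compute (in FPT time) a set $S\subseteq V(G)$ with $|S|\le k$ such that every connected component of the underlying undirected $G-S$ has at most $k$ vertices; call these components $C_1,\ldots,C_t$. Any closed walk in $G$ decomposes canonically along $S$ into a sequence of \emph{excursions}, i.e.\ maximal subwalks with both endpoints in $S$ whose internal vertices all lie in a single component $C_i$, possibly interleaved with steps taken entirely inside $G[S]$. A \DWRPshort{} solution can therefore be encoded by a multiset of excursions per component plus a multiset of arcs of $G[S]$, together with an ordering into one closed walk that covers $\WP$ and respects the capacities.

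For each component $C_i$, I would enumerate a catalogue of \emph{local signatures}: a local signature encodes, for one excursion, its endpoints $(s,s')\in S\times S$, the set of terminals of $C_i$ it visits, the usage vector on the $O(k^2)$ arcs of $G[S\cup C_i]$ incident to $C_i$, and its weight. A shortcutting argument should show that in some optimum every excursion uses $k^{O(1)}$ arc occurrences, bounding the catalogue by $k^{O(k^2)}$ entries. Summing over all excursions into $C_i$ then yields an \emph{aggregate signature} $\sigma_i$ that records, for each pair in $S\times S$, how many excursions of each local type are picked; the set of aggregate signatures worth considering per component has size $k^{O(\mathrm{poly}(k))}$, and the minimum weight realising each $\sigma_i$ is computed by a dynamic programme over the local catalogue.

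A selection of one $\sigma_i$ per component together with a multiset of arcs of $G[S]$ yields a feasible solution iff (a) the aggregated arc usage respects every capacity, (b) $\WP$ is covered by the visited-terminal sets of the chosen signatures, and (c) the multidigraph $H$ on $S$ whose multiplicity on $(s,s')$ equals the number of excursions of all components going from $s$ to $s'$ plus the chosen copies of that arc is weakly connected on its support and balanced in in/out-degree at every vertex, so that it admits a closed Eulerian walk. Since $|S|\le k$, all three checks involve only $k^{O(1)}$ numerical data, and the best selection is found by a last layer of dynamic programming over the components; binary searching the optimum cost and using suitable data structures on the arc weights accounts for the extra $n\log^2 n(\log n+\log U)$ factor. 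The step I expect to be the main obstacle is establishing the canonicalisation lemma that caps every excursion's length by $k^{O(1)}$ in some optimum, since the shortcuts must simultaneously respect per-arc capacities, preserve the list of visited terminals, and not disconnect the overall closed walk through $S$; how tightly one can bound this length is what ultimately determines the exponent $k^{O(k^6)}$ in the stated running time.
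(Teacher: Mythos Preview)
Your decomposition into a modulator $S$ and components, and the notion of excursions (what the paper calls \emph{segments}), is the right starting point and matches the paper. The paper also needs a canonicalisation bound, but it is not the hard part: once one restricts attention to segments that are genuinely needed to cover $\WP\cap C_i$, a simple counting argument shows each such \emph{traversal} of $C_i$ uses at most $k$ segments of at most $k+2$ vertices each.

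The real gap is in your global optimisation step. Your claim that the set of aggregate signatures per component has size $k^{O(\mathrm{poly}(k))}$ is false: besides the traversal segments (which \emph{are} bounded), an optimum may use further \emph{connector} segments through $C_i$ solely to achieve degree balance at $S$, and a single simple $s$--$s'$ path through $C_i$ may have to be repeated $\Theta(n)$ times. Consequently the multiset of excursions per component is not $f(k)$-bounded. More seriously, your ``last layer of dynamic programming over the components'' cannot be made to run in FPT time as described: the coupling between components is the Eulerian balance condition at each $v\in S$, and the partial balance vector after processing the first $j$ components is an integer vector in $\{-n,\ldots,n\}^{|S|}$, giving an $n^{\Theta(k)}$ state space. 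Binary searching the cost does nothing to reduce this.

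The paper resolves exactly this obstacle with $N$-fold integer programming, which your proposal does not mention. After guessing a skeleton $(H_0,H_1)$ on $S$ (there are $2^{O(k^2)}$ choices) that fixes which $(s,s')$ transitions must occur at least once, the paper sets up one brick per component, with a binary variable per traversal and an integer variable (range $[0,n]$) per simple connector segment; local constraints enforce the per-component capacity bounds, and only $O(k^2)$ \emph{global} linear constraints enforce balance at each $v\in S$ and compatibility with $(H_0,H_1)$. This is precisely the block structure that $N$-fold IP solvers handle in time $(rs\Delta)^{O(r^2s+rs^2)}\cdot d\log d\log D\log\mathit{val}_{\max}$, which is where both the $k^{O(k^6)}$ exponent and the $n\log^2 n(\log n+\log U)$ factor actually come from.
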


The algorithm first guesses a part of the solution ensuring its connectivity and then employs a well structured ILP (\emph{$N$-fold IP}~\cite{KouteckyLO18}) to find the cheapest way to complete the initial guess into a solution.

We also show that, while the problems are probably not FPT w.r.t.\ treewidth alone, they are FPT w.r.t.\ treewidth combined with the bound on the number of visits to every city.

\begin{theorem}[\appmark]
\label{thm:tw}
\DWRPshort on $n$-vertex graphs of treewidth $t$ can be solved in $n^{O(t)}$ time.
\end{theorem}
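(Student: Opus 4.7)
The plan is to run a dynamic program over a nice tree decomposition of width $t$ of the underlying undirected graph of $G$. The overall shape of the algorithm mirrors standard Steiner-tree-style treewidth DPs, but two extra ingredients need to be tracked beyond the single-visit case: how many times each edge is traversed, and the directed Eulerian balance at every vertex.

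For each node $x$ with bag $B_x$ I would index the table by a triple $(\bm{f},\bm{g},\pi)$. The vector $\bm{f}\colon B_x\to\{0,\dots,N\}$ records, for each $v\in B_x$, the number of traversals with $v$ as tail that have already been committed in the part of the graph processed so far; $\bm{g}$ records the symmetric count with $v$ as head; and $\pi$ is a partition of $\{v\in B_x:f_v+g_v>0\}$ whose classes describe which bag vertices currently lie in the same weakly connected component of the committed edge multiset. The threshold $N$ can be taken polynomial in $n$: the introduction already notes that in any optimal solution each edge is used at most $n$ times, so $N=O(n^2)$ suffices as an upper bound on the degree of any single vertex. The value stored at $(\bm{f},\bm{g},\pi)$ is the minimum total weight of a capacity-respecting edge multiset $E'$, supported on edges both of whose endpoints have been introduced below $x$, that realises $(\bm{f},\bm{g})$ on $B_x$, is balanced at every already-forgotten vertex, covers every forgotten terminal, and whose weakly connected components meeting $B_x$ project to $\pi$ while any component disjoint from $B_x$ has already been merged into a single master one.

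I would then process the usual four node types. Introducing a vertex adds it as a singleton class with $f_v=g_v=0$; introducing a directed edge $uv$ branches over $c\in\{0,\dots,\cFn(uv)\}$ traversals, increments $f_u$ and $g_v$ by $c$, merges the classes of $u$ and $v$ when $c>0$, and adds $c\cdot\wFn(uv)$ to the cost; forgetting $v$ discards any state with $f_v\neq g_v$, with $v$ an unvisited terminal, or with $v$'s class disjoint from the remainder of $B_x$ (since that class would become permanently disconnected from the global walk); and a join node adds the two children's vectors coordinatewise and joins their partitions in the partition lattice, rejecting combinations that would create a cycle among the classes. The root answer is the minimum over states with $\bm{f}=\bm{g}=\bm{0}$ in which the classes containing all terminals coincide.

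The main obstacle is squeezing the state space and transition cost into $n^{O(t)}$. There are $(N+1)^{2(t+1)}=n^{O(t)}$ choices for $(\bm{f},\bm{g})$ and at most $(t+1)^{t+1}=2^{O(t\log t)}$ partitions of $B_x$, so each table has $n^{O(t)}$ entries; at an introduce-edge node the extra branching over $c\le N$ adds only a polynomial factor, while at a join node naively combining two tables of size $n^{O(t)}$ still fits within $n^{O(t)}$. With $O(tn)$ nodes in the nice tree decomposition, the total running time is $n^{O(t)}$. Correctness reduces to the standard characterisation that a directed closed walk covering a prescribed terminal set exists precisely when the underlying edge multiset is weakly connected on its support, respects the capacities, has equal in- and out-degree at every vertex, and contains every terminal in its support; the four node types preserve and accumulate exactly these invariants.
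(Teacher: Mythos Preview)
Your approach is essentially the paper's: a DP over a nice tree decomposition indexed by in/out-degree profiles on the bag together with a connectivity partition, with the degree range bounded polynomially in $n$. Two details need fixing.

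First, at a join node you should \emph{not} reject combinations that ``create a cycle among the classes''. The correct operation is simply the join (finest common coarsening) of the two child partitions; every pair of child partitions is admissible provided the degree vectors add up correctly. Rejecting cycles is appropriate in spanning-forest DPs, where each merge must correspond to a single tree edge, but here the partial solution is an arbitrary edge multiset and two bag vertices that are already connected on one side may legitimately be connected again on the other. For instance, with bag $\{u,v,w,x\}$, a child contributing partition $\{\{u,v\},\{w,x\}\}$ combined with a child contributing $\{\{u,w\},\{v,x\}\}$ is exactly what arises when the solution is a single cycle passing through $u,v,x,w$ in that order with the $u\text{--}v$ and $w\text{--}x$ portions below one child and the $v\text{--}x$ and $u\text{--}w$ portions below the other. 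Implementing your rule as written would discard such solutions.

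Second, your connectivity bookkeeping at the end is inconsistent. The forget rule rejects any state in which the forgotten vertex's class is disjoint from the remainder of the bag, but when the final vertex of the graph is forgotten the remainder is empty, so every state is rejected and the DP returns nothing; the earlier mention of a ``master component'' in the state description does not match this rule. The paper handles this by fixing one waypoint $w_0$ and keeping it in every bag up to the root, reading the answer from states with bag $\{w_0\}$ and partition $\{\{w_0\}\}$; your master-component idea would also work but must be carried explicitly in the state and respected by the forget transition (allow forgetting the last vertex of a class exactly once, recording that this class is the global one, and thereafter forbid any further class from becoming isolated).
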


This means that all considered problems are in XP w.r.t.\ treewidth and, hence, w.r.t.\ all structural parameters considered in this paper.

We complement the positive results by showing that \DWRPshort is \Wh w.r.t.\ distance to constant treedepth, i.e., if the input graph becomes of constant treedepth after deleting parameter many vertices.

\begin{restatable}{theorem}{thmtd}
\label{thm:dwp_td_hardness}
    \DWRPshort is \Wh with respect to modulator to constant treedepth, even if all weights are 1.
\end{restatable}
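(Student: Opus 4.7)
\textbf{Proof plan for Theorem~\ref{thm:dwp_td_hardness}.} The plan is a parameterized reduction from \textsc{Multicolored Clique} (MCC), which is \Wh parameterized by the number $k$ of colors. Starting from an MCC instance with color classes $V_1,\dots,V_k$ (each of size $n$) and edge set $E$, we construct a \DWRPshort instance $(G,\WP,\wFn,\cFn,\budget)$ in which the modulator $M$ consists of $O(k^2)$ vertices: a \emph{selection hub} $s_i$ for each color $i$, plus a \emph{verification hub} $h_{ij}$ for every ordered pair $(i,j)$ with $i\neq j$. The parameter $|M|$ is thus a function of $k$ alone, and the goal is to arrange the remaining graph so that each connected component of $G-M$ has treedepth at most some fixed constant $c$.

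The construction will use two families of gadgets attached between modulator vertices. For each color $i$, a \emph{selection gadget} $S_i$ is a collection of $n$ disjoint \emph{vertex tokens}, one per $u\in V_i$; each token is a short directed structure (e.g.\ a two-edge path $s_i\to x_u\to s_i$ augmented with a mandatory waypoint $x_u$), so its treedepth is constant. Unit capacities together with waypoint status force the walk to traverse exactly those tokens corresponding to a single chosen vertex $u_i\in V_i$. For each pair $(i,j)$ and every edge $uv\in E$ with $u\in V_i$, $v\in V_j$, an \emph{edge token} is a bounded-depth directed structure attached to $s_i$, $s_j$, $h_{ij}$, and $h_{ji}$ (say, two short directed paths routed through $h_{ij}$ and $h_{ji}$ with a waypoint in between). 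Because every such token has only constantly many vertices, its treedepth is constant and each resulting component of $G-M$ has treedepth at most $c$. Crucially, the verification hubs play the role of ``collecting'' points, which removes the need for long comparator paths — these would otherwise blow up the treedepth as in the construction of Marx et al.

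The budget $\budget$ is set tightly so that a valid tour must: (i) traverse exactly one vertex token per color, picking $u_i\in V_i$; (ii) visit exactly $\binom{k}{2}$ (or $k(k-1)$ in the directed case) edge tokens, one per ordered pair; and (iii) the edge tokens visited must be consistent with the selected vertices, i.e.\ correspond to edges $u_iu_j\in E$. Inconsistency would force the walk to use some capacity-$1$ edge twice or pay an additional cost above $\budget$, precisely as in standard MCC reductions. Correctness follows by showing that a multicolored $k$-clique yields a closed walk of cost exactly $\budget$, and conversely any walk within the budget encodes such a clique.

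The main obstacle is engineering the verification so that consistency between the chosen vertex tokens and the chosen edge tokens is enforced \emph{without} introducing long directed paths outside $M$: replacing the pathwidth-$3$ ``comparator paths'' of Marx et al.\ by bounded-depth structures requires exploiting both the directedness and the capacity constraints of \DWRPshort. I expect the correct gadget to send, for each selected $u_i$, a constant-length detour through $h_{ij}$ for every $j$, using unit capacities on the selection edges so that exactly one edge token incident to $s_i$ through $h_{ij}$ can be chosen; matching the two endpoints of each edge token to the two chosen vertex tokens then yields the required clique. Since all weights may be taken to be $1$ (the budget counts edge uses), the construction stays unit-weight, establishing the theorem.
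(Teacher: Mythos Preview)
Your proposal has a genuine gap at its two key steps.

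\textbf{Selection is incoherent.} You describe each vertex token as ``a two-edge path $s_i\to x_u\to s_i$ augmented with a mandatory waypoint $x_u$'' and then claim that ``unit capacities together with waypoint status force the walk to traverse exactly those tokens corresponding to a single chosen vertex $u_i\in V_i$.'' But in \DWRPshort the walk must visit \emph{every} waypoint; if each $x_u$ is a waypoint, the walk visits all $n$ tokens in $V_i$, and nothing is selected. If instead no $x_u$ is a waypoint, nothing compels the walk to enter any token. Either way, you have not described a mechanism that singles out one vertex per color.

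\textbf{Verification is not provided.} You correctly identify the ``main obstacle'' as enforcing consistency between selected vertices and selected edges without long paths outside $M$, but you do not solve it: you write ``I expect the correct gadget to send \dots'' and stop. This is precisely the hard part. Standard MCC reductions encode a vertex choice as a number in $[n]$ and check equality of two such numbers; the known ways of doing this (e.g.\ the comparator paths of Marx et al.) produce $\Theta(n)$-length structures per check, which is exactly what blows up treedepth. Saying that capacities can replace this is a hope, not a construction, and it is far from clear that a constant-treedepth equality test exists for \DWRPshort.

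For comparison, the paper sidesteps both problems by \emph{not} reducing from MCC. It reduces from \textsc{Capacitated Dominating Set} parameterized by distance to stars (shown \Wh by re-examining the construction of Dom et al.). The point is that the graph structure of the CDS instance is inherited: each vertex and edge of $G$ becomes a constant-size ``cover'' gadget with two traversal modes, the CDS capacity $c(u)$ becomes a single edge capacity in $G'$, and the modulator-to-stars of $G$ turns (up to a constant factor) into a modulator-to-constant-treedepth of $G'$. No numerical identity needs to be encoded or compared, so no $\Theta(n)$-length gadget is ever required. If you want a direct MCC reduction, you would need an actual constant-depth consistency gadget; absent that, the CDS route is the one that works.
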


Note that this result is incomparable to that of Marx et al.~\cite{MarxSS16} in that it holds for a more restricted graph class, while it only applies to a more general problem.

\subparagraph*{Further Related Work.}
Another popular formulation of \textsc{(Directed) \TSPshort} is that the input graph is complete and, thus, we are given a matrix of weights for all ordered pairs of vertices.
Since this matrix is asymmetric for the case of directed graphs, the directed variant of \TSPshort{} is often called \textsc{Asymmetric TSP (ATSP)}.
It is easy to switch from graph $G$ to a complete graph on the same set of vertices by performing the so called \emph{metric closure}, i.e., assigning the edge $(u,v)$ of the complete graph the weight given by distance from vertex $u$ to vertex $v$ in the graph $G$.
The weights obtained this way satisfy \emph{triangle inequality}.
Whenever this is the case, there is no reason to revisit any vertex, as it cannot make the tour shorter.
However, if the matrix is not required to satisfy the triangle inequality, it makes a difference whether visiting a vertex more than once is allowed or not (single visit version).

For the single visit version with general matrices a folklore result shows that it cannot be approximated within any polynomial factor, unless P=\NP \cite{DBLP:books/daglib/0023376}.
For undirected case with triangle inequality it was know that there is $\frac32$-approximation since 1976~\cite{Christofides76}, which was later improved~\cite{GharanSS11,MomkeS11, Mucha12} up to the currently best known factor $1.4$~\cite{SeboV14} for the graphic case, where the weights originate from the distances in an unweighted graph.
However, for the directed case, it was long open, whether a constant factor approximation exists for matrices satisfying the triangle inequality~\cite{AnariG15,AsadpourGMGS17}.
Constant-factor approximation algorithms appeared only recently, first for unweighted directed graphs~\cite{Svensson15}, later for general distances~\cite{SvenssonTV20}, with the currently best approximation factor being $22+\varepsilon$ for any $\varepsilon > 0$~\cite{TraubV22}. The best known lower bound on the approximability is $\frac{75}{74}$~\cite{KarpinskiLS15}.

It is surprising, given the lack of parameterized analysis for the problem, that there are several studies concerning parameterized approximation algorithm for \DMTSPshort.
First, B\"{o}ckenhauer~\cite{BockenhauerHKK07} considered (undirected) TSP with deadlines on the latest time to visit some of the vertices and showed a $2.5$-approximation algorithm in $k! \cdot n^{O(1)}$ time, where $k$ is the number of deadlines, and complemented that by several innaproximability results.
The main result of the already mentioned paper of Marx et al.~\cite{MarxSS16} is a polynomial-time constant-factor approximation for \DMTSPshort in nearly embeddable graphs, where both the factor and the time depend on the actual parameters of the class of graphs considered.
Bonnet et al.~\cite{BonnetLP18} provided a $(\log r)$-approximation for ATSP in $2^{O(\frac{n}{r})}$ time for any $r$.
Behrendt et al.~\cite{BehrendtC0LLW23} considered the case that there are few edges with (significantly) asymmetric costs and presented a $2.5$-approximation parameterized by the vertex cover number of the graph formed by such edges and a $3$-approximation parameterized by the minimum number of asymmetric edges in a minimum arborescence of the graph.

Further results are known for a variant of \DWRPshort where the waypoints have a prescribed order~\cite{AmiriFJPS18} and for the undirected case~\cite{AmiriFS20}.

Another studied variant of \DTSPshort is the \textsc{Many-visit \TSPshort}, where each vertex has a prescribed number of visits and the task is to find a closed walk which visits each vertex \emph{exactly} the given number of times. After a significant effort, an algorithm was found for the problem that is exponential only in the number of vertices, i.e., independent of the demands which can be huge~\cite{BergerKMV20,KowalikLNSW22}. The problem was recently studied from parameterized perspective~\cite{MannensNSS21} under the name \textsc{Connected Flow}. Here only a subset of vertices has a prescribed number of visits, the other vertices can be visited arbitrary number of times (or not visited at all) and edges have capacities (as in \DWRPshort). They showed that the problem is para-\NPh w.r.t. number of vertices with demand even in unweighted graph, while FPT w.r.t.\ this parameter, if there are no capacities. Further, they gave a polynomial kernel w.r.t. vertex cover number, an $n^{O(\tw)}$-time algorithm and a matching $n^{o(\tw)}$-time ETH lower bound.

Several parameterized and fine-grained complexity studies considered the local search for the undirected single-visit \TSPshort parameterized the size of the neighborhood to be searched~\cite{deBergBJW21,BonnetIJK19,CyganKS19,GuoHNS13,LanciaD24,Marx08}. None of these studies seems to extend to directed graphs.

\section{Notations, Preliminaries, and Basic Observations}

\toappendix{
\section{Omitted Material from Preliminaries} %

\begin{definition}[Tree decomposition]\label{def:treeDecomposition}
	A \emph{tree decomposition} of a graph~$G$ is a triple $(T, \beta, r)$, where~$T$ is a tree rooted at node~$r$ and $\beta \colon V(T) \to 2^{V(G)}$ is a mapping that assigns to each node $x$ of the tree its \emph{bag} $\beta(x)$ and satisfies:
	\begin{enumerate}
		\item $\bigcup_{x \in V(T)} \beta(x) = V(G)$;
		\item For every $\{u, v\} \in E(G)$ there exists a node $x \in V(T)$, such that $u, v \in \beta(x)$;
		\item For every $u \in V(G)$ the nodes $\{x \in V(T) \mid u \in \beta(x)\}$ form a connected subtree of~$T$.
	\end{enumerate}
\end{definition}

To distinguish between the vertices of a tree decomposition and the vertices of the underlying graph, we use the term \emph{node} for the vertices of a given tree decomposition.

The \emph{width} of a tree decomposition $(T, \beta, r)$ is $\max_{x \in V(T)} |\beta(x)|-1$.
The \emph{treewidth} of a graph~$G$ (denoted $\tw(G)$) is the minimum width of a~tree decomposition of~$G$ over all such decompositions.

For a tree decomposition $(T, \beta, r)$ and a node $x \in V(T)$, we denote by $\gamma(x)$ the union of vertices in $\beta(x)$ and in~$\beta(y)$ for all descendants~$y$ of~$x$ in~$T$. By $E_x$, we denote the set of edges introduced in the subtree of $T$ rooted in $x$. Altogether, we denote by $G_x$ the graph $(\gamma(x), E_x)$. %

\begin{definition}[Nice tree decomposition~{\cite[p.~168]{CyganFKLMPPS15}}]\label{def:niceTreeDecomposition}
	A tree decomposition of a graph~$G$ is \emph{nice} if $\beta(r) = \emptyset$, and each node $x \in V(T)$ is of one of the following five types:
	\begin{itemize}
		\item \emph{Leaf node}---$x$ has no children and $\beta(x) = \emptyset$;
		\item \emph{Introduce vertex node}---$x$ has exactly one child~$y$ and $\beta(x) = \beta(y) \cup \{u\}$ for some $u \in V(G) \setminus \beta(y)$;
		\item \emph{Introduce edge node}---$x$ has exactly one child~$y$, $\beta(x) = \beta(y)$, and an edge $\{u, v\} \in E(G)$ for $u, v \in \beta(x)$ is introduced (i.e., $E_x = E_y \cup \{\{u,v\}\}$);
		\item \emph{Forget node}---$x$ has exactly one child~$y$ and $\beta(x) = \beta(y) \setminus \{u\}$ for some $u \in \beta(y)$;
		\item \emph{Join node}---$x$ has exactly two children~$y, z$ and $\beta(x) = \beta(y) = \beta(z)$,
	\end{itemize}
	and each edge $e \in E(G)$ is introduced exactly once. %
\end{definition}

We note that any given tree decomposition of width $t$ can be transformed into a nice one of the same width in $O(t^2\cdot n)$ time~\cite[Lemma 7.4, see also the discussion on page 168]{CyganFKLMPPS15}.
}

We follow standard graph-theoretic notation~\cite{Diestel2025}. Also, we refer the reader to~\cite{CyganFKLMPPS15} for a textbook on parameterized complexity.

\subparagraph*{(Generalized) $N$-fold integer programming ($N$-fold IP)} is the problem of minimizing a separable convex objective (for us it suffices to minimize a linear objective) over a set of constraints of a specific form.
Let $r \in \N$ and $s_i, t_i  \in \N$ for every $i \in [N]$.
The IP has $d = \sum_{i \in [N]} t_i$ variables partitioned into $N$ so-called \emph{bricks}.
The $i$-th brick is denoted $x^{(i)}$ and contains $t_i$ variables.
The constraints have the following form:
\begin{align*}
D_1 x^{(1)} + D_2 x^{(2)} + \cdots + D_N x^{(N)} &= \mathbf{b}_0    &&\text{ \emph{(linking (global) constraints)}}\\%
A_i x^{(i)}                        &= \mathbf{b}_i    & \forall i \in [N] &\text{ \emph{(local constraints)}}\\ %
\mathbf{l}_i \le x^{(i)}                 &\le \mathbf{u}_i  & \forall i \in [N] &\text{ \emph{(box constraints)}}%
\end{align*}
where we have $D_i \in \Z^{r \times t_i}$, $A_i \in \Z^{s_i \times t_i}$, $\mathbf{b}_0 \in \Z^{r}$, $\mathbf{b}_i \in \Z^{s_i}$ and $\mathbf{l}_i,\mathbf{u}_i \in \Z^{t_i}$ for every $i \in [N]$.
Let us denote $s = \max_{i \in [N]} s_i$ and recall that the dimension is $d = \sum_{i \in [N]} t_i$.
There are $r$ global constraints and, apart from these, each variable is only involved in at most $s$ local constraints and one box constraint.
We call variables %
not appearing in linking constraints \emph{local}.

The current best algorithm solving the $N$-fold IP in \( (rs\Delta)^{O(r^2s+rs^2)} d\log d \log D \log val_{\max}\) time is by Eisenbrand et al.~\cite[Cor. 97]{EisenbrandHKKLO19}; see also \cite{EisenbrandHK18,KouteckyLO18}, where $D$ is the maximum range of a variable, \mbox{$\Delta = \max_{i \in [N]} \Big( \max \big( \|D_i\|_\infty, \|A_i\|_\infty \big) \Big)$}, and $val_{\max}$ is the maximum feasible value of the objective.

\subparagraph*{Variants of the Problem and Basic Observations.}
Let us formally introduce the considered problems and establish notation used in the next sections.
\problemQuestion{\DMTSP (\DMTSPshort)}
{A directed graph $G=(V,E)$, edge weights $\wFn \colon E \to \N$, and a budget $\budget \in \N$.}
{Is there a closed directed walk in $G$ of total weight at most $\budget$ that traverses each vertex of $G$ at least once?}%
\problemQuestion{\DWRP (\DWRPshort)}
	{A directed graph $G=(V,E)$, set of waypoints $\WP \subseteq V$, edge weights $\wFn \colon E \to \N$, edge capacities $\cFn \colon E \to \N$, budget $\budget \in \N$.}
	{Is there a closed directed walk $C$ in $G$ of total weight at most $\budget$, that traverses each vertex in $\WP$ (at least once) and such that for each edge $e$ the number of times $C$ traverses $e$ is at most $\cFn(e)$?}%
To avoid degenerate cases, we assume that $|W|$ is always at least 2 and that the input graph is strongly connected.
We use undirected parameters, and, when speaking about a width of a directed graph, we mean the width of the underlying undirected graph.

\begin{lemma}[{Folklore, see, e.g., Bang-Jansen and Gutin~\cite[Exercise 1.12]{JensenG2009}}]\label{lem:cycle_decomposition}
 Any closed directed walk (in particular a solution to the problem) can be decomposed into simple cycles, such that the number of times an edge is traversed by the walk equals the number of the cycles it appears in.
\end{lemma}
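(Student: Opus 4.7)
The plan is to work with the multigraph $H$ associated to the closed walk $C$. Specifically, for each edge $e \in E(G)$, include $\mu(e)$ parallel copies in $H$, where $\mu(e)$ denotes the number of times $C$ traverses $e$. The first step is to verify that $H$ is \emph{balanced}, meaning that the in-degree equals the out-degree at every vertex: since $C$ is a closed walk, every occurrence of a vertex $v$ on $C$ contributes exactly one incoming and one outgoing edge-copy to $H$, so the two degrees coincide at every $v \in V(G)$.

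The extraction of the cycle decomposition then proceeds by induction on $|E(H)|$. If $H$ has no edges, the decomposition is empty. Otherwise, pick any vertex $v_0$ incident to an edge of $H$ and greedily grow a directed walk $v_0, v_1, v_2, \dots$ in $H$, deleting each traversed edge-copy as we proceed. The balance property guarantees that whenever the walk enters a vertex $v \neq v_0$, it can also leave: after entering, the residual in- and out-degrees of $v$ differ by exactly one, so at least one outgoing copy remains. As $V(G)$ is finite, the walk must eventually revisit some vertex, which closes a simple directed cycle $Q$. Remove the edge-copies of $Q$ from $H$; balance is preserved at every vertex (each cycle vertex loses one incoming and one outgoing copy), and $|E(H)|$ strictly decreases, so the inductive hypothesis yields a decomposition of what remains, to which we adjoin $Q$.

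Translating back to $G$, each edge-copy of $H$ belongs to exactly one simple cycle in the produced family, hence the number of cycles containing a given edge $e \in E(G)$ equals the number of copies of $e$ in $H$, which is $\mu(e)$. This matches the traversal counts of $C$ exactly, so the decomposition has the claimed property.

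The only delicate step is arguing that the greedy path-growing procedure cannot get stuck before closing a cycle; this is handled precisely by the balance invariant, and the rest amounts to straightforward bookkeeping and induction. I do not foresee any other obstacle, which is consistent with the statement being folklore.
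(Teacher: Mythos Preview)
The paper does not provide a proof of this lemma; it is simply stated as folklore. Your argument is the standard one and is correct: pass to the balanced multigraph of edge-copies, greedily peel off simple directed cycles, and count. One small wording issue: you say you delete each traversed edge-copy as you grow the walk, and then later ``remove the edge-copies of $Q$ from $H$''. If the first repeated vertex is $v_i$ with $i>0$, the prefix edges $v_0 v_1,\dots,v_{i-1}v_i$ were also deleted during the walk but are \emph{not} part of $Q$; for the induction you must restore them before recursing. Cleaner is not to delete while walking at all: just walk until some vertex repeats, take the resulting simple cycle, remove exactly its edges from $H$, and recurse on the remainder (which is still balanced). With that cosmetic fix the proof is complete.
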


\begin{observation}[\appmark]\label{obs:n_traversals}
 In each optimal solution to \DWRPshort, each vertex $v$ is visited at most $|W \setminus \{v\}|$ times.
\end{observation}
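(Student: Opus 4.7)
The plan is an exchange argument: if the vertex $v$ is visited too often in an optimal solution, then there must be some closed sub-walk through $v$ whose removal still covers all waypoints and strictly decreases the total weight.

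First I would suppose, for contradiction, that an optimal solution $C$ visits some vertex $v$ exactly $k$ times with $k > |W\setminus\{v\}|$. Because $|W|\ge 2$ by assumption, $|W\setminus\{v\}|\ge 1$, so $k\ge 2$---this will be crucial later. I would then cut $C$ at its $k$ occurrences of $v$ into $k$ closed sub-walks $C_1',\dots,C_k'$, each starting and ending at $v$. Each $C_i'$ traverses at least one edge and hence has strictly positive total weight, since edge weights are positive integers.

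Next, for every waypoint $w\in W\setminus\{v\}$, I would designate one sub-walk that visits $w$ as the \emph{primary} sub-walk for $w$. The set of primary sub-walks has size at most $|W\setminus\{v\}|<k$, so there exists an index $j$ such that $C_j'$ is primary for no waypoint. Define $C^*$ to be the concatenation, in cyclic order, of the remaining $k-1$ sub-walks; since each begins and ends at $v$, $C^*$ is a single closed walk at $v$.

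Finally I would verify that $C^*$ is a strictly better feasible solution than $C$: (i) each waypoint $w\in W\setminus\{v\}$ is still visited, via its retained primary sub-walk; (ii) $v$ is still visited at the start/end because $k-1\ge 1$ sub-walks remain; (iii) edge capacities are preserved since the edge-multiset of $C^*$ is a sub-multiset of that of $C$; and (iv) the total weight of $C^*$ is strictly less than that of $C$ because $C_j'$ carries positive weight. This contradicts the optimality of $C$. The only sensitive point---and the reason the bound is $|W\setminus\{v\}|$ rather than $|W\setminus\{v\}|-1$---is keeping $v$ visited after the removal, which is exactly what the hypothesis $|W|\ge 2$ buys, giving $k\ge 2$.
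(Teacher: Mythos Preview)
Your proof is correct and, in fact, simpler than the paper's. The paper decomposes the optimal walk into simple cycles via \cref{lem:cycle_decomposition}, builds an auxiliary bipartite graph between cycles and vertices, runs a BFS from $v$ in this graph, prunes non-waypoint leaves, and argues via a leaf-counting bound in the resulting tree that the degree of $v$ (which equals the number of visits) is at most $|W\setminus\{v\}|$. Along the way it must verify that the pruned set of cycles still assembles into a connected Eulerian subgraph covering all waypoints.

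Your argument bypasses all of this: splitting the closed walk directly at the occurrences of $v$ yields closed sub-walks that can be freely recombined, and pigeonhole immediately produces a removable one. This is more elementary (no cycle decomposition, no auxiliary graph), and the exchange argument makes the role of the hypothesis $|W|\ge 2$ very transparent. The paper's approach has the minor advantage of simultaneously showing that the optimal solution's cycle decomposition forms a ``lean'' tree-like structure over the waypoints, but for the bare statement of the observation your route is cleaner.
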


Hence, an instance of \DMsTSPshort can be interpreted as an equivalent instance of \DWRPshort, by letting $\cFn(e)=|V|$ for every arc.

\toappendix{
\begin{proof}[Proof of \Cref{obs:n_traversals}]
 Consider the decomposition of the optimal solution $S$ to cycles as guaranteed by \Cref{lem:cycle_decomposition}, denote it $\mathcal{C}=\{C_1, \ldots, C_q\}$.
 Consider an undirected bipartite graph $B$ with $V(B) = \mathcal{C} \cup V(G)$ and an edge connecting $u \in V(G)$ with $C \in \mathcal{C}$ if and only if $u \in V(C)$.
 Obviously $\deg_B v$ equals the number of cycles $v$ appears in, which in turn equals the number of times $v$ is visited by~$S$. 
 Let $T$ be a BFS tree for the BFS started from the vertex~$v$ in graph~$B$. 
 It might not contain all vertices of $V(G)$, it only contains vertices visited by $S$, and in particular it contains all vertices in $W$.
 Obviously $\deg_T v =\deg_B v$.
 
 If $v$ is a leaf of $T$, then we are done.
 Otherwise, let $T'$ be a tree obtained from $T$ by iteratively removing leaves that are not vertices of $W$.
 Let $\mathcal{C}'$ be the set of cycles that appear in $T'$ and $H$ be a multigraph obtained as the edge disjoint union of cycles in $\mathcal{C}'$.
 As $H$ is an edge disjoint union of cycles, it is balanced.
 We claim that it is also connected.
 In particular, there is a path from any vertex $u \in V(H)$ to $v$ in~$H$.
 Indeed, $u$ appears on some cycle $C \in \mathcal{C}'$ and there is a path $P_{T'}$ from $C$ to $v$ in~$T'$.
 By replacing each cycle on $P_{T'}$ by a part of the cycle between the appropriate vertices, we obtain a path from $u$ to $v$ in $H$.
 Thus there is an Eulerian trail $S'$ in $H$.
 Also, since~$T'$ contains all vertices in $W$, $H$ contains all vertices in~$W$, and $S'$ visits all vertices in~$W$.
 Moreover it traverses each edge at most as many times as~$S$.
 Thus it is a solution.
 Therefore, as $S$ is an optimal solution, we have $\mathcal{C}'=\mathcal{C}$.
 
 This, in particular, implies that $v \in V(T')$ and $\deg_{T'} v = \deg_T v$.
 However, as each leaf of~$T'$ is a vertex of $W\setminus\{v\}$, $T'$ has at most $|W\setminus\{v\}|$ leaves, which implies $\deg_{T'} v \le |W\setminus\{v\}|$.
\end{proof}
}%

\toappendix{
\section{FPT Algorithm with Respect to Solution Size}

\thmsolsize*

Our proof is based on the color coding technique introduced by Alon et al.~\cite{AlonYZ1995}, hence we need the following proposition.

\begin{proposition}[Naor et al.~\cite{NaorSS95}]\label{prop:hashing}
    Let $m,k$ be any positive integers.
    There exists a family $\mathcal{F}$ of colorings $f \colon [m]\to [k]$ of size $e^kk^{O(\log k)}\log m$ constructible in $e^kk^{O(\log k)}m\log m$ time such that for any $F\subseteq [m]$ with $|F|\leq k$ there exists $f \in \mathcal{F}$ such that $f|_F$ is injective.
\end{proposition}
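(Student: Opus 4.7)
I plan to construct $\mathcal{F}$ by composing two hashing stages: a first-stage family $\mathcal{G}$ compressing $[m]$ down to the much smaller universe $[k^2]$ while preserving injectivity on every $k$-subset, and a second-stage family $\mathcal{H}$ refining $[k^2]$ down to $[k]$ with the same property. Setting $\mathcal{F} := \{h \circ g \;:\; g \in \mathcal{G},\ h \in \mathcal{H}\}$ then yields the desired universal property: given $F \subseteq [m]$ with $|F| \le k$, first pick $g \in \mathcal{G}$ injective on $F$, so $|g(F)| \le k$, then pick $h \in \mathcal{H}$ injective on $g(F)$, whence $h \circ g$ is injective on $F$.

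For the first stage, a uniformly random $g \colon [m] \to [k^2]$ is injective on a fixed $k$-set with constant probability; the birthday bound gives success probability at least $1/2$, since $\binom{k}{2}/k^2 \le 1/2$. Classical derandomization via Reed--Solomon-based splitters (Schmidt--Siegel) or $\varepsilon$-biased sample spaces produces such a family $\mathcal{G}$ of size $k^{O(1)} \log m$, constructible in time $k^{O(1)} m \log m$.

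For the second stage, a uniformly random $h \colon [k^2] \to [k]$ is injective on a fixed $k$-subset with probability $k!/k^k \ge \sqrt{2\pi k}\, e^{-k}$ by Stirling. A union bound over the $\binom{k^2}{k} \le k^{2k}$ possible subsets shows that the probabilistic method yields an $\mathcal{H}$ of size $O(e^k k \log k)$. Turning this into a deterministic construction is the delicate step, since naive conditional expectations would track every $k$-subset of $[k^2]$ and cost roughly $e^k k^{\Omega(k)}$ time. To obtain the claimed $e^k k^{O(\log k)}$ time bound, I plan to follow the Naor--Schulman--Srinivasan recursion: interpose $O(\log k)$ levels of intermediate splitters, each reducing the number of ``still-problematic'' $k$-subsets by a constant factor, so that only $\mathrm{poly}(k)$ surviving subsets need to be certified injective at each level. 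Each level costs $e^k \cdot \mathrm{poly}(k)$, giving $e^k k^{O(\log k)}$ overall.

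The product of the two stages has size $|\mathcal{G}| \cdot |\mathcal{H}| = e^k k^{O(\log k)} \log m$ and is constructible in time $k^{O(1)} m \log m + e^k k^{O(\log k)} = e^k k^{O(\log k)} m \log m$, matching the stated bounds. The main obstacle is clearly the second-stage derandomization: the existential bound on $|\mathcal{H}|$ is only $O(e^k k \log k)$, so the whole challenge is to match it efficiently despite the ambient universe $[k^2]$ containing exponentially many $k$-subsets. The recursive splitter refinement above, which is the technical heart of the Naor--Schulman--Srinivasan construction, is what makes the $k^{O(\log k)}$ overhead sufficient.
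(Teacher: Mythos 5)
The paper does not prove this proposition at all: it is imported verbatim as a black-box citation of Naor, Schulman, and Srinivasan, so there is no ``paper proof'' to compare against. Your proposal is a reasonable reconstruction of how that cited result is actually obtained, and its skeleton is sound: the two-stage composition (first collapse $[m]$ to $[k^2]$ with a polynomial-size family, then handle $[k^2]\to[k]$), the composition argument for injectivity on $F$, the birthday estimate $\binom{k}{2}/k^2<1/2$ for the first stage, and the Stirling bound $k!/k^k\ge\sqrt{2\pi k}\,e^{-k}$ with the union bound over at most $k^{2k}$ subsets for the existential second-stage bound are all correct.

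The one place where your plan is not yet a proof is exactly the place you flag: the efficient derandomization of the second stage. Your accounting there is loose --- ``$O(\log k)$ levels, each costing $e^k\cdot\mathrm{poly}(k)$'' does not by itself yield $e^kk^{O(\log k)}$ (multiplying such levels would give $e^{O(k\log k)}$), and the actual Naor--Schulman--Srinivasan argument is more delicate: the $e^k$ factor arises once, as a product of optimally-sized perfect hash families for $\Theta(\log k)$-sized blocks found by exhaustive search over the $k^{O(\log k)}$ small subsets of $[k^2]$, while the $k^{O(\log k)}$ overhead comes from the size of the splitter that partitions a $k$-set into those blocks. Since you explicitly defer this step to the splitter machinery of the cited paper rather than claiming to rederive it, your write-up is an honest outline of a known construction rather than an independent proof --- which is precisely how the paper itself treats the statement. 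If you intend your argument to stand alone, the recursive splitter construction and its size/time analysis are the part you would actually have to supply.
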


\begin{proof}[Proof of \cref{thm:solution_size}]
 Since a solution which consist of at most $k$ edge occurrences can use each edge at most $k$ times, we assume that each capacity is at most $k$.
 Now we turn the input graph into a multigraph by replacing each edge $e$ with $\cFn(e)$ edges, each with the same tail and head as $e$, weight $\wFn(e)$, and capacity $1$.
 It is easy to see that there is a solution of weight $\budget$ in this new multigraph if and only if there is a solution of weight $\budget$ in the original graph.
 For simplicity, we denote this new multigraph $G=(V,E)$, the new weight function $\wFn$ and $\cFn$ is the capacity function assigning $1$ to every edge.

 Since there must be an arc entering each of the waypoints, if $|\WP| > k$, then there is no solution consisting of at most $k$ edge occurrences and we can answer NO.
 Hence, let us assume that $|\WP| \le k$ for the rest of the proof.

 Now we construct the family $\mathcal{F}$ of colorings as guaranteed by \Cref{prop:hashing} to color $E$ with $k$ colors.
 For each coloring $f \in \mathcal{F}$ we search for a \emph{colorful} solution to our problem, that is, a solution that uses at most one edge of each color.
 Obviously, any colorful solution is a solution for the original problem.
 In particular, it cannot use any edge more than once, as this would mean that it used more than one edge of the corresponding color.
 Thus, it obeys the capacities.
 Moreover, by the properties of $f$, any solution consisting of at most $k$ edges becomes colorful under at least one coloring from $\mathcal{F}$.
 Hence, the minimum weight colorful solution over all colorings from $\mathcal{F}$ is also a minimum weight solution using at most $k$ edges, and it is enough to compare the weight of this solution with the budget $\budget$.

 To find a minimum weight colorful solution under a fixed coloring $f \in \mathcal{F}$ we use dynamic programming.
 First fix a waypoint $w_0 \in \WP$ and let $\widehat{\WP} = \WP \setminus \{w_0\}$.
 Now for every $W' \subseteq \widehat{\WP}$, every $C \subseteq [k]$ and every $v \in V$ we create a table cell $T[W',C,v]$ with the following semantics.
 Cell $T[W',C,v]$ stores the minimum weight of a walk that starts in $w_0$, visits each waypoint in $\WP'$, uses exactly one edge of each color in $C$, and ends in~$v$.
 If there is no such walk, then it stores $\infty$.

 We fill the table in order of increasing~$|C|$.
 For $C = \emptyset$ we let $T[\emptyset, \emptyset, w_0]=0$ and $T[\WP',\emptyset, v]=\infty$ for any $\WP' \neq \emptyset$ or $v \neq w_0$.
 For $|C| \ge 1$ we let
 \[T[\WP', C, v] = \min\big\{T[\WP' \setminus\{v\}, C\setminus\{f(u,v)\}, u] + \wFn(u,v)\,\big|\, (u,v) \in E, f(u,v) \in C\big\}.\]
 Once the table was filled for all $C$, the minimum weight of a colorful solution under $f$ can be obtained as $\min\big\{T[\widehat{\WP},C,w_0] \,\big|\, C \subseteq [k]\big\}$.

 As for the running time, for a fixed coloring $f$ and a fixed vertex $v \in V$, the are $2^{|\widehat{\WP}|} \cdot 2^k$ table cells of the form $T[\WP',C, v]$ and each such cell can be computed in $O(\deg^{in}v)$ time
 Summing over all vertices $v \in V$ we obtain that the tables can be filled in $2^{|\widehat{\WP}|} \cdot 2^k \cdot m'$ time, where $m'=|E|$.
 As $|\widehat{\WP}| \le k-1$, this gives $O(4^k \cdot m')$ time to fill the cells.
 The answer can be found in the same running time.
 Summing over all $f \in \mathcal{F}$, this gives $e^k4^kk^{O(\log k)}\cdot m' \log m'$ time in total, including the time to construct~$f$.
 Since the capacity of each edge in the original graph was at most $k$, the multigraph has at most $k$ times more edges than the original graph, i.e., $m' \le k \cdot m$.
 Therefore we get the claimed $e^k4^kk^{O(\log k)}\cdot m \log m$ running time.

 Towards correctness, we first prove that the table is filled according to the desired semantics.
 We proceed by induction on $|C|$.
 If the walk is not allowed to use edge of any color, then it has to be trivial, and, thus, must end in $w_0$ and cannot visit any waypoint in $\widehat{\WP}$.
 As the trivial walk is of weight $0$, for $C =\emptyset$ the table is filled correctly.

 Let us now assume that $|C| \ge 1$ and the table is filled correctly for all smaller sets of colors.
 On the one hand, for every $(u,v) \in E$ with $f(u,v) \in C$, by induction hypothesis, if $T[\WP' \setminus\{v\}, C\setminus\{f(u,v)\}, u] \neq \infty$, then there is a walk $P$ of weight $T[\WP' \setminus\{v\}, C\setminus\{f(u,v)\}, u]$ that starts in $w_0$, visits each waypoint in $\WP' \setminus\{v\}$, uses exactly one edge of each color in $C \setminus\{f(u,v)\}$, and ends in~$u$.
 Attaching the edge $(u,v)$ to the end of walk $P$, we obtain a walk of weight $T[\WP' \setminus\{v\}, C\setminus\{f(u,v)\}, u] + \wFn(u,v)$ that starts in $w_0$, visits each waypoint in $\WP'$, uses exactly one edge of each color in $C$, and ends in~$v$.
 Therefore, the value stored in $T[W',C,v]$ is at least the minimum weight of such a walk.
 On the other hand, if $P$ is a walk of weight $h$ that starts in $w_0$, visits each waypoint in $\WP'$, uses exactly one edge of each color in $C$, and ends in~$v$, then let $(u,v)$ be the last edge on that walk (there is at least one edge since $|C| \ge 1$).
 Let $P'$ be obtained from $P$ by removing the last edge.
 Then $P'$ is a walk of weight $h -\wFn(u,v)$ that starts in $w_0$, visits each waypoint in $\WP' \setminus\{v\}$, uses exactly one edge of each color in $C \setminus\{f(u,v)\}$, and ends in~$u$.
 Thus $T[\WP' \setminus\{v\}, C\setminus\{f(u,v)\}, u] \le h -\wFn(u,v)$ and
 \[T[W',C,v] \le T[\WP' \setminus\{v\}, C\setminus\{f(u,v)\}, u] + \wFn(u,v) \le h -\wFn(u,v) + \wFn(u,v) = h.\]
 I.e., as $P$ was arbitrary, $T[W',C,v]$ is at most the weight of any such walk.
 Together with the previous one, it shows that the value $T[W',C,v]$ is computed correctly.

 Any walk that starts and ends in $w_0$, visits each waypoint in $\widehat{\WP}$ and uses exactly one edge of each color in $C$ for some $C \subseteq [k]$ gives a colorful solution to the problem, as it visits each waypoint in $\widehat{\WP} \cup \{w_0\}=\WP$.
 Thus any cell $T[\widehat{\WP},C,w_0]$ corresponds to a colorful solution.
 Conversely, any colorful solution must visit $w_0$.
 Therefore, it can be interpreted as a walk that starts and ends in $w_0$, visits each waypoint in $\widehat{\WP}$ and uses exactly one edge of each color in $C$ for some $C \subseteq [k]$.
 The minimum weight of such a walk is stored in $T[\widehat{\WP},C,w_0]$.
 Thus, our answer is correct.
\end{proof}

\begin{corollary}
 \DWRPshort is FPT with respect to budget $\budget$.
\end{corollary}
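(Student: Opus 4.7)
The plan is to use the color coding technique with a $k$-perfect hash family, after a preprocessing step that turns capacities into edge multiplicities so that colorfulness automatically enforces the capacity constraints. First I would observe two easy cuts: if $|\WP|>k$ there is no solution with $k$ edge occurrences (each waypoint needs an incoming edge), and any solution with at most $k$ occurrences uses each edge at most $k$ times, so we may cap every $\cFn(e)$ at $k$. Then I would replace each directed edge $e$ by $\cFn(e)$ parallel copies of weight $\wFn(e)$ and capacity $1$, obtaining a multigraph $G'$ with $m'\le k m$ arcs in which any optimal solution of the original instance corresponds to a walk that traverses each arc of $G'$ at most once (and vice versa), so the problem reduces to finding a closed walk in $G'$ that visits every waypoint, uses at most $k$ distinct arcs, and has minimum total weight.

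Next I would invoke the Naor--Schulman--Srinivasan construction (Proposition~\ref{prop:hashing}) to produce, in time $e^k k^{O(\log k)} m'\log m'$, a family $\mathcal{F}$ of colorings $f \colon E(G') \to [k]$ of size $e^k k^{O(\log k)} \log m'$ such that every $k$-element subset of $E(G')$ is made injective (colorful) by at least one $f\in\mathcal{F}$. For each $f\in\mathcal{F}$ I would search for the minimum-weight closed walk that is colorful with respect to $f$; any such walk uses each arc at most once, so it automatically satisfies the capacity constraint, and the minimum over $\mathcal{F}$ equals the minimum weight of any feasible walk with at most $k$ edge occurrences, which I finally compare to $\budget$.

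To find the best colorful walk for a fixed $f$, I would fix an arbitrary waypoint $w_0 \in \WP$ and run a dynamic program with cells $T[\WP',C,v]$, where $\WP' \subseteq \WP\setminus\{w_0\}$, $C\subseteq[k]$, and $v\in V(G')$, recording the minimum weight of a walk that starts at $w_0$, ends at $v$, visits exactly the waypoints of $\WP'\cup\{w_0\}$, and uses exactly one arc of each color in $C$. The base case is $T[\emptyset,\emptyset,w_0]=0$, and the transition extends a walk by a single arc $(u,v)$ whose color $f(u,v)$ is not yet used:
\[
T[\WP',C,v] \;=\; \min_{\substack{(u,v)\in E(G')\\ f(u,v)\in C}} T[\WP'\setminus\{v\},\,C\setminus\{f(u,v)\},\,u]+\wFn(u,v).
\]
The optimum colorful value for $f$ is $\min_{C\subseteq[k]} T[\WP\setminus\{w_0\},C,w_0]$. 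For each vertex $v$ there are $2^{|\WP|-1}\cdot 2^k \le 4^k$ cells, and each is filled in time linear in its in-degree, so processing one coloring takes $O(4^k\cdot m')$ time, and multiplying by $|\mathcal{F}|$ and by the factor $k$ lost in passing from $m$ to $m'$ yields the claimed $e^k 4^k k^{O(\log k)}\, m\log m$ bound.

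The main obstacle is the capacity constraint: a straightforward application of color coding only counts \emph{distinct} edges, so colorful walks would be free to violate capacities. The key idea that makes everything work is the multiplicity blow-up in the preprocessing step, which converts the capacity budget into edge identities and reduces the problem to a purely ``colorful subgraph-style'' search, after which the standard subset-on-waypoints DP and the Naor--Schulman--Srinivasan family deliver the stated running time.
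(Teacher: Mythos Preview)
Your proposal is correct and takes essentially the same approach as the paper: the multiplicity blow-up to encode capacities as edge identities, Naor--Schulman--Srinivasan hashing, and the same $T[\WP',C,v]$ dynamic program are exactly the contents of the paper's Theorem~\ref{thm:solution_size}. The paper's own proof of the \emph{corollary} is the one-line observation that, since all weights are positive integers, a solution of weight at most $\budget$ has at most $\budget$ edge occurrences, and then invokes Theorem~\ref{thm:solution_size} with $k=\budget$; you have reproved that theorem inline but should make this link between $\budget$ and your parameter $k$ explicit.
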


\begin{proof}
 As the weight of each edge is at least one, a solution of weight $\budget$ cannot use more than $\budget$ edges.
\end{proof}

\begin{corollary}
In $2^{O(\widehat{k}^2)}\cdot m \log m$ time we can decide whether the given instance of \DWRPshort has a solution using at most $\widehat{k}$ distinct edges (ignoring multiplicities).
\end{corollary}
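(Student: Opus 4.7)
The plan is to adapt the color-coding approach of Theorem~\ref{thm:solution_size} so that it finds solutions with few \emph{distinct} edges rather than few edge occurrences. First I would note that if $|\WP|>2\widehat{k}$ the answer is immediately NO, because the support of any solution using at most $\widehat{k}$ distinct edges spans at most $2\widehat{k}$ vertices, all of which must be waypoints. Next, I would apply Observation~\ref{obs:n_traversals} to the sub-instance induced by a hypothetical support: the optimum closed walk within this support visits each vertex at most $|\WP|-1\le 2\widehat{k}-1$ times, so each edge is used at most $2\widehat{k}-1$ times, and the total number of edge occurrences is bounded by $\widehat{k}(2\widehat{k}-1)\in O(\widehat{k}^2)$.

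I would then invoke Proposition~\ref{prop:hashing} with parameter $\widehat{k}$ to color the edges of $G$, producing a family $\mathcal{F}$ of $e^{\widehat{k}}\widehat{k}^{O(\log\widehat{k})}\log m$ colorings $f\colon E\to[\widehat{k}]$ with the property that any set of at most $\widehat{k}$ edges is colorful (injectively colored) under some $f\in\mathcal{F}$. The task then reduces to deciding, for each $f\in\mathcal{F}$, whether there is a closed walk of weight at most $\budget$ visiting all waypoints whose \emph{support} uses at most one edge per color class; the final answer is YES iff this holds for some coloring.

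The main obstacle is finding such a colorful walk efficiently for a fixed $f$: in contrast to the setting of Theorem~\ref{thm:solution_size}, an edge may now be traversed several times, and consistency must be enforced, since every traversal of color $c$ in the walk has to correspond to the same $G$-edge. To resolve this I would additionally enumerate the \emph{abstract shape} of the support: a directed labelled graph $H$ on at most $2\widehat{k}$ vertices (with waypoints marked) and at most $\widehat{k}$ edges, each tagged with a color from $[\widehat{k}]$ and a multiplicity from $\{1,\ldots,2\widehat{k}-1\}$. The number of such shapes is bounded by $2^{O(\widehat{k}^2)}$ (counting adjacency matrices on $2\widehat{k}$ labelled vertices together with color- and multiplicity-tags on the at most $\widehat{k}$ edges). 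For each fixed shape and coloring, the representative of each color is pinned down by the shape, so the minimum-weight realization in $G$ (matching shape endpoints to $G$-vertices via an inner pass that uses the same hashing paradigm on vertices, and then picking the cheapest $G$-edge of the prescribed color between them) can be found after a single $O(m\log m)$ preprocessing sort of $E$ by endpoints and color. Summing over all shapes and colorings yields the claimed bound of $2^{O(\widehat{k}^2)}\cdot m\log m$; the hardest technical point to nail down is verifying that the realization check can indeed be carried out in near-linear time, which is where the amortization over shapes and colorings is essential.
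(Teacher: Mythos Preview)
Your first paragraph already contains everything the paper uses: once you have shown that a solution on at most $\widehat k$ distinct edges has $O(\widehat k^2)$ edge occurrences, the paper simply invokes Theorem~\ref{thm:solution_size} with $k=\widehat k^2$ and reads off the bound $e^{\widehat k^2}4^{\widehat k^2}(\widehat k^2)^{O(\log \widehat k)}\cdot m\log m = 2^{O(\widehat k^2)}\cdot m\log m$. (The paper even sharpens your $|W|\le 2\widehat k$ to $|W|\le\widehat k$, since every waypoint needs its own incoming arc in the support; this does not change the asymptotics.)

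Everything from your second paragraph onward is therefore an unnecessary detour, and as written it contains a genuine gap. Coloring edges with only $\widehat k$ colors and enumerating abstract shapes is fine, but the realization step does not go through. For a fixed shape $H$ and edge-coloring $f$ you need an injection $\phi\colon V(H)\to V(G)$ such that every shape edge of color $c$ is realized by a $G$-edge of $f$-color $c$ between $\phi$ of its endpoints; crucially, all shape edges incident to a common shape vertex $a$ must agree on $\phi(a)$. Your suggestion of ``picking the cheapest $G$-edge of the prescribed color between them'' presupposes that $\phi$ is already known, and the appeal to ``the same hashing paradigm on vertices'' does not resolve this: a vertex-color class can still contain many $G$-vertices, and enforcing endpoint-consistency across the chosen edges is a labelled subgraph-isomorphism instance for which no near-linear-time procedure is apparent. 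The obvious repairs---brute-forcing $\phi$, or a DP that tracks the partial vertex assignment---cost $n^{\Theta(\widehat k)}$, which blows the stated budget.
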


\begin{proof}
 Consider a solution using at most $\widehat{k}$ distinct edges. 
 By \Cref{obs:n_traversals}, each of the edges is used at most $|W| \le \widehat{k}$ times.
  This implies that the solution uses at most $\widehat{k}^2$ edge occurrences in total and the result follows from \Cref{thm:solution_size}.
\end{proof}
}

\section{\DWRPshort is FPT with Respect to the Feedback Edge Number}\label{sec:fesn}
\toappendix{
\section{Ommited Material from \Cref{sec:fesn}}
}
In this section we prove \cref{thm:fes} that we restate here.

\thmfes*

\begin{proof}
Let $(G=(V,E), \WP, \wFn, \cFn, \budget)$ be an instance of \DWRPshort, where $n = |V|$.
Let $G'$ be the underlying undirected graph of $G$.
Let $F \subseteq E(G')$ be a feedback edge set of  $G'$ of size at most $k$, i.e.,  $G' \setminus F$ is a forest.
Note that it can be computed in polynomial time.

Our algorithm consists of three phases.
First, we exhaustively apply some straightforward reduction rules in order to simplify (or already reject) the input.
Then, we branch into $2^{O(k)}$ of possibilities, guessing the structure of the solution.
Finally, for each such guess, we create an equivalent instance of \DWRPshort with $O(k)$ vertices and edges.
The original instance is a yes-instance if and only if at least one of the created instances is a yes-instance.
Thus, our algorithm can be seen as a compression to OR of $2^{O(k)}$  instances, each of size $O(k)$.
By \cref{obs:n_traversals}, each such instance can be solved in time $k^{O(k)}$ by brute force.
Thus, the overall running time is $k^{O(k)} + n^{O(1)}$.

\proofsubparagraph{Preprocessing.}
The following reduction rule lets us deal with vertices of degree 1 in $G'$.

\begin{rrule}
Suppose there is a vertex $v$ in $G$ which has only one neighbor $u$ in $G'$.
\begin{itemize}
 \item If $v \notin W$, remove $v$ from the graph.
 \item If $v \in W$ and both $(u,v) \in E$ and $(v,u) \in E$, then remove the vertex $v$, include $u$ into $W$, and decrease the budget by $\wFn((u,v)) + \wFn((v,u))$.
 \item If $v \in W$ and one of the arcs $(u,v),(v,u)$ is not present, then reject the instance.
\end{itemize}
\end{rrule}
The correctness of the rule is straightforward.
Note that, as $\{u,v\}$ is not part of any cycle in $G'$, we may assume that $\{u,v\} \notin F$, and thus $F$ is still a feedback edge set of the graph resulting from the reduction rule.

We apply the reduction rule exhaustively, and, for simplicity, we keep denoting the resulting instance by $(G,\WP,\wFn,\cFn,\budget)$ and the underlying undirected graph by~$G'$.

Let $R$ be the set of all vertices that are incident to the edges in $F$.
Note that since the reduction rule was exhaustively applied, each vertex has %
degree at least two in $G'$.
Therefore, each leaf of $G''=G' \setminus F$ is in $R$ and, thus, $G''$ has at most $2k$ leaves.
Let $D$ be the set of vertices that have degree at least $3$ in $G''$.
As $G''$ is a forest with at most $2k$ leaves, the size of $D$ is at most $2k-1$.
We also define $X = R \cup D$, and we have $|X| \leq 4k-1$.

Let $\mathcal{P}'$ contain all paths in $G''$ with both endpoints in $X$ and all internal vertices outside~$X$.
In other words, the paths in $\mathcal{P}'$ correspond to the edges of the forest obtained from $G''$ by contracting all vertices of degree 2.
As there are at most $4k-1$ vertices in $X$ and, as $G''$ is a forest, the number of paths in $\mathcal{P}'$ is at most $4k-2$.
Finally, we define  $\mathcal{P} =  \mathcal{P}' \cup F$; we clearly have $|\mathcal{P}| \leq 5k-2$.

\proofsubparagraph{Types of paths and their costs.}
Consider an optimum solution $C^*$, i.e., a shortest (in terms of the sum of weights) closed walk that visits every vertex from $W$ in $G$. Among all such solutions, pick one with the minimum number of edges.

Consider a path $P \in \mathcal{P}$ with endpoints $u,v \in X$.
A \emph{visit} is an inclusion-wise maximal set of orientations of edges from $P$ that appear consecutively in $C^*$.
A visit is a \emph{pass} if it contains for every edge of $P$ exactly one of its orientations and it contains it exactly once, i.e., it corresponds to traversing $P$ once, either from $u$ to $v$ or from $v$ to $u$.

Notice that a path $P$ might be of one of the following types:
\begin{description}
\item[\boldmath ($u \curvearrowright v$)] there is at least one visit and every visit is a pass from $u$ to $v$,
\item[($u \curvearrowleft v$)]  there is at least one visit and  every visit is a pass from $v$ to $u$,
\item[($u \leftrightarrows v$)] every visit is a pass and there is at least one pass in each direction,
\item[($u \looparrowleft v$)] none of the visits is a pass, there is a visit that starts in $u$ and ends in $u$, but no visit that starts in $v$ and ends in $v$,
\item[($u \looparrowright v$)] none of the visits is a pass, there is a visit that starts in $v$ and ends in $v$, but no visit that starts in $u$ and ends in $u$,
\item[($u \circlearrowleft v$)] none of the visits is a pass, there is a visit that starts in $u$ and ends in $u$, and a visit that that starts in $v$ and ends in $v$,
\item[$(u \cdots v$)] the path $P$ is not visited at all by $C^*$.
\end{description}
\begin{claim}[\appmark]\label{clm:visits}
 The types listed above are mutually exclusive and cover all possibilities.
\end{claim}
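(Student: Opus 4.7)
The plan is to split the proof into two parts: mutual exclusivity, which is by direct inspection, and coverage, which requires the optimality of~$C^*$ (both minimum cost and minimum number of edges among optima).

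For mutual exclusivity I would observe that the seven types disagree on at least one of three discrete axes: (a)~whether any visit of~$P$ exists, (b)~if so, whether \emph{every} visit is a pass or \emph{no} visit is a pass, and (c)~the specific direction pattern of passes (for types 1--3), respectively the specific loop pattern at $u$ or $v$ (for types 4--6). Any two of the seven conditions differ on at least one of (a)--(c), so at most one type can apply to a given $P$.

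For coverage I would first establish that every visit of~$P$ must start and end at $u$ or~$v$. This holds because any internal vertex~$w$ of~$P$ lies outside $X = R \cup D$, so $w$ is incident to no feedback edge; moreover, by the exhaustive application of the degree-one reduction rule every vertex has degree at least~$2$ in the underlying graph, and $w \notin D$ forces degree at most~$2$ in~$G''$. Hence both of $w$'s incident undirected edges lie on~$P$, so $C^*$ cannot leave~$P$ at~$w$. Consequently every visit is one of the following: a pass $u \to v$, a pass $v \to u$, a non-pass walk between the endpoints, a $u$-$u$ closed walk on~$P$, or a $v$-$v$ closed walk on~$P$.

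I would then rule out the remaining ``mixed'' cases with two reduction lemmas that both leverage the edge-minimality of~$C^*$. First, no visit of~$P$ is a non-pass walk between the endpoints: for such a visit~$V$ (say from $u$ to $v$), a simple flow-balance argument shows that every directed arc $(p_i, p_{i+1})$ is used at least once by~$V$, so the pass $u \to v$ is a valid directed walk in~$G$ whose arc-usage is componentwise dominated by that of~$V$; replacing~$V$ by this pass inside~$C^*$ yields a strictly shorter closed walk that still visits every internal vertex of~$P$ and respects all capacities, contradicting edge-minimality. Second, $P$ cannot simultaneously contain a pass visit~$V_1$ and a loop visit~$V_2$: since~$V_2$ enters and leaves~$P$ at the same endpoint, deleting~$V_2$ from~$C^*$ and splicing the surrounding non-$P$ edges together yields a valid closed directed walk, which still covers every internal vertex of~$P$ through~$V_1$ and leaves every vertex outside~$P$ untouched, again contradicting edge-minimality.

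With these two reductions, every visit of~$P$ in~$C^*$ is either a pass or a loop, and the two kinds do not coexist. The remaining case analysis is routine: no visits yields $u \cdots v$; only passes yields one of $u \curvearrowright v$, $u \curvearrowleft v$, $u \leftrightarrows v$ depending on which pass-directions occur; only loops yields one of $u \lcirclearrowleft v$, $u \rcirclearrowright v$, $u \circlearrowleft v$ depending on which endpoints host loops. I expect the main obstacle to be the careful verification of the two reduction lemmas, in particular checking that the local surgery on~$C^*$ preserves validity of the closed walk, respects all capacity constraints, and retains waypoint coverage --- all of which follow from the surgery being confined to~$P$ and using only arcs already present, with multiplicity at most that used in the visit being modified.
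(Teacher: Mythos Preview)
Your proposal is correct and follows essentially the same approach as the paper: you show that a non-pass visit with distinct endpoints can be trimmed to a pass (the paper phrases this as ``one can easily remove some edges from this visit, trimming it to a pass''), and that a loop visit coexisting with a pass can be deleted outright, both contradicting the edge-minimality of~$C^*$. Your treatment is more explicit---you spell out why visits must start and end at $u$ or $v$, give the flow-balance justification for why the forward arcs of the pass exist in~$G$, and separately argue mutual exclusivity---but the underlying two-step reduction is identical to the paper's.
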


\toappendix{ 
\begin{claimproof}[Proof of \Cref{clm:visits}]
First, suppose that there is a visit that starts at, say, $u$, ends in $v$, and traverses some edge more than once (i.e., is not a pass).
In such a case one can easily remove some edges from this visit, trimming it to a pass.
Note that the obtained walk visits the same vertices as $C^*$ and is shorter, contradicting the minimality of $C^*$.
Consequently, every non-pass visit starts and end in the same vertex, either $u$ or $v$.

Second, suppose that there is at least one visit that is a pass and at least one non-pass visit.
Again, we can remove all edges corresponding to the non-pass visit, obtaining a better solution -- it is still a closed walk, as the removed visit starts and end in the same vertex.
This justifies the claim.
\end{claimproof}
}

Let us make some further observations about possible visits in $P$ by $C^*$.

If no internal vertex of $P$ is in $W$, then $P$ is of type $u \curvearrowleft v$, $u \curvearrowright v$, $u \leftrightarrows v$, or $u \cdots v$. Indeed, any non-pass visit can be removed from $C^*$, contradicting the minimality.

Now consider a path $P$ of type $u \looparrowleft v$ (the case $u \looparrowright v$ is symmetric).
We may safely assume that $P$ is visited exactly once by $C^*$, as one of the visits (the one reaching furthest towards $v$) covers all vertices covered by all the visits altogether. Further, we may assume that this single visit starts in $u$, reaches the internal vertex of $P$ that belongs to $W$ and is closest to $v$, and then goes back to $u$.
Thus, if $P$ is of type $u \looparrowleft v$ or $u \looparrowright v$, then the visit in $P$ contributes to the total cost of the solution by a fixed amount that can be computed in advance.
We denote this cost by $\mathsf{cost}(P,u \looparrowleft v)$ or $\mathsf{cost}(P,v \looparrowright u)$, respectively. %

Now consider a path $P$ of type $u \circlearrowleft v$.
Similarly as before, we can assume that there are at least two vertices from $W$ among internal vertices of $P$, as otherwise we can shorten $C^*$ by removing one of the visits.
Furthermore, we can assume that $P$ is visited exactly twice: one visit starts at, $u$ reaches some vertex $x_u \in W$ and goes back to $u$, and the other visit starts at $v$, reaches some vertex $x_v \in W$, and goes back to $v$. In particular, no edge between $x_u$ and $x_v$ is traversed by $C^*$.
As before, in polynomial time we can find an optimal pair of visits that cover all vertices from $W$ that are on $P$, and compute its contribution to the cost of the solution (here note that we need to be careful as some edges might be only available in one direction). We denote this cost by $\mathsf{cost}(P,u \circlearrowleft v)$.

Finally, each pass in a fixed direction has a fixed cost, equal to the sum of weights of edges in $G$ corresponding to the edges of $P$, where we only choose edges in the correct direction.
We denote it by $\mathsf{cost}(P,u \curvearrowright v)$ or $\mathsf{cost}(P,v \curvearrowright u)$, depending on the direction.
We also define $\mathsf{capacity}(P,u \curvearrowright v)$ (and, symmetrically, $\mathsf{capacity}(P,u \curvearrowleft v)$) to the minimum capacity of an edge of $G$ corresponding to an edge of $P$, in the direction from $u$ to $v$ (from $v$ to $u$), resp.

Note that due to the directions and capacities of edges, some types might not be available for $P$.
We will define the cost associated with such an unavailable type to be $\infty$.

\proofsubparagraph{Building compressed instances.}

Now we perform the branching. For each path $P$ in  $\mathcal{P}$ with endvertices $u,v$, we guess its type $u\sim v$, where $\sim \in \{\curvearrowleft, \curvearrowright, \leftrightarrows, \looparrowleft, \circlearrowleft, \looparrowright, \cdots\}$. This results in $7^{|\mathcal{P}|} = 2^{O(k)}$ branches.

Consider one such branch. Now we build an instance $(G', W', \wFn', \cFn', b')$ of \DWRPshort.
Initialize $b'= b$.
We start with including all vertices from $X$ to $G'$, and all vertices from $X \cap W$ to $W'$.
Now let $P$ be a path in $\mathcal{P}$ with endvertices $u,v \in X$. We proceed, depending on $\sim$.
If the type of $P$ is $u \cdots v$, we do nothing.
If the type of $P$ is $u \looparrowleft v$ ($\looparrowright$), we add $u$ ($v$) to $W'$, and decrease $b'$ by $\mathsf{cost}(P,u \looparrowleft v)$ ($\mathsf{cost}(P,v \looparrowright u)$), respectively.
If the type of $P$ is $u \circlearrowleft v$, we add both $u$ and $v$ to $W'$, and decrease $b'$ by $\mathsf{cost}(P,u \circlearrowleft v)$.

Now consider the case that the type of $P$ is $u \curvearrowright v$ (the case $u \curvearrowleft v$ is symmetric).
We add a new vertex $x^P_{u \curvearrowright v}$ and edges $(u,x^P_{u \curvearrowright v})$ and $(x^P_{u \curvearrowright v},v)$. Vertices $u, v$, and $x^P_{u \curvearrowright v}$ are all included into $W'$.
We define $\wFn'((u,x^P_{u \curvearrowright v})) = \mathsf{cost}(P,u \curvearrowright v)$ and $\wFn'((x^P_{u \curvearrowright v},v))=0$. Furthermore, we set $\cFn'((u,x^P_{u \curvearrowright v})) = \cFn'((x^P_{u \curvearrowright v},v))  =  \mathsf{capacity}(P,u \curvearrowright v)$.
If the type of $P$ is $u \leftrightarrows$, we proceed as if was of both types $u \curvearrowright v$ and $u \curvearrowleft v$.

This concludes the construction of the instance. If $b' < 0$, we reject it immediately.
Clearly, the number of vertices of $G'$ is at most $|X| + 2|\mathcal{P}| \leq 14k$,
and the number of edges is at most $4|\mathcal{P}| \leq 20k$.
The discussion about correctness is postponed to appendix.
\toappendix{

\subparagraph*{Correctness of the Algorithm from \cref{thm:fes}.}
To prove correctness, we need to argue that $(G, \WP, \wFn, \cFn, \budget)$ is a yes-instance of \DWRPshort if and only if there exists a branch for which we created a yes-instance $(G', \WP', \wFn', \cFn', \budget')$ of \DWRPshort.

Suppose first that  $(G, \WP, \wFn, \cFn, \budget)$ is a yes-instance and let $C^*$ be an optimum solution of cost at most $\budget$. As argued before, we may assume that the type of each path $P \in \mathcal{P}$ with respect to $C^*$ is as listed above.
Consider the branch in which all types are guessed correctly, according to $C^*$ and the corresponding instance  $(G', \WP', \wFn', \cFn', \budget')$.
Define a closed walk $C'$ in $G'$ as follows.
Clearly, $C^*$ can be decomposed into visits in paths from $\mathcal{P}$.
Each visit that is not a pass is removed from $C^*$.
Each visit in a path $P$ that is a pass, say from $u$ to $v$, is replaced by two consecutive edges $(u,x^P_{u \curvearrowright v}), (x^P_{u \curvearrowright v}, v)$ in $G'$. Call the obtained closed walk (in $G'$) $C'$.
It is straightforward to verify that $C'$ visits all vertices from $W'$, does not exceed edge capacities, and its total cost is at most $b'$. Thus, $(G',\wFn', \cFn', b')$ is a yes-instance of \DWRPshort.

Now suppose that there is a branch in which the corresponding instance $(G',\wFn', \cFn', b')$ is a yes-instance.
Let $C'$ be an optimum solution for this instance.
We modify is as follows.
For each subpath of the form $u,x^P_{u \curvearrowright v},v$ for some $P \in \mathcal{P}$, we replace it by traversing the path $P$ from $u$ to $v$. Note that as $x^P_{u \curvearrowright v} \in W'$, all edges of types
$u \curvearrowright v$, $u \curvearrowleft v$, $u \leftrightarrows v$ are correctly passed by the constructed walk.

Now consider a path $P$ from $u$ to $v$, of type $u \looparrowleft v$ (the cases $u \looparrowright v$ and $u \circlearrowleft v$ are treated analogously). Recall that $u \in W'$, so it appears somewhere in $C'$.
Immediately after an arbitrary occurrence of $u$, we include a precomputed optimal visit in $P$ that starts and ends at $u$, and visits all vertices of $W$ that appear as internal vertices of $P$.
Note that this increases the cost of the solution by $\mathsf{cost}(P,u \looparrowleft v)$.
Let us denote the walk obtained by all these modifications by $C^*$.
Again, it is straightforward to verify that $C^*$ is a feasible solution to the instance $(G, \WP, \wFn, \cFn, \budget)$ of \DWRPshort.}
\end{proof}

\section{\DWRPshort is FPT Parameterized by Vertex Integrity}\label{sec:vi}

\toappendix{
\section{Omitted Proofs from \cref{sec:vi}}
}
In this section we prove \cref{thm:vi}.

\thmvi*

\begin{proof}
An undirected graph $G$ with vertex integrity $k$ has \emph{a $k$-modulator} $M \subseteq V(G)$ of size at most~$k$ such that each connected component of $G - M$ is of size at most $k$. 
We assume that $G$ is given along with $M$ as it can be found in $k^{O(k)}n$ time~\cite{DrangeDH16,fellows1989immersion}.
Thus we assume that we are given an instance $(G,\WP,\wFn,\cFn,\budget)$ of \DWRPshort, with a set $M$ of size at most $k$, such that each weakly connected component of $G-M$ has size at most $k$. For brevity, we will speak about `components' instead of `weakly connected components.'

In what follows, we start with introducing some notions and establishing some properties of an optimum solution.
Then, after some branching, we build an instance of ILP that corresponds to a solution with the required properties.

\proofsubparagraph{Segments.}
A \emph{segment} is a walk in $G$ with at least two vertices, that starts and ends in a vertex of $M$ and all internal vertices are not in $M$.
We remark that there might be no internal vertices -- such a segment is just an arc contained in $M$. We call such a segment \emph{trivial}.
However, if a segment is non-trivial (i.e., not trivial), all its internal vertices are contained in a single component of $G-M$;
the segment is \emph{associated} with the component.

Note that the endpoints of a non-trivial segment need not to be distinct.
A segment~$s$ from $u$ to $v$ is \emph{minimal} if there is no segment from $u$ to $v$ that uses only the arcs of $s$,
contains all vertices of $s$ that are in $W$, and is shorter than $s$.

\proofsubparagraph{Segments in an Optimum Solution.}
Let $C^*$ be an optimum solution, i.e., a shortest (w.r.t. the sum of weights of edges) closed directed walk in $G$ that visits every vertex of $W$.
We treat $C^*$ as a sequence of vertices.
Clearly $C^*$ can be decomposed into segments (where two consecutive segments overlap on corresponding endpoints);
let $\mathcal{S}^*$ be the collection of these segments. We emphasize that $\mathcal{S}^*$ is a multiset,
as some segments might be used more than once.
We observe that we can assume that each segment in $\mathcal{S}^*$ is minimal,
for otherwise we can replace it with a shorter one with the same endpoints, still covering all of $W$.

Some segments in $\mathcal{S}^*$ might not be needed to cover $W$, but are required to keep the connectivity of the solution.
We call them \emph{connectors}.
We can identify connectors with the following procedure. We initialize $\mathcal{S}^\circ = \emptyset$ and proceed iteratively as follows.
If there is a segment $s$ in $\mathcal{S}^* \setminus \mathcal{S}^\circ$ such that the segments $\mathcal{S}^* \setminus \mathcal{S}^\circ \setminus \{s\}$ cover all vertices of $W$, we include $s$ into $\mathcal{S}^\circ$.
We emphasize that here we work with multisets, so the `$\setminus$' operation respects the multiplicity of elements.
The set $\mathcal{S}^\circ$ contains connectors.
Note that $\mathcal{S}^\circ$ is not defined uniquely and it might depend on the ordering in which segments were selected.

Consider a connector, i.e., a segment $s$ from $u$ to $v$ that is in $\mathcal{S}^\circ$.
We observe that we can safely assume that it is a $u$-$v$-path, as otherwise we can replace it with such a path obtaining a solution of no larger length.
We call such segments \emph{simple}.

\proofsubparagraph{Traversals.}
Now let us consider a component $C$ of $G - M$.
A \emph{traversal} of $C$ is a collection $\mathcal{S}_C$ of segments associated with $C$ that:
\begin{itemize}
	\item together cover every vertex from $W \cap V(C)$,
	\item for every $s \in \mathcal{S}_C$ there is a vertex in $W \cap V(C)$ not covered by segments in $\mathcal{S}_C \setminus \{s\}$,
	\item every segment in $\mathcal{S}_C$ is minimal,
	\item every edge $e$ is used at most $\cFn(e)$ times by $\mathcal{S}_C$.
\end{itemize}
Note that the segments from $\mathcal{S}^* \setminus \mathcal{S}^\circ$ associated with $C$ form a traversal.
Indeed, the first condition follows since $C^*$ covers all vertices from $W$ and we moved segments to $\mathcal{S}^\circ$ if they were not required to cover $W \cap V(C)$.
The second condition again follows from the definition of $\mathcal{S}^\circ$.
Finally, recall that we already established the third condition without loss of generality.

\begin{claim}[\appmark]\label{clm:traversals}
	Every traversal has at most $k(k+2)$ vertices.
	Moreover, for any component~$C$ of $G-M$,
	there are at most $k^{k(k+2)}$ traversals of $C$.
	Furthermore, each traversal uses each edge at most $k$ times.
\end{claim}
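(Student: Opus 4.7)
The plan is to prove the three statements of the claim in turn, each relying on the bounds $|V(C)|\le k$, $|M|\le k$, and the first three conditions in the traversal definition.

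For the vertex bound, I first argue $|\mathcal{S}_C|\le k$: by the second traversal condition, every segment $s\in\mathcal{S}_C$ has a \emph{private} waypoint in $W\cap V(C)$ that no other segment of $\mathcal{S}_C$ covers, so $|\mathcal{S}_C|\le|W\cap V(C)|\le|V(C)|\le k$. Next, every segment is a walk whose two endpoints lie in $M$ and whose internal vertices lie in the single component $C$ of $G-M$, giving $|V(s)|\le 2+|V(C)|\le k+2$ distinct vertices. Summing over the at most $k$ segments yields $\sum_{s\in\mathcal{S}_C}|V(s)|\le k(k+2)$, which I interpret as the first statement of the claim.

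For the count of traversals, I plan to encode each minimal segment by a short combinatorial description. Using minimality, a segment is essentially determined (up to a canonical choice) by its endpoints in $M$ together with the ordered sequence of internal vertices in $V(C)$ it visits, which can be written as a sequence of at most $k+2$ positions taken from $V(C)\cup M$. Bounding the number of choices per position by $k$ (accounting separately for endpoints in $M$ and internal vertices in $C$, each a set of size at most $k$) gives at most $k^{k+2}$ different segments. A traversal is a collection of at most $k$ such segments, so the total number of traversals is at most $(k^{k+2})^k=k^{k(k+2)}$.

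For the edge-usage bound, I plan a shortcut argument combined with the count $|\mathcal{S}_C|\le k$. The key observation is that in a minimal segment, if an arc were traversed multiple times, then between two consecutive traversals the sub-walk would have to contain a $W$-vertex appearing nowhere else in the segment — otherwise that sub-walk could be excised, producing a shorter segment from $u$ to $v$ that still uses only arcs of $s$ and covers all $W$-vertices of $s$, contradicting minimality. This limits how often each arc appears in a single segment; combined with $|\mathcal{S}_C|\le k$, it yields the bound of at most $k$ uses per edge across the whole traversal.

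The main obstacle I anticipate is making the canonical encoding in the second part fully rigorous: one must argue that a minimal segment of few distinct vertices is in fact short enough to admit a bounded encoding, which again rests on a careful shortcut argument (every repetition of a vertex or an arc in a minimal walk must be justified by some otherwise-uncovered $W$-vertex). The other steps are essentially bookkeeping on top of these structural observations.
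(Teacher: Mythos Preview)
Your argument for $|\mathcal{S}_C|\le k$ via private waypoints is correct and matches the paper. The gap is in what you bound next: you show each segment has at most $k+2$ \emph{distinct} vertices, but a segment is a walk and may revisit vertices, so its length can be much larger than $k+2$. Your sum $\sum_s |V(s)|\le k(k+2)$ therefore bounds the wrong quantity. The claim---and, more importantly, the two downstream statements---needs a bound on the \emph{total length} of the walks in the traversal, since that is what governs both the encoding length for counting traversals and the multiplicity with which any fixed edge can be used. You flag this yourself as the ``main obstacle,'' but never resolve it; without it, your encoding in Part~2 is underspecified (a sequence of $\le k+2$ positions does not determine a walk that may be longer), and your Part~3 argument stops short of an explicit per-segment edge bound that would yield $k$ after summing over $|\mathcal{S}_C|\le k$ segments.

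The paper supplies exactly the missing idea: rather than reasoning about segments as walks, it decomposes the whole traversal into \emph{simple} directed paths (both endpoints in $M$, internal vertices in $C$, hence at most $k+2$ vertices each) and \emph{simple} cycles (at most one vertex in $M$, hence at most $k+1$ vertices). The minimality of each segment together with the private-waypoint condition is then used to bound the number of such simple pieces by $|W\cap V(C)|\le k$. This single step simultaneously yields all three parts: the vertex bound $k(k+2)$ (at most $k$ pieces of size at most $k+2$), the edge bound (each simple piece uses any edge at most once, so at most $k$ uses in total), and the count (encode a traversal by listing its $\le k$ pieces as vertex sequences). Your shortcut reasoning in Part~3 is in the right spirit, but the decomposition into simple pieces is what makes the bounds clean and uniform across all three statements.
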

\toappendix{
\begin{claimproof}[Proof of \cref{clm:traversals}.]
	Note that each traversal can be decomposed into paths with both endvertices in $M$ and internal vertices in $C$
	and cycles with at most one vertex in $M$ and the remaining ones in~$C$.

	Note that each such path has at most $k+2$ vertices	and each such cycle has at most $k+1$ vertices.
	Moreover, by the second and third condition in the definition of a traversal,
	the total number of paths and cycles does not exceed $|W \cap V(C)| \leq |V(C)| \leq k$.
	Consequently, the total number of vertices of a traversal is at most $k(k+2)$ and each edge is used at most $k$ times.

	The bound on the number of traversals follows from the fact that $C$ and $M$ both have at most $k$ vertices.
\end{claimproof}
}

\proofsubparagraph{Skeleton of the Solution.}
Let us define two directed graphs $H_0$ and $H_1$ as follows.
Both have the same vertex set, $M$.

For distinct $u,v \in M$, the arc $uv$ exists in $H_0$ if such an arc appears in $C^*$ (i.e., $C^*$ contains $u$ immediately followed by $v$).
On the other hand, the arc $uv$ exists in $H_1$ if there is a non-trivial segment in $\mathcal{S}^*$ starting in $u$ and ending in $v$.
We emphasize that neither $H_0$ nor $H_1$ has parallel arcs nor loops.

Let $H$ be the directed graph with vertex set $M$ and the arc set being the union of arc sets of $H_0$ and $H_1$ (again, with no parallel arcs).
Note that $H$ has one strongly connected component $H'$ containing all vertices from $W \cap M$, and vertices that are not in $H'$ are isolated in $H$. These vertices do not appear on $C^*$ at all.

For $u,v \in M$, we say that a segment from $u$ to $v$  is \emph{bad} (for a particular choice of $H_0, H_1$) if $u$ or $v$ are not in $H'$,
or the arc $uv$ does not appear in $H$. A traversal is \emph{bad} if it contains a bad segment.

We say that a collection $\mathcal{S}$ of segments is compatible with $H_0$ and $H_1$ if, for every arc $uv$ of $H_0$,
$\mathcal{S}$ contains at least one copy of the trivial segment $uv$, and for every arc $uv$ of $H_1$, $\mathcal{S}$ contains at least one non-trivial segment starting in $u$ and ending in $v$.

\proofsubparagraph{The Algorithm.} We proceed to the description of the algorithm.
We exhaustively guess the graphs $H_0$ and $H_1$ described above, considering all possible graphs on the vertex set $M$, so that the union of these two graphs has one strongly connected component and the remaining vertices isolated.
This gives us $2^{O(k^2)}$ possible guesses.
We are going to look for a set of segments that are compatible with $H_0$ and $H_1$, and the structure of $H_0,H_1$ will be used to ensure that the solution is connected.

Suppose we are in a branch corresponding to one guess.
We will build an instance of generalized $N$-fold ILP~\cite{EisenbrandHK18,KouteckyLO18}.
In our case, we have a brick corresponding to each component~$C$ of $G-M$ and some further bricks that do not have any local constraints.

For each traversal of $C$ that is not bad, we introduce one binary variable; recall that there are at most $k^{k(k+2)}$ such variables.
The value of this variable indicates whether the vertices from $W \cap V(C)$ are covered by a particular traversal.
Furthermore, we introduce a local constraint to ensure that exactly one traversal is chosen.

For each simple segment associated with $C$ which is a $u$-$v$-path for some $u,v \in M$, we introduce a variable with possible values from $0$ to $n$.
The intended meaning of this variable is the number of times this segment was used as a connector from $u$ to $v$ by the solution.
There are at most $k^{k+2}$ such variables.

For each edge $e$ with at least one endpoint in $C$ we introduce a (local) constraint ensuring that this edge is used by the solution at most as many times as the capacity permits.
Namely, we sum over all simple segments and over all traversals that use edge $e$ (multiplying by how many times they use $e$) and require that this sum is at most $\cFn(e)$.

Now, let us move to variables not related to components.
For each ordered pair $u,v$ of distinct vertices, we introduce a variable with possible values from $0$ to $\cFn((u,v))$.
The intended meaning of this variable is the number of times the trivial segment is used as a connector from $u$ to $v$ by the solution.
This gives us at most $k^2$ variables.
We can put each of this variable to a brick for itself.
As the edges within $M$ are not used by any traversals or non-trivial segments, the box constraints ensure that the capacities are adhered for these arcs, i.e., there will be no local constraints for these bricks.

Next, we add global constraints that ensure that the selected set of segments is compatible with $H_0$ and $H_1$.
More precisely, for each arc $uv$ of $H_0$ ($H_1$), we add a constraint that ensures that we choose the trivial segment $uv$ (at least one non-trivial segment that starts in $u$ and ends in $v$), respectively. We sum it over all selected traversals and connectors. These constraints are responsible for making the solution connected.

Finally, we need to make sure that for each $v \in M$, the number of segments that start in~$v$ is equal to the number of segments that end in $v$. Again, we can force it by summing the number of particular segments, including trivial, connectors, and ones belonging to traversals.
These are also global constraints.

We minimize the total length (i.e., the sum of weights of edges) in all selected segments.
Note that for some guesses we might get an infeasible instance of ILP -- this means that, e.g., the segments that are required by $H_0$ and $H_1$ do not exist in $G$ or all traversals of some component are bad.
If all branches give infeasible instances, we return that there is no solution.
However, every solution found (in any branch) is feasible and we can return the shortest one.
From the description above it follows that if an optimum solution $C^*$ exist, we will find it (or, more precisely, a solution of the same total length) in the branch corresponding to the actual $H_0$ and $H_1$.

\proofsubparagraph{Solving $N$-fold ILP.}
Recall that generalized $N$-fold ILP can be solved (see \cite[Cor. 97]{EisenbrandHKKLO19} and also \cite{EisenbrandHK18,KouteckyLO18}) in 
$(rs\Delta)^{O(r^2s +rs^2)}d \log d \log D\log val_{\max}$ time, where
 \begin{itemize}
   \item $r$ is the number of global constraints (at most $k(k-1)+k=k^2$ in our case),
   \item $s$ is the number of local constraints for each brick ($O(k^2)$ in our case),
   \item $\Delta$ is the largest coefficient in the constraints ($k$ in our case),
   \item $d$ is the total number of variables (at most $k^2 + n \cdot k^{k(k+2)}+n \cdot k^{k+2}$ in our case)%
   \item $D$ is the maximum range of a variable (at most $n$ in our case),
   \item and $val_{\max}$ is the maximum value of the objective function (by \cref{obs:n_traversals}, in our case it is at most $n^2 \cdot U$, where $U$ is the maximum weight of an arc in $G$).
 \end{itemize}
Including the branching to guess $H_0$ and $H_1$, we obtain the final running time bound of
$k^{O(k^6)} \cdot n \cdot \log^2 n (\log n + \log U)$.
This completes the proof.
\end{proof}

\toappendix{
\section{\DWRPshort is XP with Respect to Treewidth}

In this section we present our algorithm parameterized by treewidth. As the algorithm uses standard techniques, it is rather similar to the algorithm of Mannens et al.~\cite{MannensNSS21}.
The following theorem is a more detailed version of \cref{thm:tw}.

\begin{theorem}
\label{thm:FPT_wrt_tw_no_visits}
 There is an algorithm that given an instance $(G =(V,E), \WP, \wFn, \cFn, \budget)$ of \DWRP, a nice tree decomposition $(T, \beta)$ of $G$ of width~$t$ and an integer $\nu$, computes in $O(((t+1)(\nu+1)^2)^{2(t+1)} \cdot t^2 \cdot n)$ (FPT wrt $t+\nu$) time the length of a shortest closed walk~$C$ that traverses each vertex in $\WP$ at least once and each vertex in $V$ at most $\nu$ times and such that for each edge $e$ the number of times $C$ traverses $e$ is at most $\cFn(e)$ if there is any walk satisfying these conditions.
\end{theorem}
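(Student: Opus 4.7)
The plan is to carry out a standard bottom-up dynamic programming algorithm on the given nice tree decomposition $(T,\beta,r)$ (the approach is similar in spirit to the treewidth algorithm for \textsc{Connected Flow} of Mannens et al.~\cite{MannensNSS21}). For each node $x$ I maintain a table $A_x[\sigma]$ storing the minimum total weight of a multiset of arcs drawn from $E_x$ that realises the local behaviour $\sigma$ at $\beta(x)$ and that, restricted to $G_x$, respects all capacities on arcs in $E_x$, satisfies $d^{\mathrm{in}}(v) = d^{\mathrm{out}}(v) \leq \nu$ for every already forgotten vertex $v \in \gamma(x) \setminus \beta(x)$, and covers every already forgotten waypoint at least once.

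A state $\sigma$ records, for each $v \in \beta(x)$, its current in- and out-degree $d_v^{\mathrm{in}}, d_v^{\mathrm{out}} \in \{0,\ldots,\nu\}$ in the partial multiset, together with a bit flagging whether $v$ (if a waypoint) has already been visited. In addition, $\sigma$ stores a partition of $\beta(x)$ whose blocks are the weakly connected components of the partial solution restricted to vertices of positive degree, each block carrying a bit that records whether a forgotten waypoint already lies in that component. Since a multiset of arcs with balanced in- and out-degrees at every vertex admits an Eulerian closed walk iff it is weakly connected, this partition-with-tags is precisely the information needed to enforce weak connectivity of the final solution.

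Transitions follow the standard script for the five node types. A leaf has the unique empty state. Introducing a vertex adds it as a singleton block with zero degrees and unvisited. Introducing an arc $(u,v)$ branches over the multiplicity $j \in \{0,\ldots,\min(\cFn((u,v)),\nu)\}$ used by the solution, updates $d_u^{\mathrm{out}}$, $d_v^{\mathrm{in}}$ and visit bits, and merges the blocks of $u$ and $v$ together with their tag bits when $j \geq 1$. Forgetting $v$ keeps only states in which $d_v^{\mathrm{in}} = d_v^{\mathrm{out}}$, $v$'s waypoint requirement (if any) is met, and either $v$'s block contains another bag vertex or that block has its waypoint-tag bit equal to $0$ and no arcs incident to it in $E_x$; this last clause is exactly what prevents a non-trivial sub-walk that has absorbed a waypoint from silently disconnecting from the rest. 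At a join, compatible pairs of child states are combined by summing degrees (still capped at $\nu$), OR-ing visit bits, and taking the join of the two partitions via transitive closure of the union of their block relations (with OR-ing the tag bits of merged blocks). The final answer is the single value stored at the root, whose bag is empty.

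For each bag, the per-vertex information (degrees and visit bit) gives $O((\nu+1)^2)$ options; the partitions of $\beta(x)$ contribute at most $(t+1)^{t+1}$ further options, and the block-tags another $2^{t+1}$. Thus the number of states per bag is at most $((t+1)(\nu+1)^2)^{O(t+1)}$. The join node is the dominant operation, taking time quadratic in this number, which combined with the $O(t^2 n)$ node count of a nice decomposition yields the claimed running time $O(((t+1)(\nu+1)^2)^{2(t+1)} \cdot t^2 \cdot n)$. The main obstacle is the connectivity bookkeeping: an Eulerian multiset of arcs can easily be disconnected, so every forget node must rule out states in which the forgotten vertices already carry a non-trivial sub-walk containing a waypoint but separated from the bag; the partition-with-tags is precisely the data structure that makes this check local.
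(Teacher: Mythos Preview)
Your proposal is correct and takes essentially the same approach as the paper: bottom-up DP on the nice tree decomposition with states recording, for each bag vertex, its in- and out-degree in $\{0,\ldots,\nu\}$ together with the connectivity partition of the bag, and with the join node dominating the running time. The paper's only twist is to keep one fixed waypoint $w_0$ in the root bag throughout; this makes your per-vertex visit bits and per-block waypoint-tag bits unnecessary (visit status is read off the degrees, and connectivity is anchored to $w_0$) and cleanly handles the final read-off, which your forget rule as stated would reject at the empty root since the sole remaining block carries a waypoint tag and nonzero arcs.
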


\begin{corollary}\label{cor:XP_tw}
 \DWRP is in \XP with respect to treewidth and \FPT with respect to treewidth combined with the number of waypoints.
\end{corollary}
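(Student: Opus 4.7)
The plan is to run a bottom-up dynamic programming over the given nice tree decomposition $(T,\beta,r)$ of width~$t$. By \Cref{lem:cycle_decomposition}, every feasible closed walk corresponds to a multiset $H$ of arcs such that the multiplicity of each $e$ in $H$ is at most $\cFn(e)$, the underlying multigraph of $H$ is weakly connected, $H$ visits every $v \in W$, and $d_H^+(v)=d_H^-(v)\le\nu$ for all $v$; the length of the walk equals $\sum_e \wFn(e)\cdot \mathrm{mult}_H(e)$. I would therefore compute the minimum weight of such an $H$.

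\textbf{State and transitions.} For each tree node~$x$ and each ``interface state'' $\sigma$ on $\beta(x)$, the DP stores $T[x,\sigma]$, the minimum weight of a sub-multiset $H_x$ of arcs from $E_x$ that realises $\sigma$. The state $\sigma$ records, for each $v\in\beta(x)$, the values $i_v,o_v\in\{0,\dots,\nu\}$ (the in- and out-degree of $v$ in $H_x$), together with a partition of $\beta(x)$ that reflects the weakly connected components of $H_x$ (each $v$ with $i_v=o_v=0$ sitting in a singleton block). Leaf nodes admit only the empty state, of cost~$0$. For an introduce-vertex node I would extend $\sigma$ by putting $i_v=o_v=0$ and adding the singleton block $\{v\}$. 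For an introduce-edge node $e=(u,v)$ I would branch over the multiplicity $k\in\{0,\dots,\min(\cFn(e),\nu)\}$, increment $o_u$ and $i_v$ by $k$, add $k\cdot\wFn(e)$ to the cost, and, when $k\ge 1$, merge the blocks of $u$ and $v$. At a forget node for~$v$ I would require $i_v=o_v$, additionally $i_v\ge 1$ whenever $v\in W$, and, when $i_v>0$, that the block of~$v$ contains some other vertex of positive degree; then I would project the partition to $\beta(x)\setminus\{v\}$. A join node combines two compatible child states by summing the pairs $(i_v,o_v)$ coordinate-wise (rejecting the combination if any sum exceeds $\nu$) and taking the transitive closure of the union of the two partitions, while adding the child costs. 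The answer is the minimum $T[r,\sigma]$ at the root, taken over states (with $\beta(r)=\emptyset$) whose bookkeeping certifies that all used arcs ended up in a single component.

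\textbf{Main obstacle and complexity.} The delicate point, as is familiar for connectivity-sensitive DPs on tree decompositions, will be the bookkeeping of connectivity at forget nodes: once $v$ is forgotten no further incident arc can be added, so the forget rule has to certify that $v$'s component does not become ``orphaned''. The standard remedy is to distinguish one ``active'' block representing the unique eventual component, and to reject any state in which a forgotten positive-degree vertex would leave such an orphan block behind. The number of states per bag is bounded by $((\nu+1)^2)^{t+1}$ degree-pair options times at most $(t+1)^{t+1}$ partitions of $\beta(x)$, which is safely within the claimed $((t+1)(\nu+1)^2)^{2(t+1)}$; the join nodes dominate the per-node work at $O(t^2)$ per pair of combined states, and summing over the $O(n)$ bags of the (niceified) decomposition produces the stated running time. \Cref{cor:XP_tw} then follows by plugging $\nu=n$ for the \XP{} claim, and $\nu=|W|$ (valid by \Cref{obs:n_traversals}) for the \FPT{} claim in treewidth plus $|W|$.
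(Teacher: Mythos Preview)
Your proposal is correct and follows essentially the same approach as the paper: a standard dynamic program over a nice tree decomposition whose state records, for each bag vertex, its in/out multiplicities in $\{0,\dots,\nu\}$ together with a partition encoding the weak connectivity pattern, and then invokes \Cref{obs:n_traversals} to set $\nu\le |W|$ (hence $\nu\le n$) for the two complexity claims. The only cosmetic difference is that the paper anchors connectivity by keeping a fixed waypoint $w_0$ in every bag up to the root, whereas you handle it via the usual ``no orphaned block'' check at forget nodes; both are standard and yield the same bounds.
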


\begin{proof}[Proof of \Cref{cor:XP_tw}]
 Follows directly from the theorem and \cref{obs:n_traversals}.
\end{proof}

\begin{proof}[Proof of \Cref{thm:FPT_wrt_tw_no_visits}]
 We use a dynamic programming over the given decomposition $(T,\beta)$. Let $r \in V(T)$ be the root of the decomposition. Let $w_0 \in \WP$ be an arbitrary waypoint. Remove the node where $w_0$ is forgotten from the decomposition and add $w_0$ to all the ancestors of that bag. With a suitable choice of the root while turning the tree decomposition into a nice one, this could be done without increasing the width of the decomposition.

 We create a table $A$ indexed by nodes $x \in V(T)$, partitions $\mathcal{P}$ of $\beta(x)$, and pairs of functions $in, out \colon \beta(x) \to \{0, \ldots, \nu\}$.

 We call a sequence $\mathcal{C} = C_1, \ldots, C_q$ of walks \emph{a partial solution} at $x$ if
 \begin{enumerate}[i)]
 \item each $C_i$ is a non-trivial walk in $G_x$ that starts and ends in a vertex of $\beta(x)$,
 \item each vertex in $\WP \cap (\gamma(x) \setminus \beta(x))$ is traversed by at least one of the walks,
 \item no vertex is traversed more than $\nu$ times (in particular, no vertex has in-degree or out-degree more than $\nu$ with respect to the walks), summing over all the walks,
 \item and each edge $e$ is traversed at most $\cFn(e)$ times, summing over all the walks.
 \end{enumerate}
 The \emph{length} of a partial solution is the sum of the length of its walks.

 A partial solution $\mathcal{C}$ is \emph{compatible with} $(\mathcal{P}, in, out)$ at $x$, if for each $v \in \beta(x)$ vertex $v$ is entered $in(v)$ times and exited $out(v)$ times summing over all walks in $\mathcal{C}$ and moreover the following holds.
 Let $H$ be the subgraph of $G_x$ formed by the edges used by the walks in $\mathcal{C}$ and $H'$ be the underlying undirected graph of $H$.
 Then for each non-trivial connected component $K$ of $H'$ we have that $K \cap \beta(x) \neq \emptyset$ and $K \cap \beta(x) \in \mathcal{P}$ and for each $v \in \beta(x)$ isolated in $H'$ we have $\{v\} \in \mathcal{P}$.

 For each $x \in V(T)$, partition $\mathcal{P}$ of $\beta(x)$, and a pair of functions $in, out \colon \beta(x) \to \{0, \ldots, \nu\}$ we simply store in $A[x, \mathcal{P}, in, out]$ the length of a shortest partial solution compatible with $(\mathcal{P}, in, out)$ at $x$, if there is any, or $+\infty$ otherwise.

We compute the values of $A[x, \mathcal{P}, in, out]$ in a bottom-up manner distinguishing the type of the node $x$.

\proofsubparagraph{Leaf Node} If $x$ is a leaf node, then, as $\beta(x)=\emptyset$, the only possible partition is $\emptyset$ and the only possible function $\beta(x) \to  \{0, \ldots, \nu\}$ is also $\emptyset$. We let $A[x, \emptyset, \emptyset,\emptyset]=0$.

\proofsubparagraph{Introduce Vertex Node} If $x$ is a node introducing vertex $v$ and $y$ is the only child of $x$, then $v$ is isolated in $G_x$. Hence, there is no way to enter it or leave it. Thus, if $in(v) \neq 0$ or $out(v)\neq 0$, or $\{v\} \notin \mathcal{P}$, then we let $A[x,\mathcal{P}, in, out] = + \infty$. Otherwise, we let $\mathcal{P}' = \mathcal{P} \setminus \{\{v\}\}$, and $in'$ and $out'$ be $in$ and $out$ restricted to $\beta(y)=\beta(x) \setminus \{v\}$, respectively. We let $A[x, \mathcal{P}, in, out]=A[y, \mathcal{P}', in', out']$.

\proofsubparagraph{Introduce Edge Node} If $x$ is a node introducing an edge $\{u,v\}$ and $y$ is the only child of $x$, then we have to guess how many times was the edge $(u,v)$ and how many times the edge $(v,u)$ is used by the solution.
Let $P_u \in \mathcal{P}$ be such that $u \in P_u$.
If $v \notin P_u$, then we let $A[x, \mathcal{P}, in, out]=A[y, \mathcal{P}, in, out]$.

Otherwise, let $a_{uv} =0$ if the edge $(u,v)$ is not in $G$ and let $a_{uv} = \min\{\cFn((u,v)), out(u), in (v)\}$ otherwise. Similarly, let $a_{vu}=0$ if the edge $(v,u)$ is not part of $G$ and let $a_{vu} = \min\{\cFn((v,u)), out(v), in (u)\}$ otherwise.
Let
\begin{multline*}
A[x, \mathcal{P}, in, out] = \min \{\min_{\substack{0 \le b_{uv} \le a_{uv}\\ 0 \le b_{vu} \le a_{vu}}} (A[y, \mathcal{P}, in', out'] + b_{uv} \cdot \wFn((u,v)) + b_{vu} \cdot \wFn((v,u))),\\
 \min_{\substack{0 \le b_{uv} \le a_{uv}\\ 0 \le b_{vu} \le a_{vu}\\ b_{uv}+b_{vu} \ge 1\\  \mathcal{P}'}} (A[y, \mathcal{P}', in', out'] + b_{uv} \cdot \wFn((u,v)) + b_{vu} \cdot \wFn((v,u))) \},
\end{multline*}

 where \[in'(h) = \begin{cases}
                 in(h) & h \notin \{u,v\}\\
                 in(v) - b_{uv} & h=v \\
                 in(u) - b_{vu} & h=u \\
                 \end{cases},\]
                 \[out'(h) = \begin{cases}
                 out(h) & h \notin \{u,v\}\\
                 out(u) - b_{uv} & h=u \\
                 out(v) - b_{vu} & h=v \\
                 \end{cases},\] and $\mathcal{P}'$ contains sets $P'_u$ and $P'_v$ such that $u \in P'_u$, $v \in P'_v$ and $\mathcal{P} = (\mathcal{P'} \setminus \{P'_u,P'_v\}) \cup \{P_u\}$ (possibly $P'_u=P'_v=P_u$).
\proofsubparagraph{Forget Node} If $x$ is a node forgetting vertex $v$ and $y$ is the only child of $x$, then we distinguish cases based on whether $v$ is a waypoint or not.
In either case, let $\mathcal{P}=\{P_1, \ldots, P_s\}$.
To assure that the final solution is connected, if $v$ is traversed, then $v$ must share a connected component with a vertex in $\beta(x)$.

If $v$ is a waypoint, then we need to make sure that it is traversed by the solution.
We let
\[A[x, \mathcal{P}, in, out] = \min_{\substack{1 \le i \le \nu\\ 1 \le j \le s}} A[y, \mathcal{P}_j, in_i, out_i],\]
where $\mathcal{P}_j = (\mathcal{P} \setminus \{P_j\}) \cup \{P_j\cup\{v\}\}$ (i.e., we add $v$ to the set $P_j$ of $\mathcal{P}$),
\[in_i(u)=\begin{cases}
           in(u) & u \in \beta(x)\\
           i & u=v
          \end{cases},
\]
and
\[
 out_i(u)=\begin{cases}
           out(u) & u \in \beta(x)\\
           i & u=v
          \end{cases}.
\]

If $v$ is not a waypoint, then it might not be part of the solution.
We let
\[A[x, \mathcal{P}, in, out] = \min\{A[y, \mathcal{P}\cup \{\{v\}\}, in_0, out_0], \min_{\substack{1 \le i \le \nu\\ 1 \le j \le s}} A[y, \mathcal{P}_j, in_i, out_i]\},
\]
where the meaning of $\mathcal{P}_j$, $in_i$, and $out_i$ is the same as in the previous case.

\proofsubparagraph{Join Node} Finally, let $x$ be a join node with two children $y$ and $z$.
We simply let \[A[x, \mathcal{P}, in, out] = \min_{\substack{\mathcal{P}_y, \mathcal{P}_z\\ in_y, in_z\\out_y,out_z}} A[y, \mathcal{P}_y, in_y, out_y] + A[z, \mathcal{P}_z, in_z, out_z],\]
where the minimum is taken over all pairs of $(\mathcal{P}_y, in_y, out_y)$ and $(\mathcal{P}_z, in_z, out_z)$ such that $\mathcal{P}$ is the finest common coarsening of $\mathcal{P}_y$ and $\mathcal{P}_z$ and for every $v \in \beta(x)=\beta(y)=\beta(z)$ we have $in(v)=in_y(v)+in_z(v)$ and $out(v)=out_y(v)+out_z(v)$.
Note that we can simply iterate over the tables of the children, gradually updating the table of $x$.

Once the computation is finished, the desired length of a closed walk satisfying the conditions is obtained as
$\min_{1 \le i\le \nu} A[r,\{\{w_0\}\}, w_0 \mapsto i, w_0 \mapsto i]$.

\proofsubparagraph{Time Complexity}
Since each bag is of size at most $t+1$, there are at most $(t+1)^{(t+1)}$ partitions of it.\footnote{We could reduce the number of partitions considered to $2^{O(t)}$ by the now standard technique of representative sets~\cite{BodlaenderCKN15}, but we refrain from that here to simplify the presentation.}
Also there are at most $(\nu+1)^{(t+1)}$ functions $\beta(x) \to \{0, \ldots, \nu\}$.
Hence, for each $x$ the table $A[x, \bullet, \bullet, \bullet]$ has at most $((t+1)(\nu+1)^2)^{(t+1)}$ entries.

The computation in leaf nodes takes a constant time.
Similarly, the computation of each entry in an introduce vertex node takes a constant time.
In introduce edge node, for each entry we take a minimum over at most $(\nu+1)^2+(\nu+1)^2\cdot 2^{(t-1)}$ entries of the child.
In forget node, for each entry we take a minimum over at most $1+(\nu)t$ entries of the child.
Finally, in a join node, we make a $O(t)$ time computation for each pair of entries of the children.
Therefore, the computation in each node can be done in $O(((t+1)(\nu+1)^2)^{2(t+1)} \cdot t)$ time.
As we can assume that there are $O(tn)$ nodes in the given decomposition~\cite[Lemma 7.4, see also p. 168]{CyganFKLMPPS15}, the algorithm runs in $O(((t+1)(\nu+1)^2)^{2(t+1)} \cdot t^2 \cdot n)$ time.

\proofsubparagraph{Correctness}
We first prove that the number stored in the table satisfies the desired semantics, that is, $A[x, \mathcal{P}, in, out]$ is the length of a shortest partial solution compatible with $(\mathcal{P}, in, out)$ at $x$, if there is any, or $+\infty$ otherwise.

We start by proving that if $A[x, \mathcal{P}, in, out]$ is finite, then there is a compatible partial solution of length $A[x, \mathcal{P}, in, out]$.
We proceed by induction on the height of the subtree of the tree decomposition rooted at $x$.
For leaf nodes, the empty sequence (of walks) forms such a solution of length $0$, proving the claim.

Next we assume that the claim already holds for the children of node $x$ by induction hypothesis.
For introduce vertex node $x$ with child $y$, it is easy to show that the partial solution compatible with $(\mathcal{P}', in', out')$ at $y$ is also compatible with $(\mathcal{P}, in, out)$ at~$x$.

For introduce edge node $x$ with child $y$ and the edge $\{u,v\}$ being introduced, take the solution $\mathcal{C'}$ compatible with $(\mathcal{P}', in', out')$ at $y$, which exists by the induction hypothesis, and add to it $b_{uv}$ walks from $u$ to $v$ and $b_{vu}$ walks from $v$ to $u$, each consisting of a single arc, to produce a sequence $\mathcal{C}$.
It is straightforward to check that $\mathcal{C}$ is a partial solution compatible with $(\mathcal{P}, in, out)$ at~$x$.

For forget node, it is easy, if $A[x, \mathcal{P}, in, out] = A[y, \mathcal{P}\cup \{\{v\}\}, in_0, out_0]$---simply take the same partial solution.
If $A[x, \mathcal{P}, in, out] = A[y, \mathcal{P}_j, in_i, out_i]$ for some $i \in \{1, \ldots, \nu\}$ and $j \in \{1, \ldots, s\}$, then let $C_1, \ldots, C_q$ be the partial solution for $A[y, \mathcal{P}_j, in_i, out_i]$, which exists by the induction hypothesis.
Furthermore, assume that walks $C_1, \ldots, C_a$ are closed walks that start and end in~$v$, walks $C_{a+1}, \ldots C_b$ start in~$v$, but end in a different vertex of $\beta(x)$ and walks $C_{b+1}, \ldots, C_c$ start in a different vertex of $\beta(x)$ and end in~$v$ (possibly, $a=0$, $b=a$, or $c=b$) and no other walks start or end in~$v$.
Note that $b-a = c-b$.
For $d\in \{1, \ldots, b-a\}$ we connect $C_{b+d}$ with $C_{a+d}$ to make a walk that neither starts nor ends in~$v$.
If $a=0$, then we are done, as no walk starts or ends in $v$ and we can use the solution for $x$.
Otherwise, we connect all walks $C_1, \ldots, C_a$ into a single closed walk $C'$.
Note that $C'$ is now the only walk that starts or ends in~$v$.
As $P_j \neq \emptyset$ and $(P_j\cup\{v\}) \in \mathcal{P}_j$, either $C'$ contains a vertex $u \in \beta(x)$ different from $v$, or it shares a vertex $u$ with another walk $C''$ that starts and ends in a vertex of $\beta(x)$ different from~$v$.
As $C'$ is a closed walk, we can ``rotate'' it, to start and end in $u$.
In the former case we are done, the walk now ends in starts and ends in a vertex of $\beta(x)$ different from~$v$.
In the latter case, we break $C''$ at $u$ and insert $C'$ into it.
Again, after that, no walk starts or ends in $v$ and we can use this solution for $x$.
As we did not change how many times edges are traversed, the other conditions are easy to verify.

Finally, for the join node, it is enough to take the union of the partial solutions for the two children.
Since each edge is introduced exactly once, no edge can appear both in $G_y$ and $G_z$ and the capacities are adhered.
Also each vertex of $\gamma(x) \setminus \beta(x)$ is traversed by at most one of the solutions and hence at most $\nu$ times and each vertex of $\WP \cap (\gamma(x) \setminus \beta(x))$ is traversed by one of the solutions.

For the converse direction, suppose that there is partial solution $\mathcal{C}=\{C_1, \ldots, C_q\}$ compatible with $(\mathcal{P}, in, out)$ at $x$ of length $h$.
We want to show that $A[x, \mathcal{P}, in, out] \le h$.
We again proceed by induction on the height of the subtree of the tree decomposition rooted at $x$.
For leaves this is clear, as the only solution is $\emptyset$ with length $0$.

Next we assume that the claim already holds for the children of the node $x$ by induction hypothesis.
For introduce vertex node, $v$ is isolated in $G_x$ and hence must be also isolated in the solution.
As the solution is compatible with $(\mathcal{P}, in, out)$ at $x$, this implies that $in(v) = 0$, $out(v)= 0$, and $\{v\} \in \mathcal{P}$.
Thus the algorithm set $A[x, \mathcal{P}, in, out]=A[y, \mathcal{P}', in', out']$.
But we can use the same partial solution for the child $y$ and the claim follows from the claim for the child.

Let now $x$ be an introduce edge node, $y$ its child, and $uv$ the edge being introduced.
Let $b^*_{uv}$ be the sum of the number of times the edge $(u,v)$ is traversed over all walks in $\mathcal{C}$ and similarly with $b^*_{vu}$ and the edge $(v,u)$.
Obviously, if $(u,v)$ is not in $G$, then we have $b^*_{uv}=0$ and otherwise we have $b^*_{uv}\le\cFn{(u,v)}$, and by compatibility also $b^*_{uv}\le out(u)$ and $b^*_{uv}\le in(v)$.
In either case $b^*_{uv}\le a_{uv}$ a similarly $b^*_{vu} \le a _{vu}$.

If $b^*_{uv}=b^*_{vu}=0$, then $\mathcal{C}$ is also compatible with $(\mathcal{P}, in, out)$ at $y$ and, by induction hypothesis, we have $A[y, \mathcal{P}, in, out] \le h$.
If $v \notin P_u$, then $A[x, \mathcal{P}, in, out] = A[y, \mathcal{P}, in, out] \le h$.
Otherwise we have $A[x, \mathcal{P}, in, out] \le \min_{\substack{0 \le b_{uv} \le a_{uv}\\ 0 \le b_{vu} \le a_{vu}}} (A[y, \mathcal{P}, in', out'] + b_{uv} \cdot \wFn((u,v)) + b_{vu} \cdot \wFn((v,u))) \le A[y, \mathcal{P}, in, out] \le h$ (note that $in'=in$ and $out'=out$ if $b_{uv}=b_{vu}=0$).

If $b^*_{uv}+b^*_{vu} \ge 1$, then we remove all occurrences of $(u,v)$ and $(v,u)$ from the walks in $\mathcal{C}$, splitting them into more walks or shortening them.
Call the resulting set of walks $\widehat{\mathcal{C}}$.
As both $u$ and $v$ are in $\beta(x)$, each new walk starts and ends in $\beta(x)=\beta(y)$ and each vertex in $\WP \cap (\gamma(x) \setminus \beta(x))$ is traversed exactly as many times by $\widehat{\mathcal{C}}$ as by $\mathcal{C}$, i.e., at least once.
Furthermore, each vertex and each edge is traversed at most as many times by $\widehat{\mathcal{C}}$ as by $\mathcal{C}$.
Hence $\widehat{\mathcal{C}}$ is a partial solution of length $h - b^*_{uv} \cdot \wFn((u,v)) - b^*_{vu} \cdot \wFn((v,u))$.

Let $H$ be the graph constructed for $\mathcal{C}$ and $\widehat{H}$ be obtained from $H$ by removing the edge $\{u,v\}$.
If $u$ and $v$ are in the same connected component of $\widehat{H}$, then $\widehat{\mathcal{C}}$ is compatible with $(\mathcal{P}, in', out')$, where $in'$ and $out'$ are obtained by letting $b_{uv}=b^*_{uv}$ and $b_{vu}=b^*_{vu}$.
Therefore $A[y, \mathcal{P}, in', out'] \le h - b^*_{uv} \cdot \wFn((u,v)) - b^*_{vu} \cdot \wFn((v,u))$ and  $A[x, \mathcal{P}, in, out] \le h$.

If $u$ and $v$ are in different connected components of $\widehat{H}$, then let $K_u$ be the component containing $u$ and $K_v$ be the component containing $v$.
Note that, if $K$ is the component containing $u$ and $v$ in $H$, then $V(K)=V(K_u) \cup V(K_v)$, since $H$ and $\widehat{H}$ only differ by the edge $\{u,v\}$.
Let $P'_u = \beta(x) \cap K_u$, $P'_v = \beta(x) \cap K_v$, and $\mathcal{P}' = (\mathcal{P} \setminus \{P_u\}) \cup \{P'_u,P'_v\}$.
Then $\widehat{\mathcal{C}}$ is compatible with $(\mathcal{P}', in', out')$, where $in'$ and $out'$ are again obtained by letting $b_{uv}=b^*_{uv}$ and $b_{vu}=b^*_{vu}$.
Therefore $A[y, \mathcal{P}', in', out'] \le h - b^*_{uv} \cdot \wFn((u,v)) - b^*_{vu} \cdot \wFn((v,u))$ and  $A[x, \mathcal{P}, in, out] \le h$.

Now suppose that $x$ is a forget node, $y$ its child and $v$ the vertex being forgotten.
Let $\mathcal{P} =\{P_1, \ldots, P_s\}$.
Let $i$ be the number of times that $v$ is traversed by $\mathcal{C}$.
If $v$ is waypoint, then $i \ge 1$ by the conditions on a partial solution.
If $v$ is traversed, then it is not isolated in $H$ and therefore its connected component has a non-empty intersection with $\beta(x)$, which is in $\mathcal{P}$, say $P_j$.
Then the intersection of this component with $\beta(y)$ is $P_j \cup \{v\}$ and $\mathcal{C}$ is compatible with $(\mathcal{P}_j, in_i, out_i)$ at $y$.
Thus $A[y, \mathcal{P}_j, in_i, out_i] \le h$ by induction hypothesis and also $A[x, \mathcal{P}, in, out] \le h$.

If $v$ is not traversed (which can only occur if it is not a waypoint), then it is isolated in $H$ and $\mathcal{C}$ is compatible with $(\mathcal{P} \cup \{\{v\}\}, in_0, out_0)$ at $y$.
Then $A[y, \mathcal{P} \cup\{\{v\}\}, in_0, out_0] \le h$ by induction hypothesis and also $A[x, \mathcal{P}, in, out] \le h$.

Finally, let $x$ be a join node and $y$ and $z$ its children.
Let $\mathcal{C}_y$ and $\mathcal{C}_z$ be initially empty.
For each walk $C \in \mathcal{C}$, take each maximal continuous part of walk $C$ using only arcs of $G_y$ and put it into $\mathcal{C}_y$ and put each maximal continuous part only using arcs of $G_z$ and put it into $\mathcal{C}_z$.
Since every arc is introduced exactly once, graphs $G_y$ and $G_z$ do not share any arcs and therefore, the above maximal parts are not overlapping, apart from their endpoints and the multiset of arcs appearing in $\mathcal{C}$ is a disjoint union of multisets of arcs of $\mathcal{C}_y$ and $\mathcal{C}_z$.
Thus the sum of their lengths $h_y$ and $h_z$ equals the length $h$ of $\mathcal{C}$.
Furthermore, only vertices of $\beta(x)$ might be incident with arcs of both $G_y$ and $G_z$.
Therefore, all the walks in $\mathcal{C}_y$ and $\mathcal{C}_z$ start and end in $\beta(x)$.

Each vertex $u$ in $\gamma(x) \setminus \beta(x)$ is either in $\gamma(y) \setminus \beta(x)$ or $\gamma(z) \setminus \beta(x)$.
Thus each such waypoint is traversed by at least one walk either of $\mathcal{C}_y$ or of $\mathcal{C}_z$.
No vertex and no edge is traversed more times by $\mathcal{C}_y$ or by $\mathcal{C}_z$.
Hence, $\mathcal{C}_y$ and $\mathcal{C}_z$ form a partial solution for $y$ and $z$, respectively.

Let $H_y$ be the subgraph of $G_y$ formed by the edges used by the walks in $\mathcal{C}_y$ and $H'_y$ be the underlying undirected graph of $H_y$.
Let $\mathcal{P}_y=\{K \cap \beta(x) \mid K \text{ is a connected component of }H_y \wedge K \cap \beta(x) \neq \emptyset\}$.
Similarly define $H_z$, $H'_z$, and $\mathcal{P}_z$ based on $\mathcal{C}_z$ and $H$, $H'$, $\mathcal{P}$ based on $\mathcal{C}$.
By assumption $\mathcal{C}$ is compatible with $(\mathcal{P}, in, out)$, where for each $v \in \beta(x)$ vertex $v$ is entered $in(v)$ times and exited $out(v)$ times summing over all walks in $\mathcal{C}$.

Since $E(H')=E(H'_y) \cup E(H'_z)$, partition $\mathcal{P}$ is the finest common coarsening of $\mathcal{P}_y$ and $\mathcal{P}_z$.
Let $in_y,out_y, in_z, out_z \colon \beta(x) \to \{0, \ldots, \nu\}$ be such that for each $v \in \beta(x)$ vertex $v$ is entered $in_y(v)$ times and exited $out_y(v)$ times by the walks in $\mathcal{C}_y$ and entered $in_z(v)$ times and exited $out_z(v)$ times by the walks in $\mathcal{C}_z$.
By definition $\mathcal{C}_y$ is compatible with $(\mathcal{P}_y, in_y, out_y)$ at $y$ and $\mathcal{C}_z$ is compatible with $(\mathcal{P}_z, in_z, out_z)$ at $z$.
Therefore, by induction hypothesis, $A[y,\mathcal{P}_y, in_y, out_y] \le h_y$ and $A[z,\mathcal{P}_z, in_z, out_z] \le h_z$.

Since the walks in $\mathcal{C}_y$ and $\mathcal{C}_z$ form a partition of the walks in $\mathcal{C}$, we have for every $v \in \beta(x)$ that $in(v)=in_y(v)+in_z(v)$ and $out(v)=out_y(v)+out_z(v)$.
Therefore $A[x,\mathcal{P}, in, out] \le A[y,\mathcal{P}_y, in_y, out_y] + A[z,\mathcal{P}_z, in_z, out_z] \le  h_y + h_z =h$, finishing this implication.

Now we know that $A[r,\{\{w_0\}\}, w_0 \mapsto i, w_0 \mapsto i]$ is the shortest length of a sequence $\mathcal{C}$ of walks in $G_r=G$ that start and end in $w_0$, each vertex in $\WP \cap (\gamma(r) \setminus \beta(r)) = \WP \setminus \{w_0\}$ is traversed at least once and not more than $\nu$ times, $w_0$ is traversed exactly $i \le \nu$ times and each arc is traversed at most $\kappa(e)$ times. Note that a closed walk which is a solution to the instance forms such a sequence of one walk for an appropriate $i \in \{1, \ldots, \nu\}$.
Conversely, any such sequence can be turned into a single closed walk by concatenating the walks and the resulting walk represent a solution to the instance.
Therefore, the answer of the algorithm is correct.
\end{proof}
}

\section{W[1]-hardness with Respect to Distance to Constant Treedepth}\label{sec:d-to-td}

\toappendix{
\section{Omitted Details and Proofs from \Cref{sec:d-to-td}}
}
In this section we prove \cref{thm:dwp_td_hardness}.

\thmtd*

We provide a reduction from \textsc{Capacitated Dominating Set} (\CDSshort), which is defined as follows.
We are given an undirected graph $G$ and a \emph{capacity function} $c \colon V(G) \to \mathbb N$, where $1 \le c(u) \le \deg_G(u)$ for every $u \in V(G)$.
A set of vertices $S \subseteq V(G)$ is a \emph{capacitated dominating set} if there exists a \emph{domination mapping} $f \colon (V(G) \setminus S) \to S$ such that for every $v \in (V(G) \setminus S)$ we have $\{f(v),v\} \in E(G)$ and
each vertex $s\in S$ dominates at most $c(s)$ vertices, i.e., $|f^{-1}(s)| \le c(s)$.
Given a $k\in\mathbb{N}$, the problem is to find a capacitated dominating set of size $k$.
\toappendix{
\problemQuestion{\textsc{Capacitated Dominating Set} (CDS)}{
    An (undirected) graph $G$, a capacity function $c$, and positive integer $k$.
}{
    Does there exist a capacitated dominating set $S \subseteq V(G)$ of size $|S| \le k$?
}
}%
CDS was shown to be \Wh with respect to treewidth in~\cite{DomLSV08}. %
In order to prove \Cref{thm:dwp_td_hardness}, we actually need to observe a slight improvement of the construction proposed in~\cite{DomLSV08}: we require that CDS is \Wh with respect to the distance to constant treedepth.
This \Whness is obtained by carefully considering the structure of the graph used in the reduction in~\cite{DomLSV08}.
Formally:

\toappendix{
CDS was shown to be \Wh with respect to treewidth by Dom, Lokshtanov, Saurabh, and Villanger~\cite{DomLSV08}.
They prove that the graph has bounded treewidth by having a parameter-sized modulator to a forest.
To prove that \textsc{Directed Waypoint Routing} is \Wh by treedepth we adapt this reduction of Fomin et al.\ by (in essence) replacing high order bipartite subgraphs with a construction that exploits edges of bounded capacities.
However, to prove that the resulting graph has bounded distance to constant treedepth, we require that CDS is \Wh for the distance to a constant treedepth.
We get this \Whness, in the theorem below, we carefully consider the structure of the modulator to forest from \cite{DomLSV08} to conclude that the forest is a disjoint union of stars.
Hence, the reduction in their proof creates a graph with bounded distance to stars.
This parameter implies not only bounded treewidth but also bounded treedepth, and more importantly for us, bounded distance to treedepth 2.
}
\begin{theorem}[\appmark]\label{thm:cds_whard_stars}
    \CDSshort is \Wh with respect to the distance to stars.
\end{theorem}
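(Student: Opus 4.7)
The plan is to refine the classical W[1]-hardness reduction for \CDSshort{} parameterized by treewidth of Dom, Lokshtanov, Saurabh, and Villanger~\cite{DLSV08}, arguing that their construction actually yields hardness with respect to the stronger parameter ``distance to stars.'' Recall that a \emph{star} is a tree in which at most one vertex has degree greater than one, and that a graph has distance to stars at most $k$ if deleting some $k$ vertices leaves a disjoint union of stars. Since every star has treedepth at most~$2$, distance to stars upper-bounds distance to treedepth~$2$, which is what the subsequent reduction to \DWRPshort{} requires.

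First, I would recall the reduction of~\cite{DLSV08}: starting from a W[1]-hard source problem (parameterized by some $k$), one builds a \CDSshort{} instance $(G', c', k')$ using selector gadgets (one per element to be chosen) and equality/validation gadgets whose intended behaviour is enforced via carefully tuned capacities. In their analysis, a modulator $M_0 \subseteq V(G')$ of size $O(f(k))$ is exhibited such that $G' - M_0$ is a forest, yielding the treewidth bound. The key observation to prove in our theorem is that $M_0$ can be extended to a (still parameter-sized) modulator $M \supseteq M_0$ such that every connected component of $G' - M$ is a star.

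Second, I would analyse the forest $G' - M_0$ component by component. Each component is built from copies of a constant-size gadget, so it has a constant number of ``internal'' vertices that could possibly act as non-leaf nodes of a non-star tree. I would move all such internal vertices into the modulator: since the number of gadgets is polynomial in $k$ and each contributes $O(1)$ such vertices, the enlarged modulator $M$ still has size $O(f(k))$. After this enlargement every remaining component is, by construction, a set of leaves attached to a single surviving center, i.e., a star.

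The main obstacle will be the second step: one must verify that the gadgets in~\cite{DLSV08} genuinely admit such a bounded-size set of ``potential centers,'' and that pushing them into the modulator does not break the correctness of the reduction (capacities and adjacencies of the remaining vertices must still encode the source instance faithfully). A safer fallback, if the original gadgets resist this analysis, is to redo the reduction from scratch from, e.g., \textsc{Multicoloured Clique}: design selector gadgets in which each color class is represented by one high-capacity hub vertex with a large pendant ``leaf set,'' and adjacency-check gadgets in which every non-hub vertex has only one high-degree neighbour. By construction, after deleting the hubs and the constantly many coordinator vertices, every connected component is a star, so the distance to stars is bounded by a function of $k$, as required.
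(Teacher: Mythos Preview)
Your high-level strategy—re-examining the Dom--Lokshtanov--Saurabh--Villanger reduction and arguing that its forest modulator is in fact a modulator to stars—is exactly what the paper does. However, the concrete mechanism you propose has a gap. You write that ``the number of gadgets is polynomial in $k$ and each contributes $O(1)$ such vertices,'' and plan to push those internal vertices into the modulator. But in the DLSV construction the forest $G'-M_0$ has $\Omega(n)$ components: there is one component for every $\bar v$ with $v\in V_i$ and one for every $\bar e$ with $e\in E_{ij}$, plus many isolated leaves. The number of such components scales with $N$ and $|E_{ij}|$, not with $k$. So an enlargement argument phrased as ``add the internal vertices of each component'' would, taken at face value, add $\Omega(n)$ vertices to the modulator and destroy the parameter bound.

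The paper sidesteps this by proving something sharper: no enlargement is needed at all. After deleting the $O(k^4)$ vertices consisting of the ``hat'' vertices $\hat x_i,\hat y_{ij},\hat z_{ij},\hat p_{ij},\hat q_{ij},\hat r_{ij},\hat s_{ij},\hat x_{ij}$ together with the auxiliary sets $S_i,S_{ij}$, every remaining edge of $G'$ is incident to some $\bar v$ or $\bar e$ (this uses that an $(s,\ell)$-arrow only creates edges touching its two endpoints), and the $\bar v,\bar e$ vertices are pairwise non-adjacent. Hence every component of $G'-M$ is already a star centered at some $\bar v$ or $\bar e$, or an isolated vertex. The correct execution of your plan is therefore not to enlarge the modulator but to verify directly, by walking through the adjacency list of the DLSV construction, that the forest is a disjoint union of stars to begin with. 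Your fallback of redesigning the reduction from scratch is unnecessary.
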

\toappendix{
\begin{proof}[Proof of \Cref{thm:cds_whard_stars}]
    \Whness of CDS with respect to the treewidth of the input graph is shown by a reduction~\cite{DomLSV08} from the \textsc{Multicolored-Clique} problem.
    To argue that the result has a bounded distance to stars, let us revisit the constructed graph.
    Say we have a \textsc{Multicolored-Clique} instance $(G,k)$ where the graph has a $k$-partition $V_1,\dots,V_k$ and our task is to find a clique $K_k$ which has exactly one vertex in each of the parts.
    It can be assumed that $|V_i|=|V_j|$ for each $i,j \in [k]$, let $N=|V_i|$ be the size of each part.
    Let $E_{ij}$ denote all edges between $V_i$ and $V_j$.
    We reiterate how the authors reduce from $(G,k)$ to a CDS instance $(H,c,k')$.
    We argue just that the $H$ has bounded distance to stars and omit the proof of correctness which follows from the source paper~\cite{DomLSV08}.

    The reduction uses the following two simplifying notions.
    \emph{Marked vertex} is one which at the end of the reduction gets additional $k'+1$ leaves and $k'+1$ capacity -- this forces the solution to contain such a vertex.
    Second notion is an \emph{$(s,\ell)$-arrow} from $u$ to $v$ which stands for adding $s$ once-subdivided edges between $u$ and $v$, and then adding $\ell$ leaves to $v$.

    Let $k'=7k(k-1)+2k$.
    Let $c_x$ and $c'_x$ be positive integers that depend on $x \in G$, their precise value is irrelevant to this proof.
    Now we list all the vertices and edges of $H$, see \Cref{fig:cds-whard-stars}:
    \begin{itemize}
        \item For each $i \in [k]$ there is a marked vertex $\hat x_i$,
        \item for each $v \in V_i$ there is a vertex $\bar v$ that neighbors $\hat x_i$,
        \item for each $i \in [k]$ there are $k'-1$ vertices $S_i$ where each is adjacent to every vertex $\bar v$ where $v \in V_i$,
        \item for each $i,j \in [k], i \ne j$ there are vertices $\hat y_{ij}$ and $\hat z_{ij}$ and  for every $v \in V_i$ there is an $(c_v,c'_v)$-arrow from $\bar v$ to $\hat y_{ij}$ and an $(c'_v,c_v)$-arrow from $\bar v$ to $\hat z_{ij}$,
        \item for each $i,j \in [k], i < j$ there are five marked vertices $\hat x_{ij}$, $\hat p_{ij}$, $\hat p_{ji}$, $\hat q_{ij}$, $\hat q_{ji}$,
        \item for each $e = \{u,v\} \in E_{ij}$ where $u \in V_i$ and $v \in V_j$ there is a vertex $\bar e$ that neighbors $\hat x_{ij}$, also there is a $(c'_u,c_u)$-arrow from $\bar e$ to $p_{ij}$, a $(c_u,c'_u)$-arrow from $\bar e$ to $q_{ij}$, a $(c'_v,c_v)$-arrow from $\bar e$ to $p_{ji}$, and a $(c_v,c'_v)$-arrow from $\bar e$ to $q_{ji}$, and there are $k'+1$ vertices $S_{ij}$ where each is adjacent to every vertex $\bar e$ where $e \in E_{ij}$,
        \item last, for each $i,j \in [k], i \ne j$ there are marked vertices $\hat r_{ij}$ and $\hat s_{ij}$ with $(2N,0)$-arrows from $\hat y_{ij}$ to $\hat r_{ij}$, the same arrows from $\hat p_{ij}$ to $\hat r_{ij}$, from $\hat z_{ij}$ to $\hat s_{ij}$, and from $\hat q_{ij}$ to $\hat s_{ij}$.
    \end{itemize}
    Each of the vertices is assigned a capacity through the capacity function, which we ignore as it is unnecessary to this proof.
    Let $M_1$ be the union of $S_i \cup \{\hat x_i\}$ for each $i \in [k]$, $M_2$ be the union of $S_{ij} \cup \{\hat x_{ij},\hat y_{ij},\hat z_{ij},\hat p_{ij},\hat q_{ij},\hat r_{ij},\hat s_{ij}\}$ for each $i,j \in [k], i \ne j$.
    Set $M_1$ has size $(k'+1+1) \cdot k$ while $M_2$ has size $(k'+1+7) \cdot \binom k2$.
    Let $M= M_1 \cup M_2$.
    We see that $|M| = 7k^4/2 + 6k^3 - 7k^2/2 -2k$, i.e., it is bounded in terms of $k$, it remains to show that $M$ is a modulator to stars.

    \begin{figure}[tbh]
        \centering
        \begin{tikzpicture}[scale=0.6,inner sep=0.5mm]

    \node[draw, circle, line width=0.5pt, fill=white](xi) at (-1,3.5)[label=left: $\hat x_i$] {};

    \node[draw, circle, line width=0.5pt, fill=black](vi1) at (1,6)[] {};
    \node[draw, circle, line width=0.5pt, fill=black](vi2) at (1,5)[] {};
    \node[draw, circle, line width=0.5pt, fill=black](vij) at (1,3)[label=above left: $\bar u$] {};
    \node[draw, circle, line width=0.5pt, fill=black](vin) at (1,1)[] {};
    \node[fit=(vi1)(vi2)(vij)(vin), draw, black, rounded corners, thick, inner sep=1em, label=-90:$V_i$] {};

    \begin{pgfonlayer}{bg}    
        \foreach \i in {0,...,3}{
            \node[draw, circle, line width=0.5pt, fill=black] (S\i) at ($(-1,2.0-\i*.5)$) {};
            \foreach \j in {vi1,vi2,vij,vin}{
                \draw[-, line width=0.5pt, color=gray] (S\i) -- (\j);
            }
        }
    \end{pgfonlayer}
    \node[fit=(S0)(S1)(S2)(S3), draw, black, rounded corners, inner sep=.2em, label=180:$S_i$] {};

    \path (vi2) -- (vij) node [black, midway, sloped] {$\dots$};
    \path (vij) -- (vin) node [black, midway, sloped] {$\dots$};

    \draw[-, line width=0.5pt, color=black]  (xi) -- (vi1);
    \draw[-, line width=0.5pt, color=black]  (xi) -- (vi2);
    \draw[-, line width=0.5pt, color=black]  (xi) -- (vij);
    \draw[-, line width=0.5pt, color=black]  (xi) -- (vin);

    \node[draw, circle, line width=0.5pt, fill=white](yi1) at (3.5,7)[] {};
    \node[draw, circle, line width=0.5pt, fill=white](zi1) at (3.5,6)[] {};
    \node[draw, circle, line width=0.5pt, fill=white](yij) at (3.5,4)[label=above: $\hat y_{ij}$] {};
    \node[draw, circle, line width=0.5pt, fill=white](zij) at (3.5,3)[label=below: $\hat z_{ij}$] {};
    \node[draw, circle, line width=0.5pt, fill=white](yik) at (3.5,1)[] {};
    \node[draw, circle, line width=0.5pt, fill=white](zik) at (3.5,0)[] {};

    \path (zi1) -- (yij) node [black, midway, sloped] {$\quad\dots$};
    \path (zij) -- (yik) node [black, midway, sloped] {$\dots\quad$};


    \draw[-{stealth'}, line width=1.5pt, color=black]  (vij) -- (yi1);
    \draw[-{stealth'}, line width=1.5pt, color=black]  (vij) -- (zi1);
    \draw[-{stealth'}, line width=1.5pt, color=black]  (vij) -- (yij);
    \draw[-{stealth'}, line width=1.5pt, color=black]  (vij) -- (zij);
    \draw[-{stealth'}, line width=1.5pt, color=black]  (vij) -- (yik);
    \draw[-{stealth'}, line width=1.5pt, color=black]  (vij) -- (zik);

    \begin{scope}[yshift=-9cm]
        \node[draw, circle, line width=0.5pt, fill=white](xi) at (-1,3.5)[label=left: $\hat x_j$] {};

        \node[draw, circle, line width=0.5pt, fill=black](vi1) at (1,6)[] {};
        \node[draw, circle, line width=0.5pt, fill=black](vi2) at (1,5)[] {};
        \node[draw, circle, line width=0.5pt, fill=black](vij) at (1,3)[label=above left: $\bar v$] {};
        \node[draw, circle, line width=0.5pt, fill=black](vin) at (1,1)[] {};
        \node[fit=(vi1)(vi2)(vij)(vin), draw, black, rounded corners, thick, inner sep=1em, label=$V_j$] (box) {};

        \begin{pgfonlayer}{bg}    
            \foreach \i in {0,...,3}{
                \node[draw, circle, line width=0.5pt, fill=black] (S\i) at ($(-1,2.0-\i*.5)$) {};
                \foreach \j in {vi1,vi2,vij,vin}{
                    \draw[-, line width=0.5pt, color=gray] (S\i) -- (\j);
                }
            }
        \end{pgfonlayer}
        \node[fit=(S0)(S1)(S2)(S3), draw, black, rounded corners, inner sep=.2em, label=180:$S_j$] {};

        \path (vi2) -- (vij) node [black, midway, sloped] {$\dots$};
        \path (vij) -- (vin) node [black, midway, sloped] {$\dots$};

        \draw[-, line width=0.5pt, color=black]  (xi) -- (vi1);
        \draw[-, line width=0.5pt, color=black]  (xi) -- (vi2);
        \draw[-, line width=0.5pt, color=black]  (xi) -- (vij);
        \draw[-, line width=0.5pt, color=black]  (xi) -- (vin);

        \node[draw, circle, line width=0.5pt, fill=white](yi1) at (3.5,7)[] {};
        \node[draw, circle, line width=0.5pt, fill=white](zi1) at (3.5,6)[] {};
        \node[draw, circle, line width=0.5pt, fill=white](yji) at (3.5,4)[label=above: $\hat y_{ji}$] {};
        \node[draw, circle, line width=0.5pt, fill=white](zji) at (3.5,3)[label=below: $\hat z_{ji}$] {};
        \node[draw, circle, line width=0.5pt, fill=white](yik) at (3.5,1)[] {};
        \node[draw, circle, line width=0.5pt, fill=white](zik) at (3.5,0)[] {};

        \path (zi1) -- (yji) node [black, midway, sloped] {$\quad\dots$};
        \path (zji) -- (yik) node [black, midway, sloped] {$\dots\quad$};


        \draw[-{stealth'}, line width=1.5pt, color=black]  (vij) -- (yi1);
        \draw[-{stealth'}, line width=1.5pt, color=black]  (vij) -- (zi1);
        \draw[-{stealth'}, line width=1.5pt, color=black]  (vij) -- (yji);
        \draw[-{stealth'}, line width=1.5pt, color=black]  (vij) -- (zji);
        \draw[-{stealth'}, line width=1.5pt, color=black]  (vij) -- (yik);
        \draw[-{stealth'}, line width=1.5pt, color=black]  (vij) -- (zik);

    \end{scope}

    \begin{scope}[xshift=13cm,yshift=-4.5cm]
        \node[draw, circle, line width=0.5pt, fill=white](xij) at (1,3.5)[label=right: $\hat x_{ij}$] {};

        \node[draw, circle, line width=0.5pt, fill=black](ei1) at (-1,6)[] {};
        \node[draw, circle, line width=0.5pt, fill=black](ei2) at (-1,5)[] {};
        \node[draw, circle, line width=0.5pt, fill=black](eij) at (-1,3)[label=above right:$\bar e$] {};
        \node[draw, circle, line width=0.5pt, fill=black](ein) at (-1,1)[] {};
        \node[fit=(ei1)(ei2)(eij)(ein), draw, black, rounded corners, thick, inner sep=1em, label=-90:$E_{ij}$] {};

        \begin{pgfonlayer}{bg}    
            \foreach \i in {0,...,3}{
                \node[draw, circle, line width=0.5pt, fill=black] (S\i) at ($(+1,2.0-\i*.5)$) {};
                \foreach \j in {ei1,ei2,eij,ein}{
                    \draw[-, line width=0.5pt, color=gray] (S\i) -- (\j);
                }
            }
        \end{pgfonlayer}
        \node[fit=(S0)(S1)(S2)(S3), draw, black, rounded corners, inner sep=.2em, label=0:$S_{ij}$] {};

        \path (ei2) -- (eij) node [black, midway, sloped] {$\dots$};
        \path (eij) -- (ein) node [black, midway, sloped] {$\dots$};

        \draw[-, line width=0.5pt, color=black]  (xij) -- (ei1);
        \draw[-, line width=0.5pt, color=black]  (xij) -- (ei2);
        \draw[-, line width=0.5pt, color=black]  (xij) -- (eij);
        \draw[-, line width=0.5pt, color=black]  (xij) -- (ein);

        \node[draw, circle, line width=0.5pt, fill=white](pij) at (-3.5,6)[label=above: $\hat p_{ij}$] {};
        \node[draw, circle, line width=0.5pt, fill=white](qij) at (-3.5,5)[label=below: $\hat q_{ij}$] {};
        \node[draw, circle, line width=0.5pt, fill=white](pji) at (-3.5,2)[label=above: $\hat p_{ji}$] {};
        \node[draw, circle, line width=0.5pt, fill=white](qji) at (-3.5,1)[label=below: $\hat q_{ji}$] {};

        \draw[-{stealth'}, line width=1.5pt, color=black]  (eij) -- (pij);
        \draw[-{stealth'}, line width=1.5pt, color=black]  (eij) -- (qij);
        \draw[-{stealth'}, line width=1.5pt, color=black]  (eij) -- (pji);
        \draw[-{stealth'}, line width=1.5pt, color=black]  (eij) -- (qji);

        \node[draw, circle, line width=0.5pt, fill=white](rij) at (-6.5,7)[label=above: $\hat r_{ij}$] {};
        \node[draw, circle, line width=0.5pt, fill=white](sij) at (-6.5,6)[label=below: $\hat s_{ij}$] {};
        \node[draw, circle, line width=0.5pt, fill=white](rji) at (-6.5,1)[label=above: $\hat r_{ji}$] {};
        \node[draw, circle, line width=0.5pt, fill=white](sji) at (-6.5,0)[label=below: $\hat s_{ji}$] {};

    \end{scope}

    \draw[-{stealth'}, line width=1.5pt, color=black]  (yij) -- (rij);
    \draw[-{stealth'}, line width=1.5pt, color=black]  (zij) -- (sij);
    \draw[-{stealth'}, line width=1.5pt, color=black]  (yji) -- (rji);
    \draw[-{stealth'}, line width=1.5pt, color=black]  (zji) -- (sji);

    \draw[-{stealth'}, line width=1.5pt, color=black]  (pij) -- (rij);
    \draw[-{stealth'}, line width=1.5pt, color=black]  (qij) -- (sij);
    \draw[-{stealth'}, line width=1.5pt, color=black]  (pji) -- (rji);
    \draw[-{stealth'}, line width=1.5pt, color=black]  (qji) -- (sji);

\end{tikzpicture}
        \caption{
            Part of the graph $H$ described in the proof of \Cref{thm:cds_whard_stars} from \cite{DomLSV08} that models an edge $e=uv$.
            Marked vertices are depicted as empty, thick arrows stand for $(c,c')$-arrows for some integers $c,c'$.
        }%
        \label{fig:cds-whard-stars}
    \end{figure}

    Observe that adding an arrow from $u$ and $v$ creates only edges that are incident to either $u$ or $v$.
    Let $C$ be the set that contains $\bar v$ for each $v \in V_i$, $\bar e$ for each $e \in E_{ij}$.
    We claim that $H-M-C$ contains no edges and adding back the independent set $C$ results in the graph $H-M$ being a disjoint union of stars.
    First, note that all arrows have either both endpoints in $M$ or one endpoint in $M$ and the second in $C$, implying that none of the edges added due to arrows are present in $H-M-C$.
    The (non-arrow) edges that were added to $H$ are $\{o_i,\bar v\}$ for each $i \in [k], o_i \in \{\hat x_i\} \cup S_i, v \in V_i$, and similarly $\{o_{ij},\bar e\}$ for each $i,j \in[k], o_{ij} \in \{\hat x_{ij}\} \cup S_{ij}, \bar e \in E_{ij}$, which are edges between $M$ and $C$.
    Finally, all the marked vertices are in $M$, therefore all leaves attached to the marked vertices at the end of the construction are isolated in $H-M$.
\end{proof}

Note that Fomin et al.~\cite[Section 5]{Fomin2010} showed \Whness of \textsc{Hamiltonian Cycle} with respect to clique-width by a reduction from CDS parameterized by treewidth.
The improved analysis of \Cref{thm:cds_whard_stars} does not strengthen this result as the construction still creates high order bipartite subgraphs.
}

We are now ready to proceed with the sketch of the proof of~\Cref{thm:dwp_td_hardness}. A more detailed version can be found in the appendix.

\begin{proof}[Sketch of Proof of \Cref{thm:dwp_td_hardness}.]
    Let $I=(G=(V,E),c,k)$ be an instance of CDS. %
    We describe the construction of an instance $I'=(G',\WP,\wFn,\cFn,\budget)$ of \DWRPshort that is equivalent to $I$ such that $G'$ has a bounded modulator to constant treedepth.

    \proofsubparagraph{The construction.}
    First, we describe the construction of the used gadgets.
    Each gadget has dedicated \emph{connecting vertices} marked with `in' (e.g. $u_{\rm in}$), `out', or `io'.
    Every gadget of the final $G'$ will be connected to other gadgets by edges incident only to its connecting vertices: `in' (`out' resp.) vertices can only be incident to arcs that are coming `into' (going `out' from resp.) the gadget, and `io' can be incident to both types of arcs.
    If this is true then we say that the graph \emph{contains} the gadget.
    We use $u^X$ to denote the vertex $u$ that belongs to a gadget $X$.
    All edges have weight $1$.
    Unless stated otherwise, every edge has capacity $n$; this is always sufficient by \Cref{obs:n_traversals}.
    Crucially, getting from one terminal to another requires the traversal of exactly two non-terminal vertices.
    As the final budget is set to exactly $3 \cdot (\text{number of terminals})$, it follows that every terminal is visited exactly once.

    We begin with the \emph{force-$p$-traversals-gadget}.
    It consists of three non-terminal vertices $u_{\rm in},u_{\rm out},w$ and terminal vertices $v_1,\dots,v_p$ joined with edges $(u_{\rm in},v_i),(v_i,w)$ for every $i \in [p]$ and $(w,u_{\rm out})$ of capacity $p$, see \Cref{fig:force-traversal-b}.

    \begin{figure}[tbh]
        \centering
        \begin{tikzpicture}[def,scale=0.8]
            \node (uin) at (0,1) {};
            \node (w) at (2,1) {};
            \node (uout) at (3,1) {};
            \node[hide] at (0,1.4) {$u_{\rm in}$};
            \node[hide] at (1,2.3) {$v_i$};
            \node[hide] at (2,1.4) {$w$};
            \node[hide] at (3,1.4) {$u_{\rm out}$};
            \foreach \y in {1,...,5}{
                \node[fill] (v\y) at (1,.5*\y-.5) {};
                \draw (uin) -- (v\y);
                \draw (v\y) -- (w);
            }
            \draw (w) --node[hide,below]{$5$} (uout);
            \foreach \x in {1,...,8}{
                \node[hide] (q\x) at (-2.4+.3*\x,-0.4) {};
                \draw (q\x) edge[densely dotted,bend left=5] (uin);
                \node[hide] (e\x) at (2.6+.3*\x,-0.4) {};
                \draw (uout) edge[densely dotted,bend left=5] (e\x);
            }
        \end{tikzpicture}
        \caption{Force-5-traversals-gadget. Black and white vertices represent terminals and non-terminals, respectively.}%
        \label{fig:force-traversal-b}
    \end{figure}

    \begin{claim}\label{claim:traversal_gadget}
        Let $G'$ be a graph that contains a force-$p$-traversals-gadget $X$.
        Then every solution to \DWRPshort in $G'$ traverses from $u^X_{\rm in}$ to $u^X_{\rm out}$ exactly $p$ times.
    \end{claim}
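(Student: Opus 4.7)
The plan is to argue by flow conservation combined with the single capacity bound on the arc $(w, u^X_{\rm out})$. Observe first that in $G'$ the only connecting vertices of the gadget $X$ are $u^X_{\rm in}$ and $u^X_{\rm out}$, so each terminal $v_i$ has the unique incoming arc $(u^X_{\rm in}, v_i)$ and the unique outgoing arc $(v_i, w)$ in the whole of $G'$, and $w$ has only $(w, u^X_{\rm out})$ as an outgoing arc. Since $v_i \in \WP$, any feasible solution uses $(u^X_{\rm in}, v_i)$ at least once, and flow conservation at the degree-two vertex $v_i$ immediately gives that $(v_i, w)$ is used the same number of times as $(u^X_{\rm in}, v_i)$, namely the number of visits to $v_i$.

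Next I would sum the uses of $(v_i, w)$ over $i \in [p]$ and apply flow conservation at $w$: since the only outgoing arc at $w$ is $(w, u^X_{\rm out})$, the number of uses of $(w, u^X_{\rm out})$ equals $\sum_{i=1}^{p}$ (visits to $v_i$), which is at least $p$. On the other hand, the capacity constraint $\cFn((w, u^X_{\rm out})) = p$ forces this same sum to be at most $p$. Combining the two inequalities, each $v_i$ is visited exactly once, each arc $(u^X_{\rm in}, v_i)$ and $(v_i, w)$ is used exactly once, and $(w, u^X_{\rm out})$ is used exactly $p$ times.

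Finally, I would observe that every maximal subwalk of the solution contained in $X$ must start at $u^X_{\rm in}$ (the only connecting vertex with arcs into internal vertices of $X$) and end at $u^X_{\rm out}$ (the only connecting vertex reachable from internal vertices of $X$), and hence has the form $u^X_{\rm in}, v_i, w, u^X_{\rm out}$ for some $i \in [p]$. Each such subwalk uses $(w, u^X_{\rm out})$ exactly once, so the number of traversals of $X$ from $u^X_{\rm in}$ to $u^X_{\rm out}$ equals the total number of uses of $(w, u^X_{\rm out})$, which is exactly $p$. I do not expect any serious obstacle here: once the convention about connecting vertices is unpacked, everything follows from flow conservation at the degree-$\le 2$ internal vertices $v_i$ and $w$ together with the single capacity bound on $(w, u^X_{\rm out})$.
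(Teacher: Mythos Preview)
The proposal is correct and takes essentially the same approach as the paper: both argue that the $p$ terminals force at least $p$ traversals through the gadget, while the capacity $p$ on $(w, u^X_{\rm out})$ caps the number of traversals at $p$. Your version spells out the flow-conservation details at $v_i$ and $w$ more carefully than the paper's two-sentence sketch, but the underlying argument is identical.
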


    Next, we need the \emph{cover-$z$-gadget}, which is illustrated in~\Cref{fig:edge-gadget-adapted-b}. Let $R_1$ be the directed path that starts at $s_{\rm in}$, goes to $f$, then to $b$, continues to $x_{\rm in}$, traverses all the $w$s' until $y_{\rm out}$, goes to $c$ and finishes in $t_{\rm out}$. Also, let $R_2$ and $P$ be the red and blue paths depicted in~\Cref{fig:edge-gadget-adapted-b}. The idea here is that this gadget has two possible behaviors: either its terminals are covered by the combination of the $R_2$ and $P$ paths, or all of its terminals except from $z_{\rm io}$ are covered by the $R_1$ path, and $z$ is covered by some external path that visits this gadget. Formally:

    \begin{figure}[tbh]
        \centering
        \begin{tikzpicture}[def,scale=0.8]
            \newcommand{\customarc}[5]{
                \node[fill=white] (t1) at ($(#1)!#5!(#2)$) {};
                \node[fill=white] (t2) at ($(#1)!1-#5!(#2)$) {};
                \draw[-{Stealth[scale=#3]}] (#1) edge[bend right=#4] (t1);
                \draw[-{Stealth[scale=#3]}] (t1) edge[bend right=#4] (t2);
                \draw[-{Stealth[scale=#3]}] (t2) edge[bend right=#4] (#2);
            }
            \newcommand{\customarccolor}[6]{
                \node[fill=white] (t1) at ($(#1)!#5!(#2)$) {};
                \node[fill=white] (t2) at ($(#1)!1-#5!(#2)$) {};
                \draw[-{Stealth[scale=#3]},#6] (#1) edge[bend right=#4] (t1);
                \draw[-{Stealth[scale=#3]},#6] (t1) edge[bend right=#4] (t2);
                \draw[-{Stealth[scale=#3]},#6] (t2) edge[bend right=#4] (#2);
            }
            \newcommand{\customedgecolor}[7]{
                \node[fill=white] (t1) at ($(#1)!#5!(#2)$) {};
                \node[fill=white] (t2) at ($(#1)!1-#5!(#2)$) {};
                \draw[-{Stealth[scale=#3]},#6] (#1) edge[bend right=#4] (t1);
                \draw[-{Stealth[scale=#3]},#6] (t1) edge[bend right=#4] (t2);
                \draw[-{Stealth[scale=#3]},#6] (t2) edge[bend right=#4] (#2);
                \draw[-{Stealth[scale=#3]},#7] (t1) edge[bend right=#4] (#1);
                \draw[-{Stealth[scale=#3]},#7] (t2) edge[bend right=#4] (t1);
                \draw[-{Stealth[scale=#3]},#7] (#2) edge[bend right=#4] (t2);
            }
            \begin{scope}[xshift=4cm,yscale=1.4,xscale=2.8]
                \node[fill] (z) at (2,-.5) {};
                \node[fill] (b) at (0,0) {};
                \node[fill] (a) at ($(b)+(90:1)$) {};
                \node[fill] (x) at ($(a)+(90:1)$) {};
                \node[fill] (f) at ($(b)+(155:.7)$) {};
                \node[fill] (e) at ($(f)+(90:1)$) {};
                \node[fill] (s) at ($(e)+(90:1)$) {};
                \node[fill] (c) at (4,0) {};
                \node[fill] (d) at ($(c)+(90:1)$) {};
                \node[fill] (y) at ($(d)+(90:1)$) {};
                \node[fill] (h) at ($(c)+(25:.7)$) {};
                \node[fill] (g) at ($(h)+(90:1)$) {};
                \node[fill] (t) at ($(g)+(90:1)$) {};
                \customarccolor{g}{t}{0.9}{10}{.33}{red}
                \customarccolor{h}{g}{0.9}{10}{.33}{red}
                \customarc{c}{h}{0.9}{20}{.33}
                \customedgecolor{d}{c}{0.9}{10}{.33}{blue}{black}
                \customedgecolor{y}{d}{0.9}{10}{.33}{blue}{black}
                \customedgecolor{a}{x}{0.9}{10}{.33}{black}{blue}
                \customedgecolor{b}{a}{0.9}{10}{.33}{black}{blue}
                \customarc{f}{b}{0.9}{20}{.33}
                \customarccolor{e}{f}{0.9}{10}{.33}{red}
                \customarccolor{s}{e}{0.9}{10}{.33}{red}
                \foreach \w in {1,...,9}{
                    \node[fill] (w\w) at ($(x)+(\w/2.5,0)$) {};
                }
                \customarc{w3}{w4}{0.9}{50}{.33}
                \customarc{w6}{w7}{0.9}{50}{.33}
                \foreach \w in {2,3,5,6,8,9}{
                    \pgfmathtruncatemacro\ww{\w-1}
                    \customedgecolor{w\ww}{w\w}{0.9}{50}{.33}{black}{red}
                }
                \customarc{x}{w1}{0.9}{50}{.33}
                \customarc{w9}{y}{0.9}{50}{.33}
                \customarccolor{w7}{h}{0.9}{6}{.42}{red}
                \customarccolor{f}{w3}{0.9}{6}{.42}{red}
                \customarccolor{z}{c}{0.9}{10}{.33}{blue}
                \customarccolor{b}{z}{0.9}{10}{.33}{blue}
                \node[fill=white] (t1) at ($(w1)!.33!(w6)+(0,0.4)$) {};
                \node[fill=white] (t2) at ($(w1)!.67!(w6)+(0,0.4)$) {};
                \draw (w1) edge[bend left=8, red] (t1);
                \draw (t1) edge[bend left=8, red] (t2);
                \draw (t2) edge[bend left=8, red] (w6);
                \begin{pgfonlayer}{bg}
                    \begin{scope}[red!30,line width=4pt]
                        \draw[line cap=rect,-] (w1) -- (t1);
                        \draw[line cap=rect,-] (t1) -- (t2);
                        \draw[line cap=rect,-] (t2) -- (w6);
                    \end{scope}
                \end{pgfonlayer}
                \node[fill=white] (t1) at ($(w4)!.33!(w9)+(0,0.4)$) {};
                \node[fill=white] (t2) at ($(w4)!.67!(w9)+(0,0.4)$) {};
                \draw (w4) edge[bend left=8, red] (t1);
                \draw (t1) edge[bend left=8, red] (t2);
                \draw (t2) edge[bend left=8, red] (w9);
                \begin{pgfonlayer}{bg}
                    \begin{scope}[red!30,line width=4pt]
                        \draw[line cap=rect,-] (w4) -- (t1);
                        \draw[line cap=rect,-] (t1) -- (t2);
                        \draw[line cap=rect,-] (t2) -- (w9);
                    \end{scope}
                \end{pgfonlayer}
                \node[hide,anchor=north] at (z.south) {$z$};
                \node[hide,anchor=north] at (b.south) {$b$};
                \node[hide,anchor=north] at (f.south) {$f$};
                \node[hide,anchor=north] at (c.south) {$c$};
                \node[hide,anchor=north] at (h.south) {$h$};
                \node[hide,anchor=west] at (g.east) {$g$};
                \node[hide,anchor=west] at (t.east) {$t_{\rm out}$};
                \node[hide,anchor=east] at (a.west) {$a$};
                \node[hide,anchor=east] at (s.west) {$s_{\rm in}$};
                \node[hide,anchor=west] at (e.east) {$e$};
                \node[hide,anchor=west] at (d.east) {$d$};
                \node[hide,anchor=south] at (y.north) {$y_{\rm out}$};
                \node[hide,anchor=south] at (x.north) {$x_{\rm in}$};
                \foreach \w in {1,...,9}{
                    \node[hide,below] at (w\w.south) {$w_\w$};
                }
                \begin{pgfonlayer}{bg}
                    \begin{scope}[red!30,line width=4pt]
                        \draw[line cap=rect,-] (s)  -- (f);
                        \draw[line cap=rect,-] (f)  -- (w3);
                        \draw[line cap=rect,-] (w3) -- (w1);
                        \draw[line cap=rect,-] (w6) -- (w4);
                        \draw[line cap=rect,-] (w9) -- (w7);
                        \draw[line cap=rect,-] (w7) -- (h);
                        \draw[line cap=rect,-] (h)  -- (t);
                    \end{scope}
                    \begin{scope}[blue!30,line width=4pt]
                        \draw[line cap=rect,-] (x)  -- (b);
                        \draw[line cap=rect,-] (b)  -- (z);
                        \draw[line cap=rect,-] (z)  -- (c);
                        \draw[line cap=rect,-] (c)  -- (y);
                    \end{scope}
                \end{pgfonlayer}
            \end{scope}
        \end{tikzpicture}
        \caption{
            The cover-$z$-gadget. Black and white vertices represent terminals and non-terminals, respectively. The path $R_2$ is colored red, while the path $P$ is colored blue.
        }%
        \label{fig:edge-gadget-adapted-b}
    \end{figure}

    \begin{claim}\label{claim:ltwogadget_variant}
        Let $G$ be a graph that contains the cover-$z$-gadget $X$.
        If $S$ is a minimum solution to \DWRPshort that traverses every terminal vertex of $X$ exactly once, then $S$ includes either the path $R_1$, or the paths $P$ and $R_2$ as continuous subpaths.
    \end{claim}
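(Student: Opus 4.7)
The plan is to exploit the tight budget, which forces every terminal of $X$ to be visited exactly once; consequently, in the arc multiset used by $S$, each terminal has in-degree and out-degree exactly one. I would first list the arcs incident to each terminal of $X$ (as recorded in the figure) and mark the already forced ones: $s_{\rm in}$'s only internal out-arc goes to $e$, the terminal $e$ has a unique in-arc and a unique out-arc, and similarly $g$ is fully determined together with the chain $h \to g \to t_{\rm out}$. This reduces the problem to a finite case analysis at the branching vertices $f$, $b$, $a$, $x_{\rm in}$, $c$, $d$, $h$, and each $w_i$.

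Next, I would split on the out-arc chosen at $f$. In the first case ($f \to b$), propagation of the single-visit constraint at $b$ rules out $b \to z$, because then $a$'s only remaining in-arc ($x_{\rm in} \to a$) and out-arc ($a \to x_{\rm in}$) would create a disjoint $2$-cycle on $\{a, x_{\rm in}\}$, contradicting the fact that $S$ is a single (weakly) connected closed walk. Hence $b \to a$ is forced and, by the same argument, so are $a \to x_{\rm in}$ and $x_{\rm in} \to w_1$. For the $w$-subgraph, I would then verify that any use of a long jump ($w_1 \to w_6$ or $w_4 \to w_9$) or of a backward arc $w_{i+1} \to w_i$ either strands some $w_j$ with no available in-arc (using that the arcs $w_4 \to w_3$ and $w_7 \to w_6$ are absent) or forces a disjoint short cycle; hence the forward traversal $w_1 \to w_2 \to \cdots \to w_9 \to y_{\rm out}$ is forced. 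The same local degree-1 argument then propagates through $d$, $c$, $h$ and yields the remaining segment $y_{\rm out} \to d \to c \to h \to g \to t_{\rm out}$, completing~$R_1$.

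In the second case ($f \to w_3$), I would argue symmetrically. Since $f \to b$ is unused, $b$'s in-arc must be $a \to b$; to avoid again the disjoint cycle on $\{a, x_{\rm in}\}$, the chain $x_{\rm in} \to a \to b \to z \to c \to d \to y_{\rm out}$ is forced, which is exactly~$P$. For the $w$-subgraph, the analogous local analysis, now starting at $w_3$ (since $f \to w_3$) and exiting via $w_7 \to h$ (forced because $h$'s other in-arc from $c$ is incompatible with $c$ already feeding into $d$), produces the zig-zag $w_3 \to w_2 \to w_1 \to w_6 \to w_5 \to w_4 \to w_9 \to w_8 \to w_7$; combined with $s_{\rm in} \to e \to f$ and $h \to g \to t_{\rm out}$ this traces~$R_2$. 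Finally, in both cases, because each internal vertex on the identified arcs has in-degree and out-degree exactly one in $S$, these arcs cannot be split by other visits, so they appear as contiguous subpaths of $S$, as claimed.

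The main obstacle is the case analysis within the $w$-subgraph. One must track, for every partial assignment of in- and out-arcs, that no remaining $w_i$ is left without a legal in-arc and that no small component becomes disconnected from the rest of the walk; in particular, it is the interplay of the long jump arcs $w_1 \to w_6$ and $w_4 \to w_9$ with the deliberately missing backward arcs between $w_3, w_4$ and between $w_6, w_7$ that rules out every configuration other than the plain forward chain and the prescribed zig-zag. A tidy way to organise this is a small table that, for each branching vertex, records the available in-arcs and out-arcs and the consequence of each choice for the neighboring vertices.
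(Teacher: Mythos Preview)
Your approach is essentially the paper's: in/out-degree one at every terminal, a case split on the out-arc at $f$, and propagation to force either $R_1$ or $P\cup R_2$. The paper packages the propagation more economically by first singling out the seven terminals $e,a,w_2,w_5,w_8,d,g$ that have exactly two terminal neighbours; each of these must be traversed \emph{from one neighbour directly to the other}, which turns most of your disjoint-cycle arguments and the entire $w$-subgraph case table into one-line deductions (e.g.\ in Case~1 the paper gets $b\to a\to x_{\rm in}$ immediately, since $a$'s only neighbours are $b,x_{\rm in}$ and $b$'s in-arc is already $f$). Two small points to tighten in your version: in Case~2 the cycle you must exclude sits on $\{a,b\}$ (from $a\to b$ and a putative $b\to a$), not on $\{a,x_{\rm in}\}$; and the step $z\to c$ in $P$ deserves one sentence, since $z$ is an io-vertex and could in principle leave the gadget---it is forced because otherwise $c$'s only remaining in-arc is from $d$, which then leaves $y_{\rm out}$ with no available in-arc.
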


    We now present the construction of $(G',\WP,\wFn,\cFn,\budget)$, see~\Cref{fig:whard-construction-b}.
    Let $\mathcal S$ be a force-$(k+1)$-traversals-gadget that we call \emph{selection-gadget} and let $\mathcal A$ be a force-$(2|E|+|V|+1)$-traversals-gadget.
    We create an auxiliary non-terminal vertex $x$ and add an arc $(x,u^{\mathcal A}_{\rm in})$.
    Let $x$ together with $\mathcal A$ be called the \emph{auxiliary-gadget}.
    We connect the selection- and auxiliary-gadgets by creating terminal vertices $x_1,x_2$, non-terminal vertex $x_3$, and arcs $(u^{\mathcal A}_{\rm out},x_2)$, $(x_2,x_3)$, $(x_3,u^{\mathcal S}_{\rm in})$, $(u^{\mathcal S}_{\rm out}, x_1)$, $(x_1,x)$. These two gadgets compose the \emph{upper part} of $G'$.

    Finally, we encode the vertices and edges of $G$ into $G'=(V',E')$. For every vertex $u \in V$ we add vertices $z_u$, $c_u$, and $d_u$ to $V'$, then we create a cover-$z_u$-gadget denoted $\mathcal E^u$.
    We add arcs $(u^{\mathcal S}_{\rm out}, x^{\mathcal E^u}_{\rm in})$, $(y^{\mathcal E^u}_{\rm out}, c_u)$, $(c_u,d_u)$, $(c_u,u^{\mathcal S}_{\rm in})$, $(u^{\mathcal A}_{\rm out}, s^{\mathcal E^u}_{\rm in})$, $(t^{\mathcal E^u}_{\rm out}, x)$ to $E'$, and we set the capacity of $(c_u,d_u)$ to $c(u)$.
    For every arc $(u,v)$ such that $\{u,v\} \in E$, we create a cover-$z_v$-gadget denoted $\mathcal E^{u,v}$.
    Then we add arcs $(d_u,x^{\mathcal E^{u,v}}_{\rm in})$, $(y^{\mathcal E^{u,v}}_{\rm out}, c_u)$, $(u^{\mathcal A}_{\rm out}, s^{\mathcal E^{u,v}}_{\rm in})$, $(t^{\mathcal E^{u,v}}_{\rm out}, x)$.
    Note that in the above we created two gadgets, $\mathcal E^{u,v}$ and $\mathcal E^{v,u}$, for each $\{u,v\} \in E$.
    Finally, we set the budget $\budget=132|E|+69|V|+3k+12$. This concludes construction of the \DWRPshort instance.

    \begin{figure}[tbh]
        \centering
        \scalebox{0.9}{
        \newcommand{\convexpath}[2]{
	[
	create hullnodes/.code={
		\global\edef\namelist{#1}
		\foreach [count=\counter] \nodename in \namelist {
			\global\edef\numberofnodes{\counter}
			\node at (\nodename) [draw=none,name=hullnode\counter] {};
		}
		\node at (hullnode\numberofnodes) [name=hullnode0,draw=none] {};
		\pgfmathtruncatemacro\lastnumber{\numberofnodes+1}
		\node at (hullnode1) [name=hullnode\lastnumber,draw=none] {};
	},
	create hullnodes
	]
	($(hullnode1)!#2!-90:(hullnode0)$)
	\foreach [
	evaluate=\currentnode as \previousnode using \currentnode-1,
	evaluate=\currentnode as \nextnode using \currentnode+1
	] \currentnode in {1,...,\numberofnodes} {
		let
		\p1 = ($(hullnode\currentnode)!#2!-90:(hullnode\previousnode)$),
		\p2 = ($(hullnode\currentnode)!#2!90:(hullnode\nextnode)$),
		\p3 = ($(\p1) - (hullnode\currentnode)$),
		\n1 = {atan2(\y3,\x3)},
		\p4 = ($(\p2) - (hullnode\currentnode)$),
		\n2 = {atan2(\y4,\x4)},
		\n{delta} = {-Mod(\n1-\n2,360)}
		in
		{-- (\p1) arc[start angle=\n1, delta angle=\n{delta}, radius=#2] -- (\p2)}
	}
	-- cycle
}

\begin{tikzpicture}[def]
    \newcommand{\traversalgadget}[2]{
        \def\name{#1}
        \def\traversals{#2}
        \node[solid] (uin_\name) at (0,1) {};
        \node[solid] (w) at (2,1) {};
        \node[solid] (uout_\name) at (3,1) {};
        \foreach \y in {1,...,\traversals}{
            \node[solid,fill] (v\y) at (1,2*\y/\traversals-1/\traversals) {};
            \draw (uin_\name) -- (v\y);
            \draw (v\y) -- (w);
        }
        \draw (w) --node[hide,above]{$\traversals$} (uout_\name);
    }
    \traversalgadget{selection}{3}
    \node[hide] at (-.2,1.4) {$u^\mathcal S_{\rm in}$};
    \node[hide] at (3,1.4) {$u^\mathcal S_{\rm out}$};
    \begin{pgfonlayer}{bbg}
     \fill[black!8, rounded corners] (-1.2,0) rectangle (3.4,2);
    \end{pgfonlayer}
    \node[hide] at (-0.8,0.6) {$\mathcal S$};
    \begin{scope}[xshift=6cm,densely dotted]
        \node[solid,label=above:$x$] (xx) at (-1,1) {};
        \traversalgadget{auxiliary}{6}
        \node[hide] at (0,1.4) {$u^\mathcal A_{\rm in}$};
        \node[hide] at (3.2,1.4) {$u^\mathcal A_{\rm out}$};
    \end{scope}
    \begin{pgfonlayer}{bbg}
     \fill[black!8, rounded corners] (4.5,0) rectangle (10,2);
    \end{pgfonlayer}
    \node[hide] at (9.6,0.6) {$\mathcal A$};
    \node[fill,label=right:$x_2$] (x2) at (8.2,2.5) {};
    \node[    ,label=left: $x_3$] (x3) at (0.8,2.5) {};
    \node[fill,label=above:$x_1$] (x1) at (4,1.4) {};
    \draw (uout_selection) -- (x1);
    \draw (x1) -- (xx);
    \draw[densely dotted] (xx) -- (uin_auxiliary);
    \draw (uout_auxiliary) -- (x2);
    \draw (x2) -- (x3);
    \draw (x3) -- (uin_selection);
    \newcommand{\connection}[4]{
        \pgfmathsetmacro\hangle{#3+20}
        \pgfmathsetmacro\langle{#3-20}
        \node[fill] (s#1) at ($(#2)+(\hangle:1.5*#4)$) {};
        \node[fill] (t#1) at ($(#2)+(\langle:1.5*#4)$) {};
        \node[fill] (x#1) at ($(s#1)+(#3:1.*#4)$) {};
        \node[fill] (y#1) at ($(t#1)+(#3:1*#4)$) {};
        \begin{scope}[opacity=.15,transparency group]
            \fill[black] \convexpath{s#1,x#1,y#1,t#1,#2}{5pt};
        \end{scope}
        \begin{pgfonlayer}{bg}    
            \draw[densely dotted] (uout_auxiliary) to (s#1);
            \draw[densely dotted] (t#1) to (xx);
        \end{pgfonlayer}
    }
    \newcommand{\vertexgadget}[1]{
        \node[fill,label=below:$z_#1$] (z#1) at (2,0) {};
        \connection{#1}{z#1}{90}{1.0}
        \node (c#1) at ($(y#1)+(5:1)$) {};
        \node (d#1) at ($(c#1)+(-90:.8)$) {};
        \draw (uout_selection) to[out=-90,in=110] (x#1);
        \draw (y#1) -- (c#1);
        \draw (c#1) -- (d#1);
        \draw (c#1) to[out=140,in=-90] (uin_selection);
    }
    \begin{scope}[xshift=-2cm,yshift=-4cm]
        \vertexgadget{u}
        \node[hide,label=left:$x_{\rm in}$] at (xu){};
        \node[hide,label=left:$y_{\rm out}$] at (yu){};
        \node[hide,label=left:$s_{\rm in}$] at (su){};
        \node[hide,label=left:$t_{\rm out}$] at (tu){};
        \node[hide,label=above right:$c_u$] at (cu){};
        \node[hide,label=above right:$d_u$] at (du){};
        \node[hide] at ($(zu)+(90:.7)$){$\mathcal E^u$};
    \end{scope}
    \begin{scope}[xshift=3cm,yshift=-4cm]
        \vertexgadget{v}
        \node[hide] at ($(zv)+(90:1)$){$\mathcal E^v$};
    \end{scope}
    \begin{scope}[xshift=6cm,yshift=-4cm]
        \vertexgadget{w}
        \node[hide] at ($(zw)+(90:1)$){$\mathcal E^w$};
    \end{scope}
    \connection{vu}{zu}{22}{.7}
    \connection{uv}{zv}{158}{.7}
    \node[hide] at ($(zu)+(25:1.3)$){$\mathcal E^{vu}$};
    \node[hide] at ($(zv)+(155:1.3)$){$\mathcal E^{uv}$};
    \draw (du) -- (xuv);
    \draw (yuv) -- (cu);
    \draw (dv) -- (xvu);
    \draw (yvu) -- (cv);
\end{tikzpicture}
        }
        \caption{An example of a completely constructed \DWRPshort instance for graph $G=(\{u,v,w\}, \{\{u,v\}\})$ and $k=2$ (capacities omitted). The top half contains the upper part of~$G'$. The bottom half corresponds to the vertices and edges of $G$. Gray boxes hide the inner parts of the corresponding cover-gadgets, see \cref{fig:edge-gadget-adapted-b}. Dotted arrows signify arcs that belong to the auxiliary-gadget or connect cover-gadgets to the auxiliary-gadget. Black and white  vertices represent terminals and non-terminals, respectively.}%
        \label{fig:whard-construction-b}
    \end{figure}

    \proofsubparagraph{The instances are equivalent.} On a high level, we show that any solution $C$ of the instance $(G',\WP,\wFn,\cFn,\budget)$ obeying the above budget has a specific behavior. First, the budget is set so that each terminal can be visited at most once. Let $S\subseteq V$ be a solution of $I$. The walk $C$ is separated into two stages. In the first stage, for each $u\in S$ the walk $C$ starts from~$u^\mathcal{S}_{\rm in}$, goes through~$\mathcal{S}$ to~$u^\mathcal{S}_{\rm out}$, from there to $x^{\mathcal E^u}_{\rm in}$, using the $P$ path (visiting $z_u$) of the gadget $\mathcal E^u$ to $y^{\mathcal E^u}_{\rm out}$ and to $c_u$. 
    Then for each vertex $v$ that is dominated by $u$, it goes from $c_u$ to $d_u$, uses the $P$ path of gadget $\mathcal E^{u,v}$ to visit $z_v$ and return to~$c_u$.
    Crucially, this can be done at most $c(u)$ times.
    Finally it returns to~$u^\mathcal{S}_{\rm in}$.    

    Once all the vertices of $S$ are dealt with in this way the first stage is done, and $C$ proceeds to $\mathcal{A}$. For each cover-gadget $\mathcal E$ (which are equal in number to the terminals of $\mathcal{A}$ minus one), the walk $C$ goes from $x$ through $\mathcal A$ to $u^{\mathcal{A}}_{\rm out}$  and then through $\mathcal E$ to cover the ``remaining'' vertices of $\mathcal E$ (i.e., the vertices that were not visited during the first stage); this is done by employing either the $R_1$ or the $R_2$ path, according to whether $\mathcal E$ was traversed during the first step. We then show that $C$ exists and is of the correct budget if and only if $I$ is a yes-instance of CDS.

    \smallskip

    The rest of the proof focuses on proving that $G'$ has a modulator of size $7|M|+7$ to treedepth $86$, using that $G' \setminus M$ has treedepth $2$ (it is a disjoint union of stars).
\end{proof}

\toappendix{
\begin{proof}[Proof of \Cref{thm:dwp_td_hardness}.]
    Let $(G,c,k)$ be an instance of CDS where $G$ is a graph, $c \colon V(G) \to \mathbb{N}_0$ is the capacity function, and $k$ is a maximum size of the solution; let $M$ be the modulator to stars of minimum cardinality in~$G$.
    We build an instance $(G',\WP,\wFn,\cFn,\budget)$ of \DWRPshort that has modulator of size $7\cdot |M|+7$ to treedepth $86$.
    Let us first show a construction of gadgets one by one while noting intuition for how they work.
    Each gadget has dedicated connecting vertices which we mark with `in' (e.g. $u_{\rm in}$), `out', or `io'.
    In the final instance, within $G'$ every gadget will be connected to other gadgets by edges incident only to its connecting vertices; notably `in' vertices can be incident to arcs that are coming into the gadget, `out' vertices can be incident to outgoing arcs, and `io' can be incident to both types of arcs.
    If the above assertions about arcs incident to gadget's vertices in a graph is true then we say that the graph \emph{contains} the gadget.
    When a gadget $X$ contains in its construction a vertex $u$ we use $u^X$ to denote this vertex in the resulting graph.
    All the edges have weight $1$.
    Unless stated otherwise, every edge has capacity $n$; this is always sufficient by \Cref{obs:n_traversals}.
    Last before describing the gadgets, let us note that the crucial property of the construction is that to get from one terminal to another we have to traverse exactly two non-terminal vertices.
    As the final budget is set to exactly $3 \cdot (\text{number of terminals})$ it follows that every terminal is visited exactly once.

    To force $p \in \mathbb N^+$ traversals we define \emph{force-$p$-traversals-gadget}.
    It consists of three non-terminal vertices $u_{\rm in},u_{\rm out},w$ and terminal vertices $v_1,\dots,v_p$ joined with edges $(u_{\rm in},v_i),(v_i,w)$ for every $i \in [p]$ and $(w,u_{\rm out})$ of capacity $p$, see \Cref{fig:force-traversal-b}.

    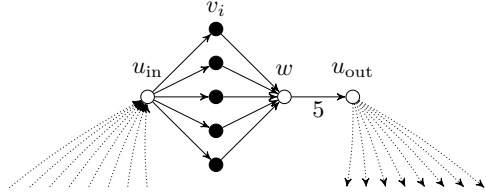
\begin{figure}[tbh]
        \centering
        \scalebox{0.9}{
        \begin{tikzpicture}[def]
            \node (uin) at (0,1) {};
            \node (w) at (2,1) {};
            \node (uout) at (3,1) {};
            \node[hide] at (0,1.4) {$u_{\rm in}$};
            \node[hide] at (1,2.3) {$v_i$};
            \node[hide] at (2,1.4) {$w$};
            \node[hide] at (3,1.4) {$u_{\rm out}$};
            \foreach \y in {1,...,5}{
                \node[fill] (v\y) at (1,.5*\y-.5) {};
                \draw (uin) -- (v\y);
                \draw (v\y) -- (w);
            }
            \draw (w) --node[hide,below]{$5$} (uout);
            \foreach \x in {1,...,8}{
                \node[hide] (q\x) at (-2.4+.3*\x,-0.4) {};
                \draw (q\x) edge[densely dotted,bend left=5] (uin);
                \node[hide] (e\x) at (2.6+.3*\x,-0.4) {};
                \draw (uout) edge[densely dotted,bend left=5] (e\x);
            }
        \end{tikzpicture}
        }
        \caption{Force-5-traversals-gadget}%
        \label{fig:force-traversal}
    \end{figure}

    \begin{claim}\label{claim:traversal_gadget_full}
        Let $G'$ be a graph that contains a force-$p$-traversals-gadget $X$.
        Then every solution to \DWRPshort in $G'$ traverses from $u^X_{\rm in}$ to $u^X_{\rm out}$ exactly $p$ times.
    \end{claim}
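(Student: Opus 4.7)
The plan is to exploit that all internal vertices $v_1,\dots,v_p,w$ of the gadget can only connect to vertices of the gadget, so the structure of how any walk can pass through the gadget is essentially forced. First I would observe that since the gadget is \emph{contained} in $G'$, the only arcs incident to $v_i$ (in the whole graph) are $(u_{\rm in},v_i)$ and $(v_i,w)$, the only arcs incident to $w$ are $(v_i,w)$ for $i\in[p]$ and $(w,u_{\rm out})$, and the arcs incident to $u_{\rm in}$ (resp.\ $u_{\rm out}$) are the internal arcs $(u_{\rm in},v_i)$ (resp.\ $(w,u_{\rm out})$) together with possibly some external arcs coming in (resp.\ going out).

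Next, since each $v_i$ is a terminal, any solution $C$ must visit $v_i$ at least once. Every visit to $v_i$ is preceded by traversing $(u_{\rm in},v_i)$ and is followed by traversing $(v_i,w)$, because $v_i$ has no other incident arcs. Thus each visit to $v_i$ forces one traversal of the arc $(w,u_{\rm out})$ immediately after (which is the unique outgoing arc at $w$), producing a subwalk of the form $u_{\rm in},v_i,w,u_{\rm out}$. Counting: the total number of traversals of $(w,u_{\rm out})$ equals the total number of visits to $v_1,\dots,v_p$, which is at least $p$. But the capacity of $(w,u_{\rm out})$ is exactly $p$, so $C$ must traverse $(w,u_{\rm out})$ at most, and therefore exactly, $p$ times, with each $v_i$ visited exactly once.

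Finally I would argue that each traversal from $u^X_{\rm in}$ to $u^X_{\rm out}$ through $X$ uses precisely one subwalk $u_{\rm in},v_i,w,u_{\rm out}$, and conversely each such subwalk contributes one traversal from $u^X_{\rm in}$ to $u^X_{\rm out}$. Since $u_{\rm in}$ has no outgoing external arcs and $u_{\rm out}$ has no incoming external arcs, there is no other way to enter $u_{\rm in}$ from outside without proceeding through the gadget, nor to arrive at $u_{\rm out}$ from elsewhere. So the bijection between traversals and copies of the subwalk yields exactly $p$ traversals, as claimed. I do not anticipate any real obstacle; the only mild subtlety is correctly handling that $u_{\rm in}$ and $u_{\rm out}$ may be visited several times by the closed walk, which is resolved by the one-to-one correspondence between gadget entries and gadget exits through this forced subwalk.
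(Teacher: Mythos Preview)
Your proposal is correct and follows essentially the same approach as the paper's proof: the terminals $v_1,\dots,v_p$ force at least $p$ traversals from $u^X_{\rm in}$ to $u^X_{\rm out}$, while the capacity~$p$ on $(w,u^X_{\rm out})$ caps the number at~$p$. You simply spell out in more detail the forced structure of each subwalk through the gadget, whereas the paper compresses this to two sentences.
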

    \begin{claimproof}
        Every terminal vertex $v^X_i$ for $i\in[p]$ requires one (unique) traversal from $u^X_{\rm in}$ to $u^X_{\rm out}$.
        The capacity of $(w,u^X_{\rm out})$ upper bounds the number of traversals from $u^X_{\rm in}$ to $u^X_{\rm out}$ by~$p$.
    \end{claimproof}

    In order to define the second gadget, called cover-$z$-gadget, we first introduce an undirected raw-cover-$z$-gadget (see \Cref{fig:raw-edge-gadget}) that was used for \textsc{Hamiltonian Cycle} (gadget is called $L_2$ in \cite[Lemma 9]{Fomin2010}).
    For a fixed vertex $z$ let a \emph{raw-cover-$z$-gadget} be an undirected graph that consist of a path $R_1$ on terminals $s_{\rm in},e,f,b,a,x_{\rm in},w_1,w_2,\dots,w_9,y_{\rm out},d,c,h,g,t_{\rm out}$ (in this order) together with additional edges $\{f,w_3\},\{w_1,w_6\}, \{w_4,w_9\}$, and $\{w_7,h\}$.
    Moreover, the vertex $z$ is also considered to be part of the raw-cover-$z$-gadget; in this context we call it $z_{\rm io}$ and it is connected via edges $\{c,z_{\rm io}\}$ and $\{z_{\rm io},b\}$.
    Let $R_2$ be the path $s_{\rm in},e,f,w_3,w_2,w_1,w_6,w_5,w_4$, $w_9,w_8,w_7,h,g,t_{\rm out}$ and $P$ be the path $x_{\rm in},a,b,z_{\rm io},c,d,y_{\rm out}$.

    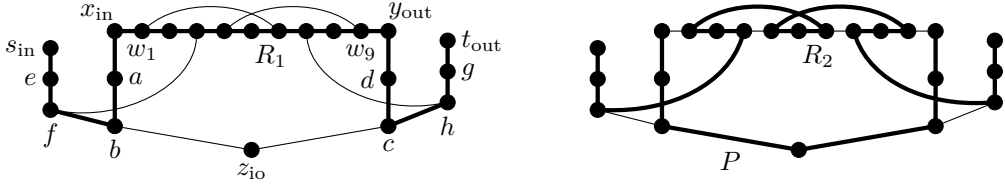
\begin{figure}[tbh]
        \centering
        \begin{tikzpicture}[def,scale=0.9,line cap=rect]
            \begin{scope}[xshift=4cm,yscale=.7]
                \node[fill] (z) at (2,-.5) {};
                \node[fill] (b) at (0,0) {};
                \node[fill] (a) at ($(b)+(90:1)$) {};
                \node[fill] (x) at ($(a)+(90:1)$) {};
                \node[fill] (f) at ($(b)+(160:1)$) {};
                \node[fill] (e) at ($(f)+(90:.65)$) {};
                \node[fill] (s) at ($(e)+(90:.65)$) {};
                \node[fill] (c) at (4,0) {};
                \node[fill] (d) at ($(c)+(90:1)$) {};
                \node[fill] (y) at ($(d)+(90:1)$) {};
                \node[fill] (h) at ($(c)+(30:1)$) {};
                \node[fill] (g) at ($(h)+(90:.65)$) {};
                \node[fill] (t) at ($(g)+(90:.65)$) {};
                \draw[-,ultra thick] (s) -- (e) -- (f) -- (b) -- (a) -- (x);
                \draw[-,ultra thick] (y) -- (d) -- (c) -- (h) -- (g) -- (t);
                \foreach \w in {1,...,9}{
                    \node[fill] (w\w) at ($(x)+(\w/2.5,0)$) {};
                }
                \draw[-,ultra thick] (x) -- (y);
                \draw[-] (b) -- (z) -- (c)
                    (f)  edge[bend right=50] (w3)
                    (w1) edge[bend left=50] (w6)
                    (w4) edge[bend left=50] (w9)
                    (w7) edge[bend right=50] (h)
                    ;
                \node[hide,anchor=north] at (z.south) {$z_{\rm io}$};
                \node[hide,anchor=north] at (b.south) {$b$};
                \node[hide,anchor=north] at (f.south) {$f$};
                \node[hide,anchor=north] at (c.south) {$c$};
                \node[hide,anchor=north] at (h.south) {$h$};
                \node[hide,anchor=west] at (g.east) {$g$};
                \node[hide,anchor=west] at (t.east) {$t_{\rm out}$};
                \node[hide,anchor=west] at (a.east) {$a$};
                \node[hide,anchor=east] at (s.west) {$s_{\rm in}$};
                \node[hide,anchor=east] at (e.west) {$e$};
                \node[hide,anchor=east] at (d.west) {$d$};
                \node[hide,anchor=south west] at (y.north west) {$y_{\rm out}$};
                \node[hide,anchor=south east] at (x.north east) {$x_{\rm in}$};
                \node[hide,anchor=north] at (w1.south) {$w_1$};
                \node[hide,anchor=north] at (w9.south) {$w_9$};
                \node[hide,anchor=north] at ($(w5.south)!.7!(w6.south)$) {$R_1$};
            \end{scope}
            \begin{scope}[xshift=12cm,yscale=.7]
                \node[fill] (z) at (2,-.5) {};
                \node[fill] (b) at (0,0) {};
                \node[fill] (a) at ($(b)+(90:1)$) {};
                \node[fill] (x) at ($(a)+(90:1)$) {};
                \node[fill] (f) at ($(b)+(160:1)$) {};
                \node[fill] (e) at ($(f)+(90:.65)$) {};
                \node[fill] (s) at ($(e)+(90:.65)$) {};
                \node[fill] (c) at (4,0) {};
                \node[fill] (d) at ($(c)+(90:1)$) {};
                \node[fill] (y) at ($(d)+(90:1)$) {};
                \node[fill] (h) at ($(c)+(30:1)$) {};
                \node[fill] (g) at ($(h)+(90:.65)$) {};
                \node[fill] (t) at ($(g)+(90:.65)$) {};
                \foreach \w in {1,...,9}{
                    \node[fill] (w\w) at ($(x)+(\w/2.5,0)$) {};
                }
                \draw[-,ultra thick] (s) -- (e) --
                    (f)  edge[bend right=40] (w3) (w3)--
                    (w1) edge[bend left=50] (w6) (w6) --
                    (w4) edge[bend left=50] (w9) (w9) --
                    (w7) edge[bend right=40] (h) (h)--
                    (g) -- (t);
                \draw[-] (b) -- (f) (c) -- (h) (x) -- (y);
                \draw[-,ultra thick] (x) -- (b) -- (z) -- (c) -- (y);
                \node[hide,anchor=north] at ($(z.south)!.5!(b.south)$) {$P$};
                \node[hide,anchor=north] at ($(w5.south)!.7!(w6.south)$) {$R_2$};
            \end{scope}
        \end{tikzpicture}
        \caption{
            A raw-cover-$z$-gadget \cite{Fomin2010} and its two partial solutions for \textsc{Hamiltonian Cycle} problem.
        }%
        \label{fig:raw-edge-gadget}
    \end{figure}

    For a fixed vertex $z$, we create \emph{cover-$z$-gadget} by first creating a raw-cover-$z$-gadget,
    then changing the following edges into arcs $(s_{\rm in},e)$, $(e,f)$, $(f,b)$, $(b,z)$, $(z,c)$, $(c,h)$, $(h,g)$, $(g,t_{\rm out})$, $(x_{\rm in}, w_1)$, $(w_9,y_{\rm out})$, $(w_1,w_6)$, $(w_4,w_9)$, $(w_3,w_4)$, and $(w_6,w_7)$,
    then subdividing every edge and arc with two non-terminal vertices,
    and last replacing every $\{u,v\}$ edge with arcs $uv$ and $vu$,
    see \Cref{fig:edge-gadget-adapted}.
    Let the paths $P$, $R_1$, and $R_2$ be directed paths in the cover-$z$-gadget that were created by subdivision and orientation (in the direction of order of vertices from their definition) of their analogues in the raw-cover-$z$-gadget.

    \begin{figure}[tbh]
        \centering
        \begin{tikzpicture}[def,scale=0.85]
            \newcommand{\customarc}[5]{
                \node[fill=white] (t1) at ($(#1)!#5!(#2)$) {};
                \node[fill=white] (t2) at ($(#1)!1-#5!(#2)$) {};
                \draw[-{Stealth[scale=#3]}] (#1) edge[bend right=#4] (t1);
                \draw[-{Stealth[scale=#3]}] (t1) edge[bend right=#4] (t2);
                \draw[-{Stealth[scale=#3]}] (t2) edge[bend right=#4] (#2);
            }
            \newcommand{\customarccolor}[6]{
                \node[fill=white] (t1) at ($(#1)!#5!(#2)$) {};
                \node[fill=white] (t2) at ($(#1)!1-#5!(#2)$) {};
                \draw[-{Stealth[scale=#3]},#6] (#1) edge[bend right=#4] (t1);
                \draw[-{Stealth[scale=#3]},#6] (t1) edge[bend right=#4] (t2);
                \draw[-{Stealth[scale=#3]},#6] (t2) edge[bend right=#4] (#2);
            }
            \newcommand{\customedgecolor}[7]{
                \node[fill=white] (t1) at ($(#1)!#5!(#2)$) {};
                \node[fill=white] (t2) at ($(#1)!1-#5!(#2)$) {};
                \draw[-{Stealth[scale=#3]},#6] (#1) edge[bend right=#4] (t1);
                \draw[-{Stealth[scale=#3]},#6] (t1) edge[bend right=#4] (t2);
                \draw[-{Stealth[scale=#3]},#6] (t2) edge[bend right=#4] (#2);
                \draw[-{Stealth[scale=#3]},#7] (t1) edge[bend right=#4] (#1);
                \draw[-{Stealth[scale=#3]},#7] (t2) edge[bend right=#4] (t1);
                \draw[-{Stealth[scale=#3]},#7] (#2) edge[bend right=#4] (t2);
            }
            \begin{scope}[xshift=4cm,yscale=1.4,xscale=2.8]
                \node[fill] (z) at (2,-.5) {};
                \node[fill] (b) at (0,0) {};
                \node[fill] (a) at ($(b)+(90:1)$) {};
                \node[fill] (x) at ($(a)+(90:1)$) {};
                \node[fill] (f) at ($(b)+(155:.7)$) {};
                \node[fill] (e) at ($(f)+(90:1)$) {};
                \node[fill] (s) at ($(e)+(90:1)$) {};
                \node[fill] (c) at (4,0) {};
                \node[fill] (d) at ($(c)+(90:1)$) {};
                \node[fill] (y) at ($(d)+(90:1)$) {};
                \node[fill] (h) at ($(c)+(25:.7)$) {};
                \node[fill] (g) at ($(h)+(90:1)$) {};
                \node[fill] (t) at ($(g)+(90:1)$) {};
                \customarccolor{g}{t}{0.9}{10}{.33}{red}
                \customarccolor{h}{g}{0.9}{10}{.33}{red}
                \customarc{c}{h}{0.9}{20}{.33}
                \customedgecolor{d}{c}{0.9}{10}{.33}{blue}{black}
                \customedgecolor{y}{d}{0.9}{10}{.33}{blue}{black}
                \customedgecolor{a}{x}{0.9}{10}{.33}{black}{blue}
                \customedgecolor{b}{a}{0.9}{10}{.33}{black}{blue}
                \customarc{f}{b}{0.9}{20}{.33}
                \customarccolor{e}{f}{0.9}{10}{.33}{red}
                \customarccolor{s}{e}{0.9}{10}{.33}{red}
                \foreach \w in {1,...,9}{
                    \node[fill] (w\w) at ($(x)+(\w/2.5,0)$) {};
                }
                \customarc{w3}{w4}{0.9}{50}{.33}
                \customarc{w6}{w7}{0.9}{50}{.33}
                \foreach \w in {2,3,5,6,8,9}{
                    \pgfmathtruncatemacro\ww{\w-1}
                    \customedgecolor{w\ww}{w\w}{0.9}{50}{.33}{black}{red}
                }
                \customarc{x}{w1}{0.9}{50}{.33}
                \customarc{w9}{y}{0.9}{50}{.33}
                \customarccolor{w7}{h}{0.9}{6}{.42}{red}
                \customarccolor{f}{w3}{0.9}{6}{.42}{red}
                \customarccolor{z}{c}{0.9}{10}{.33}{blue}
                \customarccolor{b}{z}{0.9}{10}{.33}{blue}
                \node[fill=white] (t1) at ($(w1)!.33!(w6)+(0,0.4)$) {};
                \node[fill=white] (t2) at ($(w1)!.67!(w6)+(0,0.4)$) {};
                \draw (w1) edge[bend left=8, red] (t1);
                \draw (t1) edge[bend left=8, red] (t2);
                \draw (t2) edge[bend left=8, red] (w6);
                \node[fill=white] (t1) at ($(w4)!.33!(w9)+(0,0.4)$) {};
                \node[fill=white] (t2) at ($(w4)!.67!(w9)+(0,0.4)$) {};
                \draw (w4) edge[bend left=8, red] (t1);
                \draw (t1) edge[bend left=8, red] (t2);
                \draw (t2) edge[bend left=8, red] (w9);
                \node[hide,anchor=north] at (z.south) {$z$};
                \node[hide,anchor=north] at (b.south) {$b$};
                \node[hide,anchor=north] at (f.south) {$f$};
                \node[hide,anchor=north] at (c.south) {$c$};
                \node[hide,anchor=north] at (h.south) {$h$};
                \node[hide,anchor=west] at (g.east) {$g$};
                \node[hide,anchor=west] at (t.east) {$t_{\rm out}$};
                \node[hide,anchor=east] at (a.west) {$a$};
                \node[hide,anchor=east] at (s.west) {$s_{\rm in}$};
                \node[hide,anchor=west] at (e.east) {$e$};
                \node[hide,anchor=west] at (d.east) {$d$};
                \node[hide,anchor=south] at (y.north) {$y_{\rm out}$};
                \node[hide,anchor=south] at (x.north) {$x_{\rm in}$};
                \foreach \w in {1,...,9}{
                    \node[hide,below] at (w\w.south) {$w_\w$};
                }
            \end{scope}
        \end{tikzpicture}
        \caption{
            Cover-$z$-gadget is created via selective orientation and subdivision of the raw-cover-$z$-gadget.
        }%
        \label{fig:edge-gadget-adapted}
    \end{figure}

    \begin{claim}\label{claim:ltwogadget_variant_full}
        Let $G$ be a graph that contains the cover-$z$-gadget $X$ on $G[X]$, i.e., $G[X]$ is isomorphic to the cover-$z$-gadget
        and all arcs in $E(G) \setminus E(G[X])$ that end in $X$ are incident with the vertices $x^X_{\rm in}$, $s^X_{\rm in}$, or $z^X_{\rm io}$ and the arcs in $E(G) \setminus E(G[X])$ that start in $X$ are incident with the vertices $y^X_{\rm out}$, $t^X_{\rm out}$, or $z^X_{\rm io}$ in $X$.
        If $S$ is a minimal solution to \DWRPshort that traverses every terminal vertex of $X$ exactly once, then the solution includes either the path $R_1$, or two paths $P$ and $R_2$ as continuous subpaths.
    \end{claim}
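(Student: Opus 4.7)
The plan is to decompose the solution $S$ into \emph{internal segments}, i.e., maximal subwalks whose arcs all lie in $G[X]$, and to perform a case analysis on how these segments are arranged. Since every external arc incident to $X$ touches only the five \emph{boundary} vertices $s^X_{\rm in}, x^X_{\rm in}, y^X_{\rm out}, t^X_{\rm out}, z^X_{\rm io}$, each internal segment must start and end at one of them. The assumption that every terminal of $X$ is visited exactly once enforces that each internal terminal has exactly one in-arc and one out-arc used by $S$, and the subdivision non-terminal vertices (which have no external degree) must also be balanced in the internal part of $S$.

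The first step is a rigidity observation about subdivisions: a non-terminal vertex created by subdividing a listed arc $(p,q)$ has only one in-neighbor and one out-neighbor in $G[X]$, so any use of it forces a complete traversal of the three-arc subdivision path in direction $p \to q$; analogously, a subdivided undirected edge forces a full traversal in one of the two directions. Hence each internal segment descends, at the level of the underlying raw-cover gadget, to a directed walk that respects the orientations listed in the construction, and conversely such a walk lifts to a unique internal segment in $G[X]$.

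The main step is a case distinction on whether any internal segment visits $z^X_{\rm io}$. In the negative case, $z^X_{\rm io}$ is entered and left only by external arcs, and tracing the forced orientations $s^X_{\rm in} \to e \to f \to b$, $(x^X_{\rm in},w_1)$, $(w_9, y^X_{\rm out})$, and $c \to h \to g \to t^X_{\rm out}$ shows that the only way to cover every remaining terminal exactly once is a single segment realizing $R_1$. In the positive case, the arcs $(b,z)$ and $(z,c)$, together with the fact that the two subdivided paths incident to $z^X_{\rm io}$ must be consumed in full whenever one of their subdivision vertices is touched, pin down any $z^X_{\rm io}$-visiting segment to contain $b \to z \to c$; extending uniquely in both directions through $a$ and $d$ to the boundary yields precisely the path $P$. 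Once $P$ is committed, the remaining terminals $s^X_{\rm in}, e, f, w_1, \dots, w_9, h, g, t^X_{\rm out}$ must be covered while avoiding $a, b, c, d$, and the available orientations --- using the bidirected edges $\{f,w_3\}$ and $\{w_7,h\}$ in the backward direction together with the shortcut arcs $(w_1,w_6)$ and $(w_4,w_9)$ --- force exactly the walk $R_2$.

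The main obstacle, and what requires the most bookkeeping, is ruling out ``mixed'' internal segments, for example ones that begin at $x^X_{\rm in}$ and exit through $z^X_{\rm io}$ via an external arc, or ones that traverse a subdivided bidirected edge in both directions within a single visit. Such configurations are excluded by combining the visit-once constraint at each terminal (which bounds the number of segment endpoints incident to a terminal), the minimality of $S$ (which forbids detours that can be shortcut while still covering all terminals), and the asymmetric orientations of the listed arcs, which render most candidate alternative segments infeasible outright.
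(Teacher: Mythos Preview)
Your plan is sound and rests on the same two observations the paper uses: subdivision non-terminals force ``follow-through'' to the next terminal, and degree-two terminals ($e,a,w_2,w_5,w_8,d,g$) are pinned between their only two terminal neighbours. The difference is purely organizational. The paper does not introduce the internal-segment formalism or split on whether $z$ is reached internally; it simply starts tracing at $s_{\rm in}$ (which has a unique internal continuation $s_{\rm in}\to e\to f$) and case-splits at the first genuine branch point, namely $f$, into $f\to b$ (yielding $R_1$ by a forced chain) versus $f\to w_3$ (yielding $R_2$, after which the leftover terminals $x_{\rm in},a,b,z,c,d,y_{\rm out}$ admit only the traversal $P$). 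Your split on $z$ is equivalent to theirs at $f$, but it costs you an extra step: in your negative case you list ``$s^X_{\rm in}\to e\to f\to b$'' among the forced orientations, yet $f\to b$ is \emph{not} forced a priori --- you still owe the argument that $f\to w_3$ would eventually require the segment to pass through $z$ (since covering $b$ afterwards leaves only the exit $(b,z)$), contradicting the case hypothesis. This is easy to supply, but the paper's choice to branch at $f$ rather than at $z$ sidesteps it entirely. A minor wording issue: in $R_2$ the relevant bidirected edges used ``backward'' are the $w_i$--$w_{i+1}$ pairs, not $\{f,w_3\}$ and $\{w_7,h\}$, which are traversed in their natural direction.
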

    \begin{claimproof}
        Let $S$ be any minimal solution that traverses the terminal vertices of the cover-$z$-gadget $X$ exactly once.
        Note that once $S$ enters a non-terminal vertex that was created by subdividing edges, $S$ has to follow through to the other terminal, as the initial terminal vertex cannot be visited again.
        Terminal vertices $e$, $a$, $w_2$, $w_5$, $w_8$, $d$, $g$ have only two terminal ``neighbors''.
        Hence, they must be visited immediately after their first terminal neighbor is visited.
        Solution traverses from $s_{\rm in}$ through $e$ to $f$ where it has two choices.
        If $S$ continues from $f$ to $b$, then it is forced through $a$ to $x_{\rm in}$, by the above argument, and then
        the only way to continue is to $w_1$ and through $w_2$ to $w_3$.
        After that the only option is $w_4$, forced to visit $w_5$ continues to $w_6$ and $w_7$.
        Then $S$ must visit $w_8$, continues to $w_9$ and $y_{\rm out}$.
        Again, $S$ must visit $d$ and continues to $c,h,g,t_{\rm out}$ as no other options remain.
        In the other case, where $S$ initially goes from $s_{\rm in}$ through $e$ to $f$ and continues to $w_3$ one observes that then it is forced through $w_2$ to $w_1$ where no other option remains than to go to $w_6$, by similar logic we have $w_5,w_4,w_9,w_8,w_7,h$ after which, again, the only option is to continue through $g$ to $t_{\rm out}$.
        This traversal leaves out $x_{\rm in},a,b,z,c,d,y_{\rm out}$ where the only way to visit these terminals by $S$ is exactly in this order.
    \end{claimproof}

    Now we present the construction, see \Cref{fig:whard-construction}; recall that we are reducing from an instance $(G,c,k)$ of CDS and we aim to create an instance $(G',\WP,\wFn,\cFn,\budget)$ of \DWRPshort.
    Let $\mathcal S$ be a force-$(k+1)$-traversals-gadget that we call \emph{selection-gadget} and let $\mathcal A$ be a force-$(2|E(G)|+|V(G)|+1)$-traversals-gadget.
    We create an auxiliary non-terminal vertex named $x$ and add an arc $(x,u^{\mathcal A}_{\rm in})$.
    Let $x$ together with $\mathcal A$ be called \emph{auxiliary-gadget}.
    We mutually connect the selection-gadget and the auxiliary-gadget by creating terminal vertices $x_1,x_2$, non-terminal vertex $x_3$, and arcs $(u^{\mathcal A}_{\rm out},x_2)$, $(x_2,x_3)$, $(x_3,u^{\mathcal S}_{\rm in})$, $(u^{\mathcal S}_{\rm out}, x_1)$, $(x_1,x)$.

    For every vertex $u \in V(G)$ we add vertices $z_u$, $c_u$, and $d_u$ to $V(G')$, then we create a cover-$z_u$-gadget denoted $\mathcal E^u$.
    We add arcs $(u^{\mathcal S}_{\rm out}, x^{\mathcal E^u}_{\rm in})$, $(y^{\mathcal E^u}_{\rm out}, c_u)$, $(c_u,d_u)$, $(c_u,u^{\mathcal S}_{\rm in})$, $(u^{\mathcal A}_{\rm out}, s^{\mathcal E^u}_{\rm in})$, $(t^{\mathcal E^u}_{\rm out}, x)$ to $E(G')$, and we set the capacity of $(c_u,d_u)$ to $c(u)$ (the capacity of $u$).
    For every arc $(u,v)$ such that $\{u,v\} \in E(G)$ we create a cover-$z_v$-gadget denoted $\mathcal E^{u,v}$.
    Then we add arcs $(d_u,x^{\mathcal E^{u,v}}_{\rm in})$, $(y^{\mathcal E^{u,v}}_{\rm out}, c_u)$, $(u^{\mathcal A}_{\rm out}, s^{\mathcal E^{u,v}}_{\rm in})$, $(t^{\mathcal E^{u,v}}_{\rm out}, x)$.
    Note that in the above we created two gadgets, $\mathcal E^{u,v}$ and $\mathcal E^{v,u}$, for each $\{u,v\} \in E(G)$.
    Finally, we set the budget $\budget=132|E(G)|+69|V(G)|+3k+12$ which concludes construction of the \DWRPshort instance.

    \begin{figure}[tbh]
       \centering
        \newcommand{\convexpath}[2]{
	[
	create hullnodes/.code={
		\global\edef\namelist{#1}
		\foreach [count=\counter] \nodename in \namelist {
			\global\edef\numberofnodes{\counter}
			\node at (\nodename) [draw=none,name=hullnode\counter] {};
		}
		\node at (hullnode\numberofnodes) [name=hullnode0,draw=none] {};
		\pgfmathtruncatemacro\lastnumber{\numberofnodes+1}
		\node at (hullnode1) [name=hullnode\lastnumber,draw=none] {};
	},
	create hullnodes
	]
	($(hullnode1)!#2!-90:(hullnode0)$)
	\foreach [
	evaluate=\currentnode as \previousnode using \currentnode-1,
	evaluate=\currentnode as \nextnode using \currentnode+1
	] \currentnode in {1,...,\numberofnodes} {
		let
		\p1 = ($(hullnode\currentnode)!#2!-90:(hullnode\previousnode)$),
		\p2 = ($(hullnode\currentnode)!#2!90:(hullnode\nextnode)$),
		\p3 = ($(\p1) - (hullnode\currentnode)$),
		\n1 = {atan2(\y3,\x3)},
		\p4 = ($(\p2) - (hullnode\currentnode)$),
		\n2 = {atan2(\y4,\x4)},
		\n{delta} = {-Mod(\n1-\n2,360)}
		in
		{-- (\p1) arc[start angle=\n1, delta angle=\n{delta}, radius=#2] -- (\p2)}
	}
	-- cycle
}

\begin{tikzpicture}[def]
    \newcommand{\traversalgadget}[2]{
        \def\name{#1}
        \def\traversals{#2}
        \node[solid] (uin_\name) at (0,1) {};
        \node[solid] (w) at (2,1) {};
        \node[solid] (uout_\name) at (3,1) {};
        \foreach \y in {1,...,\traversals}{
            \node[solid,fill] (v\y) at (1,2*\y/\traversals-1/\traversals) {};
            \draw (uin_\name) -- (v\y);
            \draw (v\y) -- (w);
        }
        \draw (w) --node[hide,above]{$\traversals$} (uout_\name);
    }
    \traversalgadget{selection}{3}
    \node[hide] at (-.2,1.4) {$u^\mathcal S_{\rm in}$};
    \node[hide] at (3,1.4) {$u^\mathcal S_{\rm out}$};
    \begin{pgfonlayer}{bbg}
     \fill[black!8, rounded corners] (-1.2,0) rectangle (3.4,2);
    \end{pgfonlayer}
    \node[hide] at (-0.8,0.6) {$\mathcal S$};
    \begin{scope}[xshift=6cm,densely dotted]
        \node[solid,label=above:$x$] (xx) at (-1,1) {};
        \traversalgadget{auxiliary}{6}
        \node[hide] at (0,1.4) {$u^\mathcal A_{\rm in}$};
        \node[hide] at (3.2,1.4) {$u^\mathcal A_{\rm out}$};
    \end{scope}
    \begin{pgfonlayer}{bbg}
     \fill[black!8, rounded corners] (4.5,0) rectangle (10,2);
    \end{pgfonlayer}
    \node[hide] at (9.6,0.6) {$\mathcal A$};
    \node[fill,label=right:$x_2$] (x2) at (8.2,2.5) {};
    \node[    ,label=left: $x_3$] (x3) at (0.8,2.5) {};
    \node[fill,label=above:$x_1$] (x1) at (4,1.4) {};
    \draw (uout_selection) -- (x1);
    \draw (x1) -- (xx);
    \draw[densely dotted] (xx) -- (uin_auxiliary);
    \draw (uout_auxiliary) -- (x2);
    \draw (x2) -- (x3);
    \draw (x3) -- (uin_selection);
    \newcommand{\connection}[4]{
        \pgfmathsetmacro\hangle{#3+20}
        \pgfmathsetmacro\langle{#3-20}
        \node[fill] (s#1) at ($(#2)+(\hangle:1.5*#4)$) {};
        \node[fill] (t#1) at ($(#2)+(\langle:1.5*#4)$) {};
        \node[fill] (x#1) at ($(s#1)+(#3:1.*#4)$) {};
        \node[fill] (y#1) at ($(t#1)+(#3:1*#4)$) {};
        \begin{scope}[opacity=.15,transparency group]
            \fill[black] \convexpath{s#1,x#1,y#1,t#1,#2}{5pt};
        \end{scope}
        \begin{pgfonlayer}{bg}    
            \draw[densely dotted] (uout_auxiliary) to (s#1);
            \draw[densely dotted] (t#1) to (xx);
        \end{pgfonlayer}
    }
    \newcommand{\vertexgadget}[1]{
        \node[fill,label=below:$z_#1$] (z#1) at (2,0) {};
        \connection{#1}{z#1}{90}{1.0}
        \node (c#1) at ($(y#1)+(5:1)$) {};
        \node (d#1) at ($(c#1)+(-90:.8)$) {};
        \draw (uout_selection) to[out=-90,in=110] (x#1);
        \draw (y#1) -- (c#1);
        \draw (c#1) -- (d#1);
        \draw (c#1) to[out=140,in=-90] (uin_selection);
    }
    \begin{scope}[xshift=-2cm,yshift=-4cm]
        \vertexgadget{u}
        \node[hide,label=left:$x_{\rm in}$] at (xu){};
        \node[hide,label=left:$y_{\rm out}$] at (yu){};
        \node[hide,label=left:$s_{\rm in}$] at (su){};
        \node[hide,label=left:$t_{\rm out}$] at (tu){};
        \node[hide,label=above right:$c_u$] at (cu){};
        \node[hide,label=above right:$d_u$] at (du){};
        \node[hide] at ($(zu)+(90:.7)$){$\mathcal E^u$};
    \end{scope}
    \begin{scope}[xshift=3cm,yshift=-4cm]
        \vertexgadget{v}
        \node[hide] at ($(zv)+(90:1)$){$\mathcal E^v$};
    \end{scope}
    \begin{scope}[xshift=6cm,yshift=-4cm]
        \vertexgadget{w}
        \node[hide] at ($(zw)+(90:1)$){$\mathcal E^w$};
    \end{scope}
    \connection{vu}{zu}{22}{.7}
    \connection{uv}{zv}{158}{.7}
    \node[hide] at ($(zu)+(25:1.3)$){$\mathcal E^{vu}$};
    \node[hide] at ($(zv)+(155:1.3)$){$\mathcal E^{uv}$};
    \draw (du) -- (xuv);
    \draw (yuv) -- (cu);
    \draw (dv) -- (xvu);
    \draw (yvu) -- (cv);
\end{tikzpicture}
        \caption{An example of a complete constructed \DWRPshort instance for graph $G=(\{u,v,w\}, \{\{u,v\}\})$, $k=2$, (capacities omitted). Top half contains the selection- and auxiliary-gadget along with their connections through $x_1,x_2,x_3$. Bottom half corresponds to the graph $G$. Gray boxes hide inner parts of the cover-gadgets, see \cref{fig:edge-gadget-adapted}. Dotted arrows signify arcs that make-up the auxiliary-gadget or connect cover-gadgets to the auxiliary-gadget.}%
        \label{fig:whard-construction}
    \end{figure}

    \begin{claim}\label{claim:treedepth_forwards}
        If $(G,c,k)$ is a yes-instance of CDS, then the instance $(G',\WP,\wFn,\cFn,\budget)$ of \DWRP with budget $\budget = 132|E(G)|+69|V(G)|+3k+12$ has a solution.
    \end{claim}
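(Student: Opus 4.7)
The plan is to exhibit an explicit closed walk $C$ in $G'$ of total weight exactly $\budget=132|E(G)|+69|V(G)|+3k+12$, and verify it is a feasible solution to \DWRPshort. Starting from a capacitated dominating set $S$ with $|S|\le k$ and a domination mapping $f\colon V(G)\setminus S\to S$, I will first pad $S$ (by adding arbitrary vertices and shrinking $f^{-1}$ accordingly) so that $|S|=k$ exactly. The walk $C$ will decompose into two stages, one driven by each of the two force-traversals gadgets $\mathcal S$ and $\mathcal A$.

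\textbf{The two stages.} In Stage~1, I would enumerate $S=\{u_1,\ldots,u_k\}$ and, for each $u_i$, take one fresh pass through $\mathcal S$ to $u^{\mathcal S}_{\mathrm{out}}$, cross to $x^{\mathcal E^{u_i}}_{\mathrm{in}}$, traverse the $P$-path of $\mathcal E^{u_i}$ (thereby covering $z_{u_i}$) to $y^{\mathcal E^{u_i}}_{\mathrm{out}}$, reach $c_{u_i}$, and then for each $v\in f^{-1}(u_i)$ dip through the arc $(c_{u_i},d_{u_i})$ into $\mathcal E^{u_i,v}$, traverse its $P$-path (covering $z_v$), and return to $c_{u_i}$, before re-entering $\mathcal S$ via $(c_{u_i},u^{\mathcal S}_{\mathrm{in}})$. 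After processing all of $S$, one further $(k{+}1)$-st pass through $\mathcal S$ followed by $u^{\mathcal S}_{\mathrm{out}}\to x_1\to x$ concludes Stage~1. Stage~2 then loops through the $|V(G)|+2|E(G)|$ cover-gadgets in turn: each loop takes one fresh pass through $\mathcal A$ from $x$, crosses to $s^{\mathcal E}_{\mathrm{in}}$, follows the $R_2$-path of $\mathcal E$ if $\mathcal E$ was used in Stage~1 and the $R_1$-path otherwise, and returns to $x$ via $t^{\mathcal E}_{\mathrm{out}}$. A final pass through $\mathcal A$ and the sequence $u^{\mathcal A}_{\mathrm{out}}\to x_2\to x_3\to u^{\mathcal S}_{\mathrm{in}}$ close the walk at its starting vertex.

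\textbf{Verifications and the main obstacle.} Three conditions remain to check. First, \emph{waypoint coverage}: the $k+1$ and $2|E(G)|+|V(G)|+1$ passes through $\mathcal S$ and $\mathcal A$ exhaust their terminal sets; inside each cover-gadget $\mathcal E$ the union $P\cup R_2$ (when $\mathcal E$ was visited in Stage~1) or $R_1$ alone (otherwise) covers every internal terminal of $\mathcal E$ except possibly the shared $z$-vertex, and every $z_v$ is in fact visited by some Stage-1 $P$-pass---through $\mathcal E^v$ when $v\in S$, or through $\mathcal E^{f(v),v}$ when $v\notin S$, which is precisely the place where the fact that $S$ dominates $V(G)$ is used. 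Second, \emph{capacities}: the only nontrivial capacity is $\cFn((c_u,d_u))=c(u)$, used exactly $|f^{-1}(u)|\le c(u)$ times---this is where the capacitated-domination structure enters---while the outgoing-gate arcs inside $\mathcal S$ and $\mathcal A$ are used exactly $k+1$ and $2|E(G)|+|V(G)|+1$ times, matching their prescribed capacities. Third, \emph{cost}: each pass through $\mathcal S$ or $\mathcal A$ contributes $3$ arcs, while after the subdivision step the $P$-, $R_1$-, $R_2$-paths contribute $18$, $60$, $42$ arcs respectively, so every cover-gadget contributes exactly $60$ arcs to $C$; summing these with the handful of linking and stage-transition arcs telescopes to exactly $\budget$. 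The main routine obstacle is this cost bookkeeping; the only genuinely conceptual point is recognising that the capacity on $(c_u,d_u)$ is exactly what enforces the capacitated-domination constraint.
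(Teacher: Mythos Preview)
Your proposal is correct and follows essentially the same construction as the paper: build the walk from a ``selection'' phase driven by $\mathcal S$ (visiting the $P$-path of $\mathcal E^{u}$ and of each $\mathcal E^{u,v}$ with $v\in f^{-1}(u)$) and an ``auxiliary'' phase driven by $\mathcal A$ (visiting $R_1$ or $R_2$ in every cover-gadget as appropriate), glued via $x_1,x_2,x_3$. Two minor differences worth noting: you explicitly pad $S$ to size exactly $k$, which the paper glosses over; and for the cost the paper uses the cleaner global observation that any two consecutive terminals along the walk are at distance exactly $3$, so the total cost is simply $3$ times the number of terminals, whereas you count arcs per path ($18$, $42$, $60$) and then have to aggregate the linking arcs by hand---both lead to the same $\budget$, but the paper's accounting avoids the case split on whether a cover-gadget was visited in Stage~1.
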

    \begin{claimproof}
        Let $S$ ($|S| \le k$) be the set of vertices in the solution of the instance $(G,c,k)$ of CDS and $f \colon (V(G) \setminus S) \to S$ be the domination mapping.
        Solution to \textsc{Directed Waypoint Routing} is presented as two vertex-disjoint walks, one starting in the selection-gadget and one in the auxiliary-gadget.
        These two walks are then joined via the auxiliary vertices $x_1$, $x_2$, and $x_3$.

        First walk:
        For each $s \in S$ we traverse the selection-gadget once, covering one of its terminals, and then through the $P$ path of $\mathcal E^s$, covering $z_s$, then to $c_s$.
        For each $x \in f^{-1}(s)$ the walk continues from $c_s$ to $d_s$, through the $P$ path in $\mathcal{E}^{s,x}$ covering $z_x$, returning back to $c_s$.
        Finally, the walk jumps back to the selection-gadget.

        Second walk:
        Let $E$ be a cover-$z$-gadget within $G'$.
        The second walk goes through the auxiliary-gadget while covering one of its terminals, then it continues through a path of $E$.
        More precisely, if the path $P$ of $E$ was traversed by the first walk then the second walk goes through $R_2$, otherwise it traverses $R_1$.
        Last, the walk goes to $x$, returning back to the auxiliary-gadget.

        These two walks are joined into a single walk by adding cycle $u^{\mathcal S}_{\rm out}$, $x_1$, $x$, $u^{\mathcal A}_{\rm in}$, $v^{\mathcal A}_{\rm i}$, $w^{\mathcal A}$, $u^{\mathcal A}_{\rm out}$, $x_2$, $x_3$, $u^{\mathcal S}_{\rm in}$, $v^{\mathcal S}_{\rm j}$, $w^{\mathcal S}$, $u^{\mathcal S}_{\rm out}$ for some $i$ and $j$ so that uncovered terminals of the selection-gadget and the auxiliary-gadget are traversed by this cycle.

        To compute the walk weight, note that every terminal is traversed exactly once and that distance of every pair of consecutive terminals along the walk is 3, hence, we used the budget of $3 \cdot T$ where $T$ is the number of terminals.
        There are $k+1$ terminals in the selection gadget, $2|E(G)|+|V(G)|+1$ terminals in the auxiliary-gadget, terminals $x_1$ and $x_2$, then in every cover-$z$-gadget we have 21 terminals (note that $z$ is not counted here), and last we have terminals $z_u$ for every $u \in V(G)$.
        In total this is $3 \cdot (k+1+2|E(G)|+|V(G)|+1+2+21\cdot(2|E(G)|+|V(G)|)+|V(G)|)$ which simplifies to $132|E(G)|+69|V(G)|+3k+12$ which is equal to the budget $\budget$.
    \end{claimproof}

    \begin{claim}\label{claim:treedepth_backwards}
        If \DWRPshort instance $(G',\WP,\wFn,\cFn,\budget)$ with budget $\budget = 132|E(G)|+69|V(G)|+3k+12$ is a yes-instance, then the instance $(G,c,k)$ of CDS is a yes-instance.
    \end{claim}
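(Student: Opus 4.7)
The plan is to exploit the tight budget $\budget = 3T$, where $T$ denotes the total number of terminals of $G'$, to force the solution walk $C$ into a rigid structure from which a capacitated dominating set of $G$ of size at most $k$ can be extracted. The first step is to observe that every arc out of a terminal of $G'$ leads to a non-terminal, and that every shortest path from such a non-terminal back to a terminal uses at least one more non-terminal; consequently any closed walk visiting all terminals has length at least $3T$, so under budget $3T$ each terminal is visited \emph{exactly once} and exactly two non-terminals separate every pair of consecutive terminals on $C$. Applying \cref{claim:traversal_gadget_full} to $\mathcal S$ and to $\mathcal A$ then pins down the number of passes through them. Since $x_1$ is a terminal whose only in-arc comes from $u^{\mathcal S}_{\rm out}$, exactly one of the $k{+}1$ passes through $\mathcal S$ is the ``joining'' pass from $x_3$ to $x_1$; the remaining $k$ passes each exit to a distinct terminal of the form $x^{\mathcal E^u}_{\rm in}$. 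I set $S \subseteq V(G)$ to be the resulting set of $u$'s, so $|S|=k$; by \cref{claim:ltwogadget_variant_full}, these are precisely the $u$'s for which $\mathcal E^u$ uses the path $P$, while the remaining $\mathcal E^u$ use $R_1$, and each edge-gadget $\mathcal E^{v,u}$ analogously uses either $R_1$ or both $P$ and $R_2$.

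For each $u \in V(G) \setminus S$ the $R_1$-path used by $\mathcal E^u$ does not cover the shared vertex $z_u$; since $z_u$ is a terminal and must therefore be visited once, some edge-gadget $\mathcal E^{v,u}$ with $\{u,v\} \in E(G)$ must use its $P$-path, and I define $f(u) := v$ arbitrarily among such choices. The \textbf{main obstacle}, which I expect to be the most delicate part of the argument, is showing that the chosen $v$ lies in $S$. Writing $p_v = [v \in S]$ and $r_v$ for the number of neighbors $u'$ such that $\mathcal E^{v,u'}$ uses $P$, a straightforward flow balance at $c_v$ only yields that the number of traversed arcs $c_v \to u^{\mathcal S}_{\rm in}$ equals $p_v$, which does not by itself exclude $p_v = 0$ with $r_v \ge 1$. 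To rule this out I would invoke \emph{connectivity} of the solution walk: if $p_v = 0$ while $r_v \ge 1$, then the used arcs at $c_v$ together with the $P$-subpaths of the involved $\mathcal E^{v,u_i}$ form $r_v$ closed loops of the shape $c_v \to d_v \to x^{\mathcal E^{v,u_i}}_{\rm in} \to \cdots \to y^{\mathcal E^{v,u_i}}_{\rm out} \to c_v$. Since the paths $P$ and $R_2$ are vertex-disjoint inside every cover-gadget, and each terminal $z_{u_i}$ is visited only once so no alternative $P$-path of another gadget attaches there, these loops would constitute a weakly connected component of the used-arc multiset that is disjoint from the rest, contradicting that $C$ is a single closed walk. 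Hence $p_v = 1$ and $v \in S$.

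Finally, each $u' \in f^{-1}(v)$ corresponds to a distinct $\mathcal E^{v,u'}$ whose $P$-path is used, and each such $P$-traversal uses the arc $(c_v, d_v)$ exactly once; therefore $|f^{-1}(v)| \le r_v \le \cFn((c_v, d_v)) = c(v)$, giving a valid capacitated assignment. Combined with $v \in S$ and $\{u,v\} \in E(G)$, this shows that $S$ together with $f$ is a capacitated dominating set of $G$ of size $k$, so $(G, c, k)$ is a yes-instance of CDS.
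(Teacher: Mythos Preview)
Your proof is correct and follows essentially the same approach as the paper's: exploit the tight budget $\budget=3T$ to force each terminal to be visited exactly once, invoke the two gadget claims to pin down the traversal counts and the $R_1$-versus-$(P,R_2)$ dichotomy, read off the set $S$ from the $k$ non-joining exits of the selection gadget, and extract the domination mapping from the edge-gadgets that use their $P$-paths.

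The one place where you are noticeably more careful than the paper is exactly the point you flagged as the ``main obstacle'': showing that whenever $\mathcal E^{v,u}$ uses its $P$-path, the vertex $v$ lies in $S$. The paper handles this by tracing the walk forward from $u^{\mathcal S}_{\rm out}$ and asserting that the resulting forced structure coincides with the forward construction; it does not spell out why a stray loop $c_v \to d_v \to \mathcal E^{v,u}\text{'s }P \to c_v$ with $v \notin S$ cannot occur. Your flow-balance computation (giving that the arc $(c_v,u^{\mathcal S}_{\rm in})$ is used exactly $p_v$ times) together with the connectivity argument (using that $P$ and $R_2$ are vertex-disjoint in every cover-gadget and that each $z_{u_i}$ is visited only once) is a clean and valid way to close this gap.
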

    \begin{claimproof}
        Let us identify partial walks that must be present in any solution and argue that they compose exactly into a walk shown in the proof of \Cref{claim:treedepth_forwards}.
        Observe that the budget is exactly so that every terminal can be assigned budget of 3, and as each pair of terminals is at distance at least 3 we conclude that the solution traverses every terminal exactly once.

        First, by \Cref{claim:traversal_gadget_full} the selection-gadget is traversed exactly $k+1$ times and the auxiliary-gadget is traversed $2|E(G)|+|V(G)|+1$ times.
        As $x_1$ and $x_2$ need to be covered the paths $u^{\mathcal S}_{\rm out}$, $x_1$, $x$, $u^{\mathcal A}_{\rm in}$ and $u^{\mathcal A}_{\rm out}$, $x_2$, $x_3$, $u^{\mathcal S}_{\rm in}$ need to be traversed exactly once.
        These together with one traversal of the selection-gadget and one traversal of the auxiliary-gadget give us the cycle that joined the first and second walk.

        Next, from the vertex $u^{\mathcal S}_{\rm out}$ the remaining $k$ walks must continue, and as all its out-neighbors are terminals the solution continues into $k$ disjoint out-neighbors.
        These out-neighbors belong to some cover-$z_u$-gadgets for a set $u \in S$, consequently $|S|=k$.
        More precisely, these out-neighbors are $x^{\mathcal E^u}_{\rm in}$ for $u \in S$.
        Let $\mathcal E^S= \{ \mathcal E^u \mid u \in S \}$ and similarly for any vertex $w$ let $w^{\mathcal E^S} = \{w^{\mathcal E^u} \mid u \in S\}$.
        As vertices $x^{\mathcal E^S}$ are beginnings of the respective $P$ paths within $\mathcal E^S$ we know by \Cref{claim:ltwogadget_variant_full} that $\mathcal E^S$ are traversed by their $P$ and $R_2$ paths, covering $z_u$ for each $u \in S$ in the process.
        After the traversal of $P$ the walks continue to $c_u$ for each $u \in S$.
        From there each walk can either return back to the selection-gadget or go to $d_u$, but as $(c_u,d_u)$ has capacity $c(u)$ such traversal can happen at most $c(u)$ many times.
        From $d_u$ the walk continues through $\mathcal E^{u,w}$ for any $w \in N(u)$, covering $z_w$ in the process, and returning to $c_u$.

        For every other cover-gadget $\mathcal E^v$ where $v \in V(G) \setminus S$ we know the walks do not enter the gadgets from $u^{\mathcal S}_{\rm out}$ (the only in-neighbor of $x^{\mathcal E^v}_{\rm in}$) so these must be traversed by their $R_1$ paths.
        All of the $R_1$ and $R_2$ traversals of the cover-gadgets can only continue to the auxiliary gadget, and vice-versa, all walks from the auxiliary gadget (except for the connecting cycle) continue only continue to the cover-gadgets paths $R_1$ or $R_2$.

        This forced structure of the solution is identical to the one described in \Cref{claim:treedepth_forwards} and so there is a straight-forward translation back to the capacitated dominating set, namely, the set $S$ is a capacitated dominating set of $G$.
    \end{claimproof}

    \begin{claim}\label{claim:tdtd}
        If $G$ has modulator of size $k$ to treedepth $td$, then $G'$ has modulator of size $7\cdot k + 7$ to treedepth $7\cdot td+72$.
    \end{claim}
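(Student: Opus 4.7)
The plan is to exhibit a modulator $M' \subseteq V(G')$ of cardinality $7k+7$ whose removal leaves a graph of treedepth at most $7\cdot td+72$, constructed by mirroring a depth-$td$ treedepth decomposition of $G-M$ at a ``blob'' level inside $G'$. Let $M'$ contain the seven upper-part vertices $u^{\mathcal S}_{\rm in}, u^{\mathcal S}_{\rm out}, w^{\mathcal S}, u^{\mathcal A}_{\rm in}, u^{\mathcal A}_{\rm out}, w^{\mathcal A}, x$, and for every $u\in M$ the seven hub vertices $c_u, d_u, z_u, y^{\mathcal E^u}_{\rm out}, x^{\mathcal E^u}_{\rm in}, s^{\mathcal E^u}_{\rm in}, t^{\mathcal E^u}_{\rm out}$; then $|M'|=7k+7$.

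After removing $M'$, the remainder splits into three groups: (i) trivial remnants of the upper part (isolated $v^{\mathcal S}_i$, $v^{\mathcal A}_i$, and $x_1$, plus the short subdivided $x_2$--$x_3$ path), all of treedepth at most $2$; (ii) for every $u\in M$ the isolated piece $\mathcal E^u\setminus M'$, and for every edge $\{u,v\}$ of $G$ with $u,v\in M$ the isolated pieces $\mathcal E^{u,v}\setminus\{z_v\}$ and $\mathcal E^{v,u}\setminus\{z_u\}$, each an induced subgraph of a single $74$-vertex cover-gadget and hence of treedepth at most $72$; and (iii) the main subgraph $H'$ that structurally mirrors $G-M$, where each $v\in V(G)\setminus M$ is represented by the blob $F_v:=\{c_v,d_v,z_v\}\cup(\mathcal E^v\setminus\{z_v\})$, each edge $\{u,v\}$ of $G-M$ by the pair of gadget remnants $G_{u,v}:=\mathcal E^{u,v}\setminus\{z_v\}$ and $G_{v,u}:=\mathcal E^{v,u}\setminus\{z_u\}$ bridging $F_u$ and $F_v$ (via $c_u,d_u$ on one side and the subdivision vertex adjacent to $z_v$ on the other, and symmetrically), and each edge $\{u,v\}$ with $u\in M$, $v\notin M$ by a pendant gadget $G_{u,v}$ attached to $F_v$ only through $z_v$ (because $c_u,d_u\in M'$).

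To bound the treedepth of $H'$, take a treedepth decomposition $T_0$ of $G-M$ of depth at most $td$ and expand each node $v$ of $T_0$ into a chain $C_v=(c_v,d_v,z_v,y^{\mathcal E^v}_{\rm out},x^{\mathcal E^v}_{\rm in},s^{\mathcal E^v}_{\rm in},t^{\mathcal E^v}_{\rm out})$ of length $7$ in a new decomposition $T'$. Below the chain $C_v$ hang, as sibling subtrees, the remainder $F_v\setminus C_v$ (a cover-gadget with five interior vertices removed, hence of treedepth at most $72$), the pendant $G_{u,v}$ for every $u\in N_G(v)\cap M$, and the two gadget remnants $G_{u,v},G_{v,u}$ for every edge $\{u,v\}\in E(G-M)$ with $u$ an ancestor of $v$ in $T_0$. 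Because $c_v,d_v,z_v$ are the only vertices of $F_v$ through which external attachments pass, and they all lie in $C_v$, each such pendant subtree is a legitimate descendant of $C_v$, and its intrinsic treedepth is at most $72$. The root-to-leaf path in $T'$ therefore consists of at most $td$ chains of length $7$ followed by one pendant subtree, giving depth at most $7 td+72$.

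The main obstacle will be the edge-by-edge verification of the ancestor--descendant property in $T'$: one needs to check that edges of $G_{u,v}$ reaching $c_u,d_u$ (in $C_u$, an ancestor of $C_v$), edges reaching $z_v$ (in $C_v$), and edges internal to $F_v\setminus C_v$ (which may become disconnected after hub removal, so several sibling subtrees are needed) all respect the rooted-tree structure; and to establish the intrinsic bound $\le 72$ on the treedepth of a single cover-gadget (and thus any of its induced subgraphs), which amounts to exhibiting one explicit treedepth decomposition of the fixed $74$-vertex cover-gadget together with its attached $c,d$-vertices.
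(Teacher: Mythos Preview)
Your approach is essentially the same as the paper's: both choose the identical modulator $M'$ (the seven non-terminal upper-part vertices together with the seven hub vertices $z_u,c_u,d_u,s^{\mathcal E^u}_{\rm in},x^{\mathcal E^u}_{\rm in},y^{\mathcal E^u}_{\rm out},t^{\mathcal E^u}_{\rm out}$ for each $u\in M$), and both build the elimination forest of $G'-M'$ by expanding every vertex of a depth-$td$ forest for $G-M$ into a seven-vertex chain and hanging the constant-size gadget remainders beneath.

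There is one omission in your case split. For an edge $\{u,v\}$ with $u\in M$ and $v\notin M$ you place only the pendant $G_{u,v}=\mathcal E^{u,v}\setminus\{z_v\}$, but the construction also creates the symmetric gadget $\mathcal E^{v,u}$; the piece $G_{v,u}=\mathcal E^{v,u}\setminus\{z_u\}$ survives in $G'-M'$ and is attached to $F_v$ through $c_v,d_v$ (its remaining external neighbours $z_u,u^{\mathcal A}_{\rm out},x$ all lie in $M'$). You must hang $G_{v,u}$ below $C_v$ as well; this changes nothing in the depth bound. A second minor point: since you count the cover-gadget as having $74$ vertices, the assertion ``hence of treedepth at most $72$'' for its induced subgraphs does not follow from cardinality alone; either adjust the constant or supply the explicit depth-$72$ decomposition you allude to at the end (the paper sidesteps this by simply putting the whole gadget on a path and using its vertex count as the bound).
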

    \begin{claimproof}
        Recall that $G$ having treedepth bounded by $td$ is witnessed by an \emph{elimination forest} which is a family of rooted trees of height at most $td$ where the trees are over vertices $V(G)$ but their edges are unrelated to $E(G)$; however, they satisfy a single property -- for every edge $uv \in E(G)$ the vertices $u$ and $v$ are in the forest in an ancestor-descendant relationship; we call such edges \emph{satisfied}.
        For example, if we place a set of vertices $V_1$ on path rooted in one of its endpoints, then all edges in $V_1 \times V_1$ have their endpoints in an ancestor-descendant relationship, however, the height of such a tree is $|V_1|$.

        To show the claim we will proceed as follows.
        We take the set of stars created by removing the modulator $M$ from $G$, and create an elimination forest where $M$ forms a path from the root, and the stars are attached below -- this naturally satisfies all the edges of $G$.
        Then, we replace each vertex of the graph with a constant number of vertices of $G'$, placing the new set on a path in-place of the original vertices, and arriving at an elimination forest of $G'$.
        As the sets are paths, they satisfy all the edges within each path separately; then we argue that all edges of $G'$ are still in the ancestor-descendant relationship.
        We then observe that the vertices $M'$ that replaced the vertices of $M$ form a new path from the root and contains $7\cdot k+7$ vertices; and whose removal creates an elimination forest of constant depth.

        Consider $M$ to be the modulator of size $k$ and $\mathcal F_o$ to an elimination forest that consists of trees $T_1,\dots,T_\ell$ and witnesses that the treedepth of $G \setminus M$ is at most $td$.
        Let $\mathcal F$ be an elimination forest that contains a single tree with $M$ being its path from the root and with trees $T_1,\dots,T_\ell$ of $\mathcal F_o$ attached below $M$.
        Observe that the height of $\mathcal F$ is $k+td$.
        We aim to create an elimination forest $\mathcal F'$ of $G'$ such that when a path from the root $M'$ of small size is removed, we get an elimination forest $\mathcal F'_o$ of the desired size.

        The elimination forest $\mathcal F'$ consists of a single tree $T'$ (see~\Cref{fig:treedepth-hard-overview}).
        The tree $T'$ contains a path $R$ from the root containing vertices $x$, $u^{\mathcal A}_{\rm in}$, $u^{\mathcal A}_{\rm out}$, $w^{\mathcal A}$, $u^{\mathcal S}_{\rm in}$, $u^{\mathcal S}_{\rm out}$, $w^{\mathcal S}$, i.e., all non-terminal vertices of the selection-gadget and the auxiliary-gadget.
        Below $R$ we attach a path on vertices $x_1$, $x_2$, and $x_3$.
        We also attach below $R$ (independently) each terminal vertex of the selection-gadget and the auxiliary-gadget.
        Next, below $R$ we attach the tree $T_i \in \mathcal F$ while replacing each vertex $u \in T_i$ (where $u \in V(G)$) by a path $P_u$ containing vertices $z_u$, $c_u$, $d_u$, $s^{\mathcal E^u}_{\rm in}$, $x^{\mathcal E^u}_{\rm in}$, $y^{\mathcal E^u}_{\rm out}$, $t^{\mathcal E^u}_{\rm out}$.
        Below each $P_u$ we attach all vertices of $\mathcal E^u$ that are not in the decomposition yet.
        Recall that the elimination forest $\mathcal F$ has the property that for every edge $\{u,v\} \in E(G)$ either $u$ is the ancestor of $v$ or $v$ is the ancestor of $u$ in $\mathcal F$.
        So for every edge $\{u,v\} \in E(G)$ we can find a leaf $\ell_{u,v}$ of $\mathcal F$ such that the path from $\ell_{u,v}$ to the root contains both $u$ and $v$.
        As the last step, for every edge $\{u,v\} \in E(G)$ we attach to the bottommost vertex of $P_{\ell_{u,v}}$ a path containing all vertices of $\mathcal E^{u,v}$.

        To show that $\mathcal F'$ is an elimination forest we need to show that all the edges of $G'$ are in an ancestor-descendant relationship within $\mathcal F'$ -- let us call such edges \emph{resolved}.
        Note that if vertices are attached as a path in $\mathcal F'$ then all edges among these vertices are resolved.
        All edges incident to the 7 vertices in $R$ are resolved as these vertices are placed as ancestors of all the other vertices of $G'$.
        Placing $x_1,x_2,x_3$ below $R$ resolved all edges incident to them because they are adjacent only to $R$.
        Attaching all terminal vertices of the selection-gadget and the auxiliary gadget below $R$ resolves all their incident edges as they are also adjacent only to $R$.
        For any vertex $u \in V(G)$ the path $P_u$ contains all vertices that have edges connecting to $G \setminus (V(\mathcal E^u) \cup R)$.
        By attaching $\mathcal E^u$ as a path below $P_u$ we resolved all edges between two vertices of $P_u \cup \mathcal E^u$.
        Last, all the edges within or incident to $\mathcal E^{u,v}$ for any ${u,v} \in E(G)$ are resolved by placing the vertices of $\mathcal E^{u,v}$ below $P_{\ell_{u,v}}$ as $\mathcal E^{u,v}$ has edges only to $c_u$, $d_u$, and $z_v$ which are all ancestors of $\ell_{u,v}$ in $\mathcal F'$.

        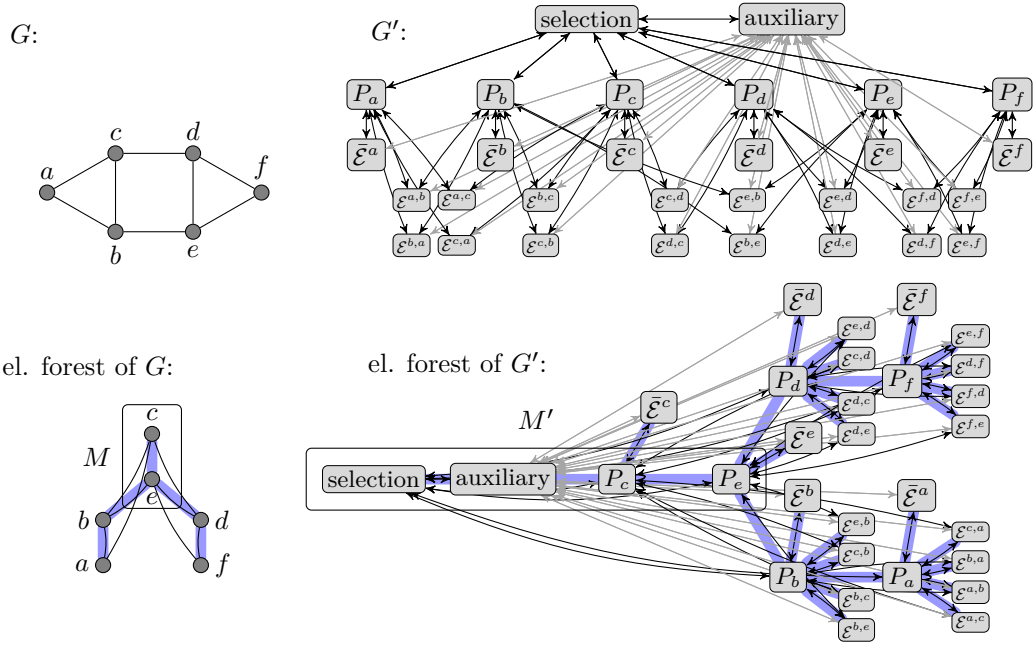
\begin{figure}[tbh]
            \centering
            \begin{tikzpicture}
    \begin{scope}[shift={(-6,2.2)},scale=.6,every node/.style={draw,fill=gray,circle,inner sep=2pt},every label/.style={rectangle,hide}]
        \node[hide] at (-3,3.5) {$G$:};
        \begin{scope}[shift={(-1.5,0)}]
            \node[label={$a$}] (a) at (180:1) {};
            \node[label=-90:{$b$}] (b) at (300:1) {};
            \node[label={$c$}] (c) at (420:1) {};
        \end{scope}
        \begin{scope}[shift={(+1.2,0)}]
            \node[label={$d$}] (d) at (120:1) {};
            \node[label=-90:{$e$}] (e) at (240:1) {};
            \node[label={$f$}] (f) at (360:1) {};
        \end{scope}
        \foreach \f/\t in {a/b,b/c,a/c,b/e,c/d,e/d,d/f,e/f}{
            \draw (\f) -- (\t);
        }
    \end{scope}
    \begin{scope}[shift={(-6,-2.2)},scale=.6,every node/.style={draw,fill=gray,circle,inner sep=2pt},every label/.style={rectangle,hide}]
        \node[hide] at (-1.6,3.5) {el. forest of $G$:};
        \begin{scope}[shift={(-.2,2)}]
            \node[label=90:{$c$}] (c) at (0,0) {};
            \node[label=-90:{$e$}] (e) at ($(c)+(270:1)$) {};
            \node[label=180:{$b$}] (b) at ($(e)+(220:1.4)$) {};
            \node[label=180:{$a$}] (a) at ($(b)+(270:1)$) {};
            \node[label=0:{$d$}] (d) at ($(e)+(320:1.4)$) {};
            \node[label=0:{$f$}] (f) at ($(d)+(270:1)$) {};
        \end{scope}
        \begin{pgfonlayer}{bg}
            \foreach \f/\t in {c/e,e/b,b/a,e/d,d/f}{
                \draw[line width=4pt,line cap=round,blue!40] (\f) -- (\t);
            }
        \end{pgfonlayer}
        \foreach \f/\t in {a/b,b/c,a/c,b/e,c/d,e/d,d/f,e/f}{
            \draw (\f) to[bend right=10] (\t);
        }
    \end{scope}
    \tikzstyle{cbox} = [fill=gray!30,draw,very thin,font=\small,rounded corners=2pt,inner sep=2pt,rectangle,opacity=1]
    \begin{scope}[def]
        \node[hide] at (-3,4.35) {$G'$:};
        \node[cbox] (selection) at (-0.4,4.5) {selection};
        \node[cbox] (auxiliary) at (2.3,4.5) {auxiliary};
        \draw[def,{stealth'}-{stealth'}] (selection) -- (auxiliary);
        \foreach[count=\i] \x in {a,b,c,d,e,f}{
            \node[cbox] (vertex\x) at (-5+1.7*\i,3.5) {$P_\x$};
            \node[cbox] (gadget\x) at (-5+1.7*\i,2.7) {$\bar{\mathcal E}^\x$};
            \draw[def] (selection) -- (vertex\x);
            \draw[def] (vertex\x) -- (selection);
            \begin{scope}[def,gray!70]
                \draw (auxiliary) to (gadget\x);
                \draw (gadget\x) to (auxiliary);
            \end{scope}
            \draw[def,{stealth'}-{stealth'}] (gadget\x) -- (vertex\x);
        }
        \foreach \f/\t in {b/a,c/b,c/a,d/c,b/e,d/e,d/f,e/f}{
            \node[cbox,scale=.7] (edge1) at ($(vertex\f)!.65!(vertex\t)+(0,-2.0)$) {$\mathcal E^{\f,\t}$};
            \node[cbox,scale=.7] (edge2) at ($(vertex\t)!.35!(vertex\f)+(0,-1.4)$) {$\mathcal E^{\t,\f}$};
            \begin{pgfonlayer}{bg}
                \begin{scope}[def,{stealth'}-{stealth'}]
                    \draw (vertex\f) edge[bend left=5] (edge1);
                    \draw (edge1) edge[bend left=5] (vertex\t);
                    \draw (vertex\t) edge[bend left=5] (edge2);
                    \draw (edge2) edge[bend left=5] (vertex\f);
                \end{scope}
                \begin{scope}[def,gray!70]
                    \draw (auxiliary) to (edge1);
                    \draw (edge1) to (auxiliary);
                    \draw (auxiliary) to (edge2);
                    \draw (edge2) to (auxiliary);
                \end{scope}
            \end{pgfonlayer}
        }
        \node[fit=(c)(e),draw,inner sep=8pt,rectangle,rounded corners=2pt,label={left:$M$}] {};
    \end{scope}
    \begin{scope}[def,shift={(0,-4.3)}]
        \node[hide] at (-2.1,4.25) {el. forest of $G'$:};
        \node[cbox] (selection) at (-3.2,2.7) {selection};
        \node[cbox] (auxiliary) at ($(selection)+(0:1.7)$) {auxiliary};
        \draw[def,{stealth'}-{stealth'}] (selection) -- (auxiliary);
        \foreach \x/\d/\p in {c/0/auxiliary,e/0/c,b/-60/e,a/0/b,d/60/e,f/0/d}{
            \node[cbox] (\x) at ($(\p)+(\d:1.5)$) {$P_\x$};
            \begin{pgfonlayer}{bg}
                \draw[def,{stealth'}-{stealth'}] (selection) to[bend right=10] (\x);
            \end{pgfonlayer}
        }
        \foreach \x/\d in {c/60,e/30,b/80,a/80,d/80,f/80}{
            \node[cbox] (gadget\x) at ($(\x)+(\d:1.1)$) {$\bar{\mathcal E}^\x$};
            \begin{pgfonlayer}{bg}
                \draw[def,{stealth'}-{stealth'}] (\x) to (gadget\x);
                \begin{scope}[def,gray!70]
                    \draw (auxiliary) to (gadget\x);
                    \draw (gadget\x) to (auxiliary);
                \end{scope}
            \end{pgfonlayer}
        }
        \begin{pgfonlayer}{bbg}
            \foreach \f/\t in {selection/auxiliary,auxiliary/c,c/e,e/d,d/f,e/b,b/a}{
                \draw[-,line width=4pt,line cap=round,blue!40] (\f) -- (\t);
            }
            \foreach \f in {c,e,d,f,b,a}{
                \draw[-,line width=4pt,line cap=round,blue!40] (\f) -- (gadget\f);
            }
        \end{pgfonlayer}
        \foreach \f/\t/\o in {b/a/2,c/b/3,c/a/6,c/d/3,e/b/7,e/d/7,d/f/2,e/f/6}{
            \node[cbox,scale=.7] (edge1) at ($(\t)+(.9,+\o*.1)$) {$\mathcal E^{\f,\t}$};
            \node[cbox,scale=.7] (edge2) at ($(\t)+(.9,-\o*.1)$) {$\mathcal E^{\t,\f}$};
            \begin{pgfonlayer}{bbg}
                \draw[-,line width=4pt,line cap=round,blue!40] (\t) -- (edge1);
                \draw[-,line width=4pt,line cap=round,blue!40] (\t) -- (edge2);
            \end{pgfonlayer}
            \begin{pgfonlayer}{bg}
                \begin{scope}[def,{stealth'}-{stealth'}]
                    \draw (\f) edge[bend left=5] (edge1);
                    \draw (edge1) edge[bend left=5] (\t);
                    \draw (\t) edge[bend left=5] (edge2);
                    \draw (edge2) edge[bend left=5] (\f);
                \end{scope}
                \begin{scope}[def,gray!70]
                    \draw (auxiliary) to (edge1);
                    \draw (edge1) to (auxiliary);
                    \draw (auxiliary) to (edge2);
                    \draw (edge2) to (auxiliary);
                \end{scope}
            \end{pgfonlayer}
        }
        \node[fit=(selection)(e),draw,inner sep=6pt,rounded corners=2pt,rectangle,label={above:$M'$}] {};
    \end{scope}
\end{tikzpicture}
            \caption{
                An example input graph $G$ with a respective result of the reduction $G'$ depicted as a set of simplified gadgets and paths of \Cref{claim:tdtd} together with their connections.
                We use $\bar{\mathcal E}^x$ for $\mathcal E^x \setminus P_x$.
                Bottom half: elimination forests of $G$ and $G'$; forest for $G'$ is rotated to preserve space.
            }%
            \label{fig:treedepth-hard-overview}
        \end{figure}

        Now, let $M'=R \cup \bigcup_{u \in M} P_u$ and let $\mathcal F'_o$ be $\mathcal F'$ with $M'$ removed.
        By definition $|M'|=|R|+\sum_{u \in M} |P_u|=7 + |M| \cdot 7$.
        As $M$ was a path from the root of the tree in $\mathcal F_o$ we observe that the construction replaced each vertex of $u \in M$ by $P_u$ which is a path, hence, their concatenation along with $R$ creates a path $M'$ from the root of $\mathcal F'_o$.
        Hence, removing $M'$ from $\mathcal F'$ results in a forest where each tree is rooted in its vertex that lied below $M'$ in $\mathcal F'$.
        We observe that the trees of $\mathcal F'_o$ consists of:
        \begin{itemize}
            \item $x_1,x_2,x_3$ of height 3,
            \item terminal vertices of the selection-gadget and the auxiliary-gadget of height 1,
            \item $\bar{\mathcal E}^u=\mathcal E^u \setminus P_u$ for each $u \in M$ of height $|\mathcal E^u|-|P_u|=72-7=65$,
            \item trees $T'_1,\dots,T'_{\ell}$ corresponding to $T_1,\dots,T_\ell$ in $\mathcal F_o$, height of which we argue below.
        \end{itemize}
        The trees $T_i$ have height at most $td$.
        Each vertex $u \in V(T_i)$ was replaced with $P_u$ to create $T'_i$, multiplying its height by 7.
        Moreover, for each $u \in V(T_i)$ we attached to $P_u$ vertices in $V(\mathcal E^u) \setminus P_u$, which is of size 65 (as computed above).
        Independently of that, for each edge $uv$ we attached entire $\mathcal E^{uv}$ below either $P_u$ or $P_v$, whichever was lower in the tree.
        As each $\mathcal E^{uv}$ is added independently, their addition increases the height of $T'_i$ at most once.
        As $|\mathcal E^{uv}|=72$ we conclude that each $T'_i$ has height at most $7\cdot td+72$.
    \end{claimproof}
    The above claim together with the fact that disjoint union of stars has treedepth 2 gives us the following corollary.
    \begin{corollary}\label{cor:consttreedepthmodulator}
        If $G$ has modulator of size $k$ to stars, then $G'$ has modulator of size $7\cdot k + 7$ to treedepth $86$.
    \end{corollary}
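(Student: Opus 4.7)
The plan is to reduce from \CDSshort parameterized by the size of a modulator to stars, which is \Wh by \Cref{thm:cds_whard_stars} (and stars have treedepth~$2$). Given an instance $(G,c,k)$ of \CDSshort with modulator $M$, I would build a \DWRPshort instance $(G',\WP,\wFn,\cFn,\budget)$ from two gadgets whose properties I would establish up front. The first is a \emph{force-$p$-traversals-gadget}: a fan with $p$ spoke terminals and a bottleneck arc of capacity $p$, so that any solution passes from its $u_{\rm in}$ to $u_{\rm out}$ exactly $p$ times (\Cref{claim:traversal_gadget}). The second is a directed, subdivided version of a classical \textsc{HamC} gadget of Fomin et al.\ (the \emph{cover-$z$-gadget}) with the key property that any minimum solution visiting each of its terminals exactly once either takes the long path $R_1$ covering all its terminals, or uses $P$ (through the ``external'' vertex $z$) together with $R_2$ (\Cref{claim:ltwogadget_variant}).

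Next I would assemble $G'$. A \emph{selection-gadget} $\mathcal{S}$ forcing $k+1$ passes and an \emph{auxiliary-gadget} $\mathcal{A}$ forcing $2|E(G)|+|V(G)|+1$ passes are linked through four auxiliary vertices $x, x_1, x_2, x_3$; one pass of each of $\mathcal{S}$ and $\mathcal{A}$ is consumed by the connecting cycle. For every $u \in V(G)$ I add a cover-$z_u$-gadget $\mathcal{E}^u$ together with vertices $c_u, d_u$, connect $u^{\mathcal{S}}_{\rm out} \to x^{\mathcal{E}^u}_{\rm in}$ and $y^{\mathcal{E}^u}_{\rm out} \to c_u \to u^{\mathcal{S}}_{\rm in}$, and add the capacity-$c(u)$ arc $(c_u, d_u)$. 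For every oriented edge $(u,v)$ with $\{u,v\} \in E(G)$ I add a cover-$z_v$-gadget $\mathcal{E}^{u,v}$ hanging off $d_u$ and returning to $c_u$. Finally every cover-gadget is linked into $\mathcal{A}$ through its $s_{\rm in}$ and $t_{\rm out}$ vertices. The budget $\budget = 132|E(G)|+69|V(G)|+3k+12$ is set so that every terminal must be traversed exactly once, since consecutive terminals are at distance~$3$.

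For correctness, given a capacitated dominating set $S$ with map $f$, I build a solution with two phases: for each $s \in S$ the walk crosses $\mathcal{S}$ once, uses $P$ inside $\mathcal{E}^s$ to cover $z_s$, reaches $c_s$, and for each $v \in f^{-1}(s)$ it steps over $(c_s,d_s)$ (used $|f^{-1}(s)| \le c(s)$ times, respecting the capacity) and uses $P$ in $\mathcal{E}^{s,v}$ to cover $z_v$; the remaining $|V(G)|+2|E(G)|$ passes of $\mathcal{A}$ mop up every cover-gadget via $R_1$ or $R_2$ according to whether its $P$ has been used. Conversely, \Cref{claim:traversal_gadget,claim:ltwogadget_variant} together with the tight budget pin down the structure of any solution, showing that the $k$ remaining selection-passes pick out exactly $k$ vertices whose out-capacities $c(u)$ suffice to cover each $z_v$ via some $\mathcal{E}^{u,v}$ with $uv \in E(G)$.

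The main obstacle and the last step will be bounding a modulator of $G'$ to constant treedepth by a linear function of $|M|$. The plan is to lift an elimination forest of $G$ (in which $M$ forms the root-path and the star-components hang below, giving depth $|M|+2$) to $G'$ by replacing each $u \in V(G)$ with a length-$7$ path $P_u$ through $z_u, c_u, d_u$ and the four connecting vertices of $\mathcal{E}^u$, attaching the remaining internal vertices of $\mathcal{E}^u$ as a pendant path of length $65$, placing the seven non-terminals of $\mathcal{S} \cup \mathcal{A} \cup \{x\}$ as a root-path $R$, and, for every edge $\{u,v\}$, hanging $\mathcal{E}^{u,v}$ and $\mathcal{E}^{v,u}$ (pendant paths of length $72$) under whichever of $P_u, P_v$ is deeper. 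Since every edge of $G$ already sits in an ancestor-descendant pair in the original forest, all arcs of $G'$ become ancestor-descendant pairs in this construction. Taking $M' = R \cup \bigcup_{u \in M} P_u$ gives a modulator of size $7|M| + 7$, and everything below $M'$ is a union of paths and pendant components of constant height bounded by $7 \cdot 2 + 72 = 86$, as recorded in \Cref{cor:consttreedepthmodulator}.
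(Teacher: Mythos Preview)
Your approach is essentially the same as the paper's. The paper first proves a more general statement (\Cref{claim:tdtd}: a modulator of size $k$ to treedepth $td$ in $G$ yields a modulator of size $7k+7$ to treedepth $7\,td+72$ in $G'$) and then instantiates it with $td=2$, but the underlying construction is identical to yours: a $7$-vertex root path $R$ through the non-terminal vertices of $\mathcal S$, $\mathcal A$, and $x$; each $u\in V(G)$ replaced by the $7$-vertex path $P_u$ on $z_u,c_u,d_u$ and the four connecting vertices of $\mathcal E^u$; the rest of $\mathcal E^u$ hung as a pendant below $P_u$; and each $\mathcal E^{u,v}$ hung below the deeper of $P_u,P_v$, which in the paper is phrased as attaching below $P_{\ell_{u,v}}$ for a leaf $\ell_{u,v}$ having both $u$ and $v$ as ancestors.
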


    As noted in \Cref{thm:cds_whard_stars} CDS is \Wh w.r.t.\ distance to stars.
    Assuming $G$ has a modulator to stars $M$ we have that $G'$ has a modulator of size at most $7\cdot |M|+7$ to treedepth $86$ by \Cref{cor:consttreedepthmodulator}.
    By \Cref{claim:treedepth_forwards,claim:treedepth_backwards} the constructed instance is equivalent to the initial one, therefore, \DWRP is \Wh w.r.t.\ distance to constant treedepth.
\end{proof}
}

\toappendix{
\section{Pathwidth Hardness for Unweighted Graphs}

Marx et al.~\cite{MarxSS16} showed the \Whness of \DTSPshort w.r.t.\ pathwidth by a chain of reductions.
In the last step, they ensured that some vertices are visited at most once by significantly increasing the weight of their incoming edges.
An obvious idea to show the \Whness also for unweighted graphs would be to subdivide the edges of large weight.
However, as this creates new vertices that might not be visited by the solution to the original instance, this only shows the \Whness for \DsTSPshort.
In this section we revisit the proof by Marx et al.~\cite{MarxSS16} with two goals.
First, we want to give precise bounds on the resulting pathwidth.
Second, we want to modify the last step in order to prove the \Whness for unweighted graphs also for \DTSPshort.

\begin{theorem}\label{thm:d_to_pw}
 \DTSPshort is \Wh w.r.t.
 \begin{enumerate}[a)]
 \item distance to pathwidth~2 and
 \item distance to pathwidth~3 in unweighted graphs. 
 \end{enumerate}
 Unless ETH fails, there is no algorithm for \DTSPshort with running time 
 \begin{enumerate}[i)]
  \item $n^{o(\sqrt{d_2})}$, where $d_2$ is the distance to pathwidth~2,
  \item $n^{o(d_3)}$, where $d_3$ is the distance to pathwidth~3,
  \item $n^{o(\sqrt{d_3})}$ in unweighted graphs, where $d_3$ is the distance to pathwidth~3, or
  \item $n^{o(d_4)}$ in unweighted graphs, where $d_4$ is the distance to pathwidth~4.
 \end{enumerate}
\end{theorem}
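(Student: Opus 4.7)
The plan is to revisit the chain of reductions of Marx et al.~\cite{MarxSS16} step by step, keeping careful track of the distance to bounded pathwidth at each stage, and then to replace the last step with an unweighted gadget in order to establish the unweighted cases. The starting point is the standard grid-based \Whness source (e.g.\ \textsc{Grid Tiling} or its planar variants) that gives ETH lower bounds of the form $n^{o(\sqrt{k})}$ and $n^{o(k)}$ depending on whether the parameter in the source problem is the grid side length or the full parameter. Propagating these through each intermediate reduction (which in Marx et al.\ introduces a constant-width structural skeleton plus $O(k)$ extra ``modulator'' vertices per gadget row) yields the two claimed lower-bound regimes: the square-root regime when the original grid-parameter is preserved and the linear regime when the parameter is inflated by the reduction.

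First I would isolate a modulator $M$ in the final construction of Marx et al.\ whose removal leaves a graph of pathwidth at most $3$ (in the weighted version). For the pathwidth-$2$ bound in~(a,i), I would further split each bag that witnesses pathwidth $3$ by promoting one additional vertex per bag to $M$; since the structure is very regular (long ``rows'' of gadgets of constant width) this only multiplies $|M|$ by a constant, and the resulting graph after removing $M$ has pathwidth~$2$. Combined with the $n^{o(\sqrt{k})}$ ETH lower bound inherited from the source, this gives (a) and (i). The linear lower bound~(ii) is simply the bound of~\cite{MarxSS16} stated in this language.

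The core new ingredient is handling the unweighted cases. In the last step of~\cite{MarxSS16}, certain ``once-only'' vertices are enforced by assigning a huge weight $W$ to their incoming edges, so that an optimum tour can afford to enter them at most once. The naive fix of subdividing such an edge into $W$ unit-weight edges creates $W-1$ new internal vertices that an optimum \DTSPshort tour would not want to visit, which is why this only works for \DsTSPshort. The key replacement is to attach, instead of a single heavy edge $(u,v)$, a small gadget whose only way to be fully visited exactly once uses a canonical directed traversal that enters $v$ from $u$ exactly once. A convenient choice is a ``double cherry''/directed-comb structure: replace $(u,v)$ by a directed path $u \to x_1 \to v \to x_2 \to u' \to \dots$ together with forced detours that must be taken, using additional unit-weight arcs whose capacity-like effect in the unweighted budget forces the tour to enter $v$ from this gadget exactly once. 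Each such gadget contributes at most one new ``modulator'' vertex (the only vertex of non-constant degree) to the pathwidth-$3$ decomposition of the rest of the graph. This proves (b), and, since the number of heavy edges in the source construction is $\Theta(k)$ and the underlying ETH lower bound is $n^{o(\sqrt{k})}$ and $n^{o(k)}$ respectively, it also yields~(iii) and~(iv).

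The main obstacle is the last point: designing the replacement gadget so that (1) it is unweighted, (2) in any optimum tour of \DTSPshort every vertex of the gadget is visited and the simulated edge is used exactly once in the intended direction, and (3) it adds only $O(1)$ modulator vertices per use while leaving pathwidth at most $4$ (for (iv)) or $3$ (for (iii)) after removing those vertices. Verifying~(2) requires an exchange argument showing that any tour using the gadget more than once can be strictly shortened, which should go through because each extra traversal costs a positive unit-weight detour. The pathwidth accounting in~(3) is routine once the gadget is of constant size and attaches to the rest of the graph at only two vertices.
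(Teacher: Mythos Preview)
Your proposal has two genuine gaps.

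First, the origin of the square root is misidentified. In the paper the ETH source is \textsc{Edge Balancing} parameterised by the number $k$ of vertices of $D$, with a single lower bound $n^{o(k)}$. The two regimes $n^{o(\sqrt{d_2})}$ and $n^{o(d_3)}$ arise because there are two different modulators in the \emph{same} constructed graph $D^*$: a small one $Z$ with $|Z|=O(k)$ whose removal leaves pathwidth~$3$, and a larger one $Z'$ (obtained by also deleting the single vertex $v$ inside each gadget $H_{X_e}$, one per edge of $D$) with $|Z'|=O(k^2)$ whose removal leaves pathwidth~$2$. Then $d_3=O(k)$ gives $n^{o(d_3)}$, while $d_2=O(k^2)$ gives $n^{o(\sqrt{d_2})}$. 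Your plan of ``promoting one additional vertex per bag to $M$'' and attributing the square root to a grid-tiling source does not match this and would not produce the stated bounds.

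Second, and more seriously, your unweighted replacement is aimed at the wrong object and the modulator accounting fails. In Marx et al.\ the large weight sits on \emph{every} arc whose head lies in the set $U$ of once-only vertices; $|U|$ is essentially the whole vertex set of $D^*$ minus $O(k)$ vertices, so the number of heavy arcs is polynomial in $n$, not $\Theta(k)$. A per-edge gadget that adds even one modulator vertex therefore blows the modulator up to $\Theta(n)$, destroying the parameter bound. Moreover, an edge-local gadget can at best control how often that particular arc is used, not how often its head vertex is entered in total across all its in-arcs, which is what the once-only constraint actually demands. The paper avoids both problems by working per \emph{vertex}: each $u\in U$ is split into $\widetilde u$ (receiving all in-arcs) and $\widehat u$ (emitting all out-arcs), joined by a directed path of length $2n^2$. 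Any visit to $u$ then forces traversing the whole long path, and the global budget $b=2n^2|U|+n^2$ immediately forces exactly one visit per $u$; no exchange argument on a local gadget is needed. This modification adds only path vertices of degree~$2$, so it raises pathwidth by at most one (the modified $H_s$ sits inside a $3\times\ell$ grid), yielding $d_4=O(k)$ and, after also deleting the vertices $\widehat v$, $d_3=O(k^2)$ in the unweighted graph.
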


The chain of parameterized reductions given by Marx et al.~\cite{MarxSS16} starts with \textsc{Clique} and continues over \textsc{Multicolored Biclique}, through \textsc{Edge Balancing} and \textsc{Constrained Closed Walk} to \DTSPshort (see the figure \cite[page 46]{MarxSS16}).
While they do not explicitly state \Whness, it follows from the fact that \textsc{Clique} is \Wh and all the reductions are parameterized reductions.
To prove our theorem, we only revisit the last two steps of the chain, where we only analyze the structure obtained in the second to last step and only modify the last step. 
Let us introduce the problems.
\problemQuestion{\textsc{Edge Balancing}}
{A directed graph $D$ and a set $X_e$ of positive integers for every edge $e \in E(D)$.}
{Is it possible to select $\chi(e) \in X_e$ for every $e \in E(D)$ such that for every $v \in V(D)$ we have $\sum_{u \in N^{\text{in}}(v)} \chi((u,v)) = \sum_{u \in N^{\text{out}}(v)} \chi((v,u))$?}

\begin{lemma}[{Marx et al.~\cite[Lemma 86]{MarxSS16}}]
 Assuming ETH, \textsc{Edge Balancing} has no $f(k)n^{o(k)}$ time algorithm for any computable function $f$, where $k$ is the number of vertices of~$D$. Moreover, the problem is \Wh with respect to~$k$. 
\end{lemma}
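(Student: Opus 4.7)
The plan is to give a parameterized reduction from \textsc{Multicolored Clique} (equivalently \textsc{Partitioned Subgraph Isomorphism} with pattern $K_k$), which is \Wh parameterized by $k$ and admits no $f(k)\,n^{o(k)}$ algorithm under ETH. Provided the reduction produces an \textsc{Edge Balancing} instance with $|V(D)|=O(k)$, both conclusions transfer directly.

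Given an instance of \textsc{Multicolored Clique} with color classes $V_1,\dots,V_k$ of size $n$ each, I would build $D$ on the vertex set $\{a_1,\dots,a_k\}$ (or a constant blow-up thereof) with one arc per ordered pair of distinct indices. For each arc $e$ between $a_i$ and $a_j$, the set $X_e$ encodes the edges of $G$ with one endpoint in $V_i$ and one in $V_j$. The key trick is a positional encoding in a sufficiently large base $B$: to each such edge $\{u,v\}$ one assigns a ``digit pattern'' that places the identity of $u\in V_i$ in a digit slot reserved for color class~$i$, the identity of $v\in V_j$ in a digit slot reserved for color class~$j$, and zeros elsewhere. A constant additive offset (or a small gadget of auxiliary vertices) ensures that each vertex of $D$ has balanced in- and out-degree, so the balance equations are meaningful.

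With the slots chosen so that different color classes use disjoint digit positions and $B$ large enough to prevent carries across slots, the single scalar balance equation at each $a_i$ decomposes digit by digit into $k-1$ independent equalities, one per neighbour $a_j$. Each of these equalities forces the $V_i$-coordinate appearing on the arc between $a_i$ and $a_j$ to agree with the $V_i$-coordinate on the arc between $a_i$ and $a_{j'}$, so a single ``chosen vertex of $V_i$'' is propagated across all arcs incident to $a_i$. Selecting one integer from every $X_e$ then becomes equivalent to choosing one vertex per color class such that every pair is joined by an edge of $G$, i.e., to a multicolored clique.

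The main obstacle is engineering the orientations of $D$ and the digit-slot assignment so that the single scalar balance equation at each $a_i$ faithfully encodes $k-1$ independent equalities without collisions, while keeping $|V(D)|=O(k)$. This is precisely the technical content of Marx et al.'s Lemma~86; once in place, the parameter $k$ is preserved up to a constant factor, so any $f(k)\,n^{o(k)}$ algorithm for \textsc{Edge Balancing} would yield an algorithm of the same form for \textsc{Multicolored Clique}, contradicting ETH, and analogously an fpt-algorithm would imply \FPT for the source problem.
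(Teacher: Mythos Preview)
The paper does not prove this lemma at all; it is quoted verbatim as \cite[Lemma~86]{MarxSS16} and used as a black box. The only information the paper adds is a description of the chain of reductions in the source: \textsc{Clique} $\to$ \textsc{Multicolored Biclique} $\to$ \textsc{Edge Balancing}, together with the remark that \Whness follows because all steps are parameterized reductions and \textsc{Clique} is \Wh.

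Your sketch deviates from this in one notable respect: you reduce directly from \textsc{Multicolored Clique}, whereas Marx et al.\ go through \textsc{Multicolored Biclique}. The bipartite structure is not an accident: it gives a clean separation into ``left'' vertices (all arcs outgoing) and ``right'' vertices (all arcs incoming), so that at each vertex of $D$ the balance equation compares two sums whose summands come from disjoint sets of arcs. This is exactly what makes the digit-slot trick go through without the ``collisions'' you flag as the main obstacle. In a direct \textsc{Multicolored Clique} encoding on $k$ vertices with a tournament orientation, each $a_i$ has both in- and out-arcs and the slots needed to isolate the $V_i$-coordinate across $k-1$ neighbours interfere; fixing this typically forces either extra gadget vertices or the bipartite detour that Marx et al.\ take.

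Your proposal is therefore in the right spirit but, as you yourself acknowledge, stops short of the actual construction; the paper likewise stops short, simply citing the source. If you want a self-contained argument, the cleanest route is to follow the \textsc{Multicolored Biclique} step rather than trying to make the clique version balance directly.
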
%
\problemQuestion{\textsc{Constrained Closed Walk}}
{An unweighted directed graph $G$ and a set $U \subseteq V(G)$.}
{Is there a closed walk (of any length) that visits every vertex at least once and each vertex in $U$ exactly once?}

The reduction from \textsc{Edge Balancing} to \textsc{Constrained Closed Walk} uses two gadgets that we now present.

\begin{figure}[ht!]
\begin{center}
 \begin{tikzpicture}[-{stealth'},every node/.style={draw,circle,inner sep=1.5pt}, every label/.style={hide},scale=.85]
 \node[fill, label=90:$a_{\text{in}}$] (ain) at (-1,3.5) {}; 
 \node[fill, label=90:$a_{\text{out}}$] (aout) at (15,3.5) {}; 
 
 \node[fill, label=160:$v^1_1$] (v11) at (0,0) {}; 
 \node[fill, label=180:$v^2_1$] (v21) at (0,1) {};
 \node[fill, label=180:$v^3_1$] (v31) at (0,2) {};
 \node[fill, label=110:$v^4_1$] (v41) at (1,2) {};
 \node[fill, label=180:$v^5_1$] (v51) at (1,1) {};
 \node[fill, label=270:$v^6_1$] (v61) at (1,0) {};
 \foreach \x/\y/\b in {1/2/20,2/1/20,2/3/20,3/2/20,3/4/0,4/5/20,5/4/20,5/6/20,6/5/20,1/6/0} {
    \draw (v\x1) to[bend left=\b] (v\y1);
    }
 \draw (ain) to (v31);
 \draw (v41) to (aout);
 
 \begin{scope}[xshift=2.5cm]
 \node[fill, label=270:$v^1_2$] (v12) at (0,0) {}; 
 \node[fill, label=180:$v^2_2$] (v22) at (0,1) {};
 \node[fill, label=200:$v^3_2$] (v32) at (0,2) {};
 \node[fill, label=340:$v^4_2$] (v42) at (1,2) {};
 \node[fill, label=180:$v^5_2$] (v52) at (1,1) {};
 \node[fill, label=270:$v^6_2$] (v62) at (1,0) {};
 \foreach \x/\y/\b in {1/2/20,2/1/20,2/3/20,3/2/20,3/4/0,4/5/20,5/4/20,5/6/20,6/5/20,1/6/0} {
    \draw (v\x2) to[bend left=\b] (v\y2);
    }
 \draw (ain) to (v32);
 \draw (v42) to (aout);
 \end{scope}
 
 \draw (v61) to (v12);
 
 \foreach \s in {3,4,5} {
 \begin{scope}[xshift=\s*2.5cm-2.5cm]
 \node[fill] (v1\s) at (0,0) {}; 
 \node[fill] (v2\s) at (0,1) {};
 \node[fill] (v3\s) at (0,2) {};
 \node[fill] (v4\s) at (1,2) {};
 \node[fill] (v5\s) at (1,1) {};
 \node[fill] (v6\s) at (1,0) {};
 \foreach \x/\y/\b in {1/2/20,2/1/20,2/3/20,3/2/20,3/4/0,4/5/20,5/4/20,5/6/20,6/5/20,1/6/0} {
    \draw (v\x\s) to[bend left=\b] (v\y\s);
    }
 \draw (ain) to (v3\s);
 \draw (v4\s) to (aout);
 \end{scope}
}
\draw (v62) to (v13);
\draw (v63) to (v14);
\draw (v64) to (v15);

 \begin{scope}[xshift=13cm]
 \node[fill, label=270:$v^1_s$] (v16) at (0,0) {}; 
 \node[fill, label=180:$v^2_s$] (v26) at (0,1) {};
 \node[fill, label=70:$v^3_s$] (v36) at (0,2) {};
 \node[fill, label=0:$v^4_s$] (v46) at (1,2) {};
 \node[fill, label=180:$v^5_s$] (v56) at (1,1) {};
 \node[fill, label=270:$v^6_s$] (v66) at (1,0) {};
 \foreach \x/\y/\b in {1/2/20,2/1/20,2/3/20,3/2/20,3/4/0,4/5/20,5/4/20,5/6/20,6/5/20,1/6/0} {
    \draw (v\x6) to[bend left=\b] (v\y6);
    }
 \draw (ain) to (v36);
 \draw (v46) to (aout);
 \end{scope}
 
 \draw[-,dotted] (v65) to (v16);
 
 \node[fill, label=270:$b_{\text{in}}$] (bin) at (-1,-.5) {}; 
 \node[fill, label=270:$b_{\text{out}}$] (bout) at (15,-.5) {}; 
 
 \draw (bin) to (v11);
 \draw (v66) to (bout);
 
 \end{tikzpicture}
\end{center}
\caption{Gadget $H_s$ used in the reduction from \textsc{Edge Balancing} to \textsc{Constrained Closed Walk}.}
\label{fig:gadget_Hs}
\end{figure}
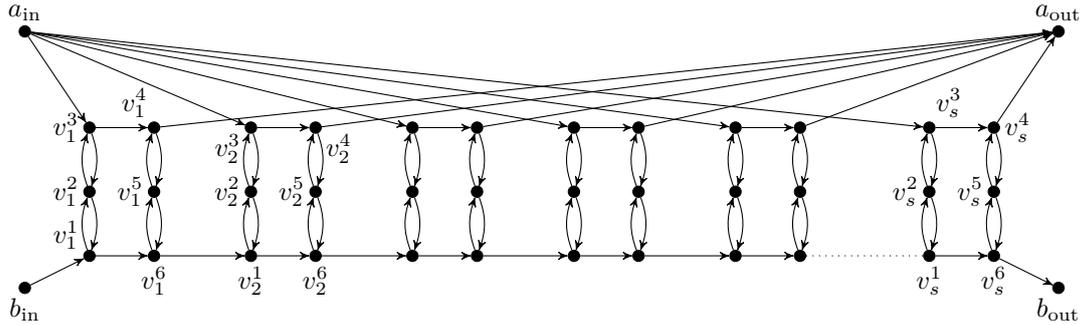

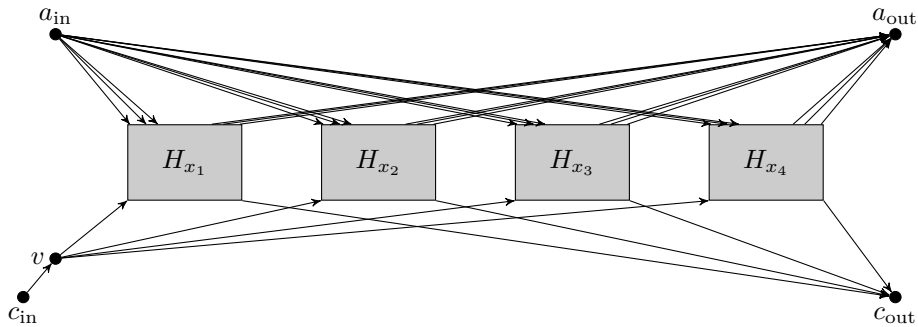
\begin{figure}[tbh]
\begin{center}
 \begin{tikzpicture}[-{stealth'},every node/.style={draw,circle,inner sep=1.5pt}, every label/.style={hide},scale=.85]
 \node[fill, label=90:$a_{\text{in}}$] (ain) at (-1,2) {}; 
 \node[fill, label=90:$a_{\text{out}}$] (aout) at (12,2) {}; 
 \node[fill, label=180:$v$] (v) at (-1,-1.5) {}; 
 \node[fill, label=270:$c_{\text{in}}$] (cin) at (-1.5,-2.1) {}; 
 \node[fill, label=270:$c_{\text{out}}$] (cout) at (12,-2.1) {}; 
 \draw (cin) to (v);
 \foreach \s in {1,2,3,4} {
  \node[rectangle,fill=black!20, minimum width=15mm, minimum height=10mm] (hx\s) at (3*\s-2,0) {$H_{x_\s}$};
  \draw (ain) to (hx\s.125);
  \draw (ain) to (hx\s.135);
  \draw (ain) to (hx\s.145);
  \draw (hx\s.35) to (aout);
  \draw (hx\s.45) to (aout);
  \draw (hx\s.55) to (aout);
  \draw (hx\s.south east) to (cout);
  \draw (v) to (hx\s.south west);
  } 
 \end{tikzpicture}
\end{center}
\caption{Gadget $H_X$ for the set $X=\{x_1,x_2,x_3,x_4\}$. The gray rectangles schematically represent the internal parts of the gadgets $H_{x_i}$, while the vertices $a_{\text{in}},a_{\text{out}}, v, c_{\text{out}}$ play the role of the vertices $a_{\text{in}},a_{\text{out}},b_{\text{in}},b_{\text{out}}$ for them, respectively.}
\label{fig:gadget_HX}
\end{figure}

The gadgets use connecting vertices with subscripts 'in' or 'out' similarly as in \Cref{sec:d-to-td}, the other vertices are called \emph{internal}.

First, for a positive integer $s$ we construct a gadget $H_s$ depicted on \Cref{fig:gadget_Hs} with 4 connecting vertices: $a_{\text{in}}$, $a_{\text{out}}$, $b_{\text{in}}$, and $b_{\text{out}}$. 
As we rely on the original proof concerning the correctness, the only important property of the gadget for us is that the pathwidth of the graph $H_s \setminus \{a_{\text{in}},a_{\text{out}},b_{\text{in}},b_{\text{out}}\}$ is 2. 

Based on this gadget, for a set $X$ of positive integers, another gadget $H_X$ is constructed as follows (see \Cref{fig:gadget_HX}). 
First we introduce 4 connecting vertices $a_{\text{in}}$, $a_{\text{out}}$, $c_{\text{in}}$, and $c_{\text{out}}$, an internal vertex $v$, and an edge $\{c_{\text{in}},v\}$. 
Then for each $x \in X$ we introduce gadget $H_x$ and identify its connecting vertices $a_{\text{in}}$, $a_{\text{out}}$, $b_{\text{in}}$, and $b_{\text{out}}$ with $a_{\text{in}}$, $a_{\text{out}}$, $v$, and $c_{\text{out}}$, respectively.

Again the only important property of gadget $H_X$ for us is that the pathwidth of the graph $H_X \setminus \{a_{\text{in}},a_{\text{out}},c_{\text{in}},c_{\text{out}}\}$ is 3, while if we also remove vertex $v$, the pathwidth drops to~2.

Now we are ready to describe the reduction from \textsc{Edge Balancing} to \textsc{Constrained Closed Walk}.
Let $(D, (X_e)_{e \in E(D)})$ with $V(D) = \{w_1, \ldots, w_k\}$ be an instance of \textsc{Edge Balancing}.
We construct graph $D^*$ as follows.
First we introduce vertices of $D$ into $D^*$ together with two auxiliary vertices $c_{\text{in}}$ and $c_{\text{out}}$.
Let $Z = V(D) \cup \{c_{\text{in}},c_{\text{out}}\}$.
Then, for each edge $e=(w_i,w_j) \in E(D)$ we construct a copy of the gadget $H_{X_e}$, add it to $D^*$, and identify its vertices $a_{\text{in}}$, $a_{\text{out}}$, $c_{\text{in}}$, and $c_{\text{out}}$ with vertices $w_i$, $w_j$, $c_{\text{in}}$, and $c_{\text{out}}$, respectively.

Then we add some more paths into $D^*$.
Again, as we rely on the original paper concerning the correctness, the exact numbers of the paths are not important to us.
We add directed paths of length 2 (the internal vertices of them will be newly introduced) 
\begin{enumerate}
 \item from $c_{\text{in}}$ to $w_i$,
 \item from $w_i$ to $c_{\text{out}}$,
 \item and from $c_{\text{out}}$ to $c_{\text{in}}$.
\end{enumerate}

We let $U= V(D^*) \setminus Z$ be the set of vertices that should be traversed exactly once.
This finishes the construction of the instance $(D^*,U)$ of \textsc{Constrained Closed Walk}.
The correctness of the reduction follows from \cite[Lemma 90]{MarxSS16}.

Since each connected component of graph $D^* \setminus Z$ consist either of the internal vertices of a gadget $H_{X_e}$ or of a single internal vertex of the introduced path, the pathwidth of $D^* \setminus Z$ is 3. 
Hence, the distance of $D^*$ to pathwidth 3 is $O(k)$, where $k$ is the number of vertices of~$D$.

Moreover, let $Z'$ be the set $Z$ together with the vertex $v$ from every gadget $H_{X_e}$ introduced. 
Since we introduced gadget $H_{X_e}$ for every edge of $D$ and $D$ is a simple graph, we have that $|Z'| = O(k^2)$.
However, each connected component of graph $D^* \setminus Z'$ consist either of the internal vertices of a gadget $H_x$ or of a single internal vertex of the introduced path, and thus, the pathwidth of $D^* \setminus Z'$ is 2. 
Therefore, the distance of $D^*$ to pathwidth 2 is $O(k^2)$ and the reduction is parameterized reduction even to \textsc{Constrained Closed Walk} parameterized by distance to pathwidth 2.

Together we obtain the following.

\begin{lemma}
 \textsc{Constrained Closed Walk} is \Wh w.r.t. distance $d_2$ to pathwidth~2.
 Moreover, unless ETH fails, there is no algorithm for the problem with running time $f(d_2)n^{o(\sqrt{d_2})}$ or $f(d_3)n^{o(d_3)}$ for any computable function, where $d_3$ is the distance to pathwidth 3.
\end{lemma}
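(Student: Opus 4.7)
The plan is to reduce from \textsc{Constrained Closed Walk} (CCW) to \DTSPshort, transferring the \Whness and ETH lower bounds for CCW from the lemma above. For the weighted statements (a), (i), (ii), I would follow Marx et al.'s original last step: given a CCW instance $(D^*, U)$, keep the graph $D^*$ as is, assign weight $1$ to every arc except those entering a vertex of $U$, which receive a large weight $W := n^{100}$, and set the budget $\budget$ so that it allows at most $|U|$ usages of heavy arcs in total. Then each $u \in U$ is entered at most once, and combined with the \DTSPshort obligation to visit every vertex at least once, it is entered exactly once; the remaining budget is used freely on light arcs to cover the non-$U$ vertices. Since the construction introduces no new vertices, the distance to pathwidth $2$ (resp.\ $3$) is preserved, so the two ETH lower bounds for CCW yield (i) and (ii) directly, and (a) follows from the \Whness part of the lemma.

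For the unweighted statements (b), (iii), (iv), the plan is to replace every weight-$W$ arc $(x,y)$ produced above by a directed path of length $W$ obtained by introducing $W-1$ fresh internal vertices. Because each internal vertex has in-degree and out-degree exactly $1$, any \DTSPshort walk covering it must traverse the entire replacement path from $x$ to $y$, contributing exactly $W$ to the total weight; hence the correctness of the weighted case carries over unchanged. For the parameter, the plan is to reuse the CCW modulator $M$ (of size $O(k^2)$ for pathwidth $2$ and $O(k)$ for pathwidth $3$): each subdivision path either becomes a separate pathwidth-$1$ component of $G' \setminus M$ when both its endpoints lie in $M$, or glues onto an existing component, in which case it is absorbed into the existing path decomposition by inserting a linear chain of bags immediately after a bag containing both endpoints of the original arc. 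The latter operation raises the pathwidth by at most $1$, so the unweighted instance has distance $O(k^2)$ to pathwidth $3$ and distance $O(k)$ to pathwidth $4$; feeding the CCW ETH lower bounds through this yields (iii) and (iv), and (b) follows from the \Whness of CCW w.r.t.\ distance to pathwidth $2$.

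The main obstacle I anticipate is the pathwidth bookkeeping for the unweighted case. Since some heavy arcs in $D^*$ are internal to a gadget (both endpoints in $U$), the corresponding subdivision paths merge with the component that used to contain the original arc, and one must argue that its path decomposition can still be extended to cover the fresh internal path-vertices with only a $+1$ blowup. The plan is to exploit the fact that every arc of the old decomposition admits a bag containing both endpoints and to insert, right after such a bag, a chain of bags $\{x,y,p_1\},\{y,p_1,p_2\},\{y,p_2,p_3\},\ldots$ that tracks the subdivision path vertex by vertex. Verifying that this extension respects all three tree-decomposition axioms globally, uniformly across all subdivided heavy edges and without increasing the modulator beyond a linear factor in the CCW modulator, is the technical heart of the argument; once it is in place, the reduction is a standard parameterized reduction and all four lower bounds follow by direct substitution of the CCW bounds.
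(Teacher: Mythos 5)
Your proposal does not prove the stated lemma; it proves a different statement while assuming the lemma as a hypothesis. The lemma asserts the \Whness{} of \textsc{Constrained Closed Walk} itself, together with ETH lower bounds in terms of \emph{its} distance to pathwidth~2 and~3, so what is required is a parameterized reduction \emph{into} \textsc{Constrained Closed Walk} from a problem already known to be hard. You instead reduce \emph{from} \textsc{Constrained Closed Walk} to \DTSPshort{} and explicitly say you are ``transferring the \Whness{} and ETH lower bounds for CCW from the lemma above'' --- that is, you take the very statement to be proven as given and derive consequences for \DTSPshort{}. What you sketch (subdividing heavy arcs into long directed paths, the $+1$ pathwidth bookkeeping for the subdivision paths, the resulting distances to pathwidth~3 and~4 in the unweighted case) is essentially the paper's proof of \Cref{thm:d_to_pw}, not of this lemma; no amount of polishing that argument will establish hardness of \textsc{Constrained Closed Walk}.

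The paper's actual argument is a reduction from \textsc{Edge Balancing}, which is \Wh{} and admits no $f(k)n^{o(k)}$-time algorithm under ETH when parameterized by the number $k$ of vertices of $D$. One constructs $D^*$ by attaching a gadget $H_{X_e}$ for each edge $e$ of $D$ to the set $Z = V(D) \cup \{c_{\text{in}}, c_{\text{out}}\}$ and adding some length-2 paths, relying on Marx et al.\ for correctness, and then analyzes two modulators: deleting $Z$, of size $O(k)$, leaves components that are gadget interiors of pathwidth~3, giving $d_3 = O(k)$ and hence the $n^{o(d_3)}$ bound; deleting additionally the distinguished vertex $v$ from each of the $O(k^2)$ gadgets yields a set $Z'$ of size $O(k^2)$ whose removal leaves pathwidth~2, giving $d_2 = O(k^2)$, which yields both the \Whness{} w.r.t.\ $d_2$ and the $n^{o(\sqrt{d_2})}$ bound. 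None of this machinery appears in your proposal, so the lemma remains unproven.
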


\begin{proof}[Proof of \Cref{thm:d_to_pw}]
To reduce \textsc{Constrained Closed Walk} to \DTSPshort Marx et al.~\cite{MarxSS16} suggested to use edge weights to ensure that all vertices in $U$ are visited at most once.
Note that if $n$ is the total number of vertices in the instance of \textsc{Constrained Closed Walk}, then any solution walk is of length at most $n^2$, as it consist of at most $n$ cycles, each of length at most~$n$.
Hence, if we give weight $2n^2$ to every edge with head in a vertex of $U$ and weight $1$ to every other edge and set the budget to $\budget=2n^2|U|+n^2$, then the obtained instance of \DTSPshort is equivalent to the instance of \textsc{Constrained Closed Walk}.
Indeed, any closed walk which visits each vertex of $D^*$ and each vertex of $U$ exactly once has weight at most $\budget$ and any closed walk that visits every vertex can visit each vertex of $U$ at most once, as otherwise it would exceed the budget.
This proves parts \itemstyle{a)}, \itemstyle{i)}, and \itemstyle{ii)} of the theorem.

\begin{figure}[tbh]
\begin{center}
 \begin{tikzpicture}[-{stealth'},every node/.style={draw,circle,inner sep=1.5pt}, every label/.style={hide},scale=.85]
 \node[fill, label=90:$a_{\text{in}}$] (ain) at (-1,3.5) {}; 
 \node[fill, label=90:$a_{\text{out}}$] (aout) at (15,3.5) {}; 
 
 \node[fill, label=270:$\widetilde{v^1_1}$] (v11i) at (0,0) {}; 
 \node[fill, label=180:$\widehat{v^2_1}$] (v21o) at (0,1) {};
 \node[fill, label=180:$\widetilde{v^3_1}$] (v31i) at (0,2) {};
 \node[fill, label=90:$\widehat{v^3_1}$] (v31o) at (1.5,2) {};
 \node[fill, label=0:$\widetilde{v^2_1}$] (v21i) at (1.5,1) {};
 \node[fill, label=270:$\widehat{v^1_1}$] (v11o) at (1.5,0) {};
 \node[fill, label=90:$\widetilde{v^4_1}$] (v41i) at (3,2) {};
 \node[fill, label=180:$\widehat{v^5_1}$] (v51o) at (3,1) {};
 \node[fill, label=270:$\widetilde{v^6_1}$] (v61i) at (3,0) {};
 \node[fill, label=90:$\widehat{v^4_1}$] (v41o) at (4.5,2) {};
 \node[fill, label=0:$\widetilde{v^5_1}$] (v51i) at (4.5,1) {};
 \node[fill, label=270:$\widehat{v^6_1}$] (v61o) at (4.5,0) {};
 
 \foreach \x in {1,...,6} {
    \draw[decorate, decoration={snake, segment length=1mm, amplitude=.5mm,post length=1mm}] (v\x1i) to (v\x1o);
 }
 
 \foreach \x/\y/\b in {1/2/20,2/1/20,2/3/20,3/2/20,3/4/0,4/5/20,5/4/20,5/6/20,6/5/20,1/6/0} {
    \draw (v\x1o) to (v\y1i);
    }
 \draw (ain) to (v31i);
 \draw (v41o) to (aout);
 
 \begin{scope}[xshift=6cm]
 \node[fill] (v12i) at (0,0) {}; 
 \node[fill] (v22o) at (0,1) {};
 \node[fill] (v32i) at (0,2) {};
 \node[fill] (v32o) at (1,2) {};
 \node[fill] (v22i) at (1,1) {};
 \node[fill] (v12o) at (1,0) {};
 \node[fill] (v42i) at (2,2) {};
 \node[fill] (v52o) at (2,1) {};
 \node[fill] (v62i) at (2,0) {};
 \node[fill] (v42o) at (3,2) {};
 \node[fill] (v52i) at (3,1) {};
 \node[fill] (v62o) at (3,0) {};
 
 \foreach \x in {1,...,6} {
    \draw[decorate, decoration={snake, segment length=1mm, amplitude=.5mm,post length=1mm}] (v\x2i) to (v\x2o);
 }
 
 \foreach \x/\y/\b in {1/2/20,2/1/20,2/3/20,3/2/20,3/4/0,4/5/20,5/4/20,5/6/20,6/5/20,1/6/0} {
    \draw (v\x2o) to (v\y2i);
    }
 \draw (ain) to (v32i);
 \draw (v42o) to (aout);
 \end{scope}
 
 \draw (v61o) to (v12i);
 
 \begin{scope}[xshift=11cm]
 \node[fill] (v13i) at (0,0) {}; 
 \node[fill] (v23o) at (0,1) {};
 \node[fill] (v33i) at (0,2) {};
 \node[fill] (v33o) at (1,2) {};
 \node[fill] (v23i) at (1,1) {};
 \node[fill] (v13o) at (1,0) {};
 \node[fill] (v43i) at (2,2) {};
 \node[fill] (v53o) at (2,1) {};
 \node[fill] (v63i) at (2,0) {};
 \node[fill] (v43o) at (3,2) {};
 \node[fill] (v53i) at (3,1) {};
 \node[fill] (v63o) at (3,0) {};
 
 \foreach \x in {1,...,6} {
    \draw[decorate, decoration={snake, segment length=1mm, amplitude=.5mm,post length=1mm}] (v\x3i) to (v\x3o);
 }
 
 \foreach \x/\y/\b in {1/2/20,2/1/20,2/3/20,3/2/20,3/4/0,4/5/20,5/4/20,5/6/20,6/5/20,1/6/0} {
    \draw (v\x3o) to (v\y3i);
    }
 \draw (ain) to (v33i);
 \draw (v43o) to (aout);
 \end{scope}
 
 \draw[-,dotted] (v62o) to (v13i);

 \node[fill, label=270:$b_{\text{in}}$] (bin) at (-1,-.5) {}; 
 \node[fill, label=270:$b_{\text{out}}$] (bout) at (15,-.5) {}; 
 
 \draw (bin) to (v11i);
 \draw (v63o) to (bout);
 
 \end{tikzpicture}
\end{center}
\caption{Modified gadget $H_s$. The ``snake'' paths represent directed paths of length $2n^2$ each.}
\label{fig:modified_gadget_Hs}
\end{figure}
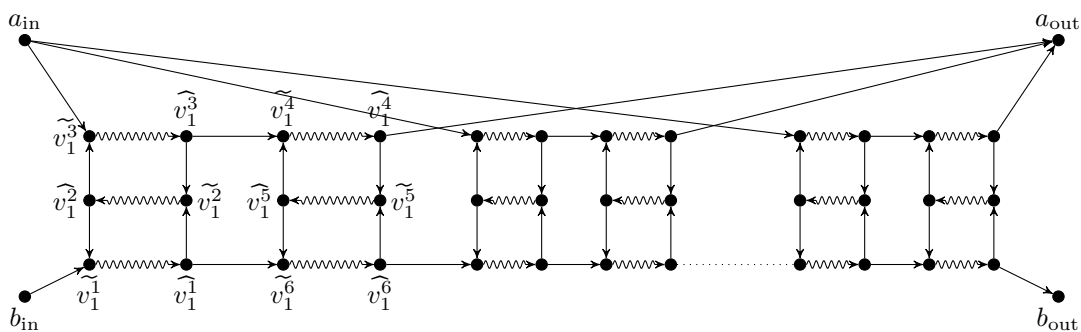

To prove the remaining parts of the theorem, that is, for unweighted graphs, we take a slightly different approach.
We divide every vertex $u \in U$ into two vertices $\widetilde{u}$ and $\widehat{u}$, redirect all the edges with head in $u$ to $\widetilde{u}$ and all the edges with tail in $u$ to $\widehat{u}$.
Then we connect $\widetilde{u}$ to $\widehat{u}$ by a path of length $2n^2$.
Denote the modified graph $\widehat{D}$ and set the budget again to $\budget=2n^2|U|+n^2$, to obtain an instance $(\widehat{D}, \mathbf{1},\budget)$ of \DTSPshort. 
\Cref{fig:modified_gadget_Hs} depicts the gadget $H_s$ from \Cref{fig:gadget_Hs} after this modification.

Let us now prove the equivalence of the instances $(D^*,U)$ of \textsc{Constrained Closed Walk} and $(\widehat{D}, \mathbf{1},\budget)$ of \DTSPshort.
If $C$ is a closed walk in $D^*$ that visit each vertex at least once and each vertex of $U$ exactly once, then by replacing each visit to a vertex $u \in U$ by the $\widetilde{u}$-$\widehat{u}$-path, we obtain a walk $\widehat{C}$ in $\widehat{D}$ of total length at most $|U|\cdot 2n^2+ n^2 = \budget$ visiting each vertex of $\widehat{D}$.
Conversely, any closed walk $\widehat{C}$ that visits each vertex in $\widehat{D}$ must visit vertex $\widetilde{u}$ for each $u \in U$. 
Whenever it visits it, it must continue along the path of length $2n^2$ to $\widetilde{u}$.
Hence, if it obeys the budget, it can visit each such vertex at most once. 
Thus, replacing each traverse of a $\widetilde{u}$-$\widehat{u}$-path with a visit to vertex $u$, we obtain a closed walk in $D^*$ which visits each vertex  and visit each vertex of $U$ exactly once.

It is easy to see that if the pathwidth of (part of) $D^*$ is $p$ then the pathwidth of (the corresponding part of) $\widehat{D}$ is at most $2p+3$.
We next prove more tight bounds.

Note that the modified gadget $H_s$ (\Cref{fig:modified_gadget_Hs}), after removal of connection vertices, has pathwidth 3, as it is a subgraph of a $3 \times \ell$ grid for a suitable $\ell$.
Thus, each modified gadget $H_{X_e}$ has pathwidth 4 after removal of the connection vertices, since we can put $\widehat{v}$ into every bag and the $\widetilde{v}$-$\widehat{v}$-path has pathwidth 1.

As $U = V(D^*) \setminus Z$ and we only modified vertices of $U$, we still have $Z \subseteq V(\widehat{D})$.
Since each connected component of graph $\widehat{D} \setminus Z$ consist either of the internal vertices of a modified gadget $H_{X_e}$ or of a single path resulting from a division of an internal vertex of the introduced path or of a vertex $v$, the pathwidth of $\widehat{D} \setminus Z$ is 4. 
Hence, the distance of $\widehat{D}$ to pathwidth 4 is $O(k)$, where $k$ is the number of vertices of $D$.
This implies part \itemstyle{iv)} of the theorem.

Moreover, apart from vertices in $Z$ only the vertices $\widehat{v}$ are adjacent to internal vertices of (modified) gadgets $H_s$.
Hence, we let $\widehat{Z}$ be the set $Z$ together with the vertex $\widehat{v}$ from every gadget $H_{X_e}$ introduced.
We again have that $|\widehat{Z}| = O(k^2)$.
Furthermore, each connected component of graph $\widehat{D} \setminus \widehat{Z}$ consist either of the internal vertices of a modified gadget $H_x$ or of a single path obtained from dividing the internal vertex of a path or of vertex $v$. 
Thus, the pathwidth of $\widehat{D} \setminus \widehat{Z}$ is 3. 
Therefore, the distance of $\widehat{D}$ to pathwidth 3 is $O(k^2)$.
This implies parts \itemstyle{b)} and \itemstyle{iii)} of the theorem.
\end{proof}
}%

\toappendix{
\section{Alternative Proof of W[1]-hardness with Respect to Distance to Constant Pathwidth}

In this section we present an independent alternative proof of the \Whness of the \DMTSP w.r.t.\ distance to pathwidth. Even though this proof may be somewhat redundant in view of~\cite{MarxSS16}, we still chose to include it here as it presents a different approach towards this result, which may be of interest to the potential reader.

\begin{theorem}\label{thm:pw_hardness}
    \DMTSP is \Wh with respect to distance to pathwidth 19 even if all weights are 1.
\end{theorem}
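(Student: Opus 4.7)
The plan is to give a reduction from \textsc{Multicolored Clique} on $k$ color classes $V_1,\ldots,V_k$ of equal size $N$, a problem known to be \Wh parameterized by $k$. We construct a directed graph $G'$ with unit weights and a budget $b$, together with a set $M\subseteq V(G')$ of size $O(k^2)$ serving as a modulator, such that $G'\setminus M$ has pathwidth at most $19$, and $G$ contains a multicolored clique iff $G'$ admits a closed walk of length at most $b$ visiting every vertex.

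First, for each color $i\in[k]$ we introduce a \emph{selection gadget} $\mathcal{S}_i$ whose role is to force any feasible walk to encode the choice of a single vertex in $V_i$. The gadget is a constant-pathwidth directed structure arranged along a backbone path with $N$ ``slots'' (one per element of $V_i$) attached; each slot contributes a constant number of vertices encoding the index of the chosen vertex in some fixed binary/indicator scheme, and is connected to the backbone through a bounded-size interface. A constant number of \emph{port} vertices of $\mathcal{S}_i$ are placed in $M$. By construction, $\mathcal{S}_i\setminus M$ has pathwidth bounded by a small absolute constant.

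Second, for each unordered pair $\{i,j\}$ with $i\neq j$, we build a \emph{verification gadget} $\mathcal{V}_{ij}$ indexed by the edges of $G$ between $V_i$ and $V_j$, together with ports in $M$ shared with $\mathcal{S}_i$ and $\mathcal{S}_j$. The gadget is engineered so that any closed walk which visits every internal vertex of $\mathcal{V}_{ij}$ at minimum cost must effectively ``pick'' a single edge $\{u,v\}$ with $u\in V_i$, $v\in V_j$. Consistency between the edge picked in $\mathcal{V}_{ij}$ and the vertices selected in $\mathcal{S}_i$ and $\mathcal{S}_j$ is enforced by the shared ports: any mismatch between the chosen vertex indices in $\mathcal{S}_i$ and the endpoints of the chosen edge in $\mathcal{V}_{ij}$ forces extra edge traversals at the ports (via unbalanced in/out degrees that must be fixed up by additional transit), which exceeds the budget. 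The budget $b$ is tuned so that a walk of length exactly $b$ exists precisely when all $\binom{k}{2}$ verification gadgets are simultaneously consistent with the selection gadgets, i.e., when $G$ contains a multicolored clique.

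Since $|M|=O(k^2)$ and each connected component of $G'\setminus M$ lies inside a single selection or verification gadget of bounded pathwidth, the graph $G'\setminus M$ has pathwidth at most $19$, with the precise constant determined by the gadget designs. The main obstacle is the design of verification gadgets that enforce consistency under unit weights: the standard weighted reductions penalize inconsistent choices directly by assigning large weights to ``wrong'' arcs, whereas here we must emulate this by using parity/Eulerian constraints at modulator vertices, creating bottlenecks whose simultaneous satisfiability is equivalent to a clique in $G$. A secondary technical hurdle is verifying, gadget by gadget, that the explicit pathwidth bound of $19$ outside $M$ is respected---this will drive the specific choice of constants in the construction.
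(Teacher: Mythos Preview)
Your proposal is a plan, not a proof: you sketch the shape of a direct reduction from \textsc{Multicolored Clique} but explicitly flag the two load-bearing pieces as unresolved (``The main obstacle is the design of verification gadgets\ldots'' and ``A secondary technical hurdle is verifying\ldots that the explicit pathwidth bound of $19$ outside $M$ is respected''). In particular, you never construct the selection or verification gadgets, never specify how the budget $b$ is set, and never explain concretely how unit-weight parity/Eulerian constraints at the modulator force consistency. The constant $19$ appears only as a target, not as a consequence of any construction you give. As written, nothing here can be checked.

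The paper takes a very different route that sidesteps exactly the difficulty you identify. Rather than building bespoke gadgets for unit weights, it reuses a known chain of reductions (\textsc{Multicolored Clique} $\to$ \textsc{Properly Balanced Subgraph} $\to$ \textsc{Weighted Mixed Chinese Postman Problem}) due to Gutin, Jones, and Wahlstr\"om, which already yields \Whness of \textsc{WMCPP} with respect to distance to pathwidth~$17$, and moreover with weights restricted to $\{2,3,M\}$ for a polynomially bounded $M$. The final step to unweighted \DMTSPshort is then almost trivial: replace each arc of weight $h$ by a directed path of length $h$, and each heavy edge by a bidirected path of length $M$. This subdivision increases pathwidth by at most $2$, giving distance to pathwidth~$19$. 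The point is that the hard combinatorial work (forcing a walk to encode a clique) is inherited from the intermediate problems, and the weight-removal is a clean post-processing step whose effect on pathwidth is easy to bound. If you want a direct reduction, you would need to actually carry out the gadget design you describe; the paper's approach shows this is unnecessary.
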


Gutin, Jones, and Wahlström~\cite{GJW16} have shown hardness of an auxiliary intermediate problem called \textsc{Properly Balanced Subgraph}.
We use the same problem to prove \Cref{thm:pw_hardness}.
We first introduce this problem.

 A directed multigraph is called {\em balanced} if the in-degree of each vertex equals its out-degree.
\problemQuestion{\textsc{Properly Balanced Subgraph (PBS)}}
{A directed multigraph $D = (V, A)$, a weight function $w: A \rightarrow \mathbb{Z}$,
a set $X = \{(a_1,a_1'), \dots, (a_r, a_r')\}$ %
          of disjoint pairs of arcs (called \emph{double arcs}), such
          that $a_i, a_i' \in A$ and $a_i,a_i'$ have the same start
          vertex and end vertex, for each $(a_i, a_i') \in X$;
         a set $Y = \{(b_1,b_1'), \dots, (b_s, b_s')\}$ %
         of disjoint pairs of arcs (called \emph{forbidden pairs}), such
         that $b_i,b_i' \in A$ and $b_i$ is the reverse of $b_i'$, for each
         $(b_i,b_i') \in Y$.
         Double arcs are disjoint from forbidden pairs.
         }
{Is there a balanced subgraph $D'$ of $D$ of negative weight such that
     $|A(D') \cap \{a_i,a_i'\}| \neq 1$ for each $(a_i, a_i') \in X$,
     and $|A(D') \cap \{b_i,b_i'\}| \neq 2$ for each $(b_i,b_i')\in Y$?}
Note that in this problem we are considering subgraphs, i.e., each arc can be contained in $D'$ at most as many times as it is contained in~$D$.

\begin{proposition}[{Gutin et al. \cite[Theorem 2]{GJW16}}]\label{prop:pbsHardness}
  \textsc{PBS} is W[1]-hard parameterized by \emph{distance to} pathwidth 16,
  even under the following restrictions:
  \begin{itemize}
   \item There exists a single arc $a^*$ of weight $-1$;
   \item $a^*$ is not part of a double arc;
   \item All other arcs have weight $0$;
   \item If there are at least two arcs between a pair of vertices, then they are exactly two and it is a double arc (in particular, there are no forbidden pairs).
  \end{itemize}
 \end{proposition}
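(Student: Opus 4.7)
The plan is to reduce the restricted \textsc{Properly Balanced Subgraph} (PBS) problem from \Cref{prop:pbsHardness} to unweighted \DMTSP by a parameterized reduction. Under the proposition's restrictions, the PBS instance $(D, X, a^*)$ simply asks whether $D$ has a balanced subgraph containing $a^*$ and using each pair in $X$ as a whole or not at all; its parameter is the size $k$ of a modulator $M_D$ of $D$ to pathwidth $16$.

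From such an instance I would construct an unweighted \DMTSP graph $G$ on $V(D)$ plus a family of constant-size gadgets. For each non-$a^*$ arc $e=(u,v)$ of $D$, a \emph{selector gadget} $G_e$ is attached at $u$ and $v$ with two cost-minimal \emph{canonical patterns} of equal length: the \emph{through} pattern enters at $u$ and exits at $v$ (encoding ``$e \in D'$''), and the \emph{bypass} pattern enters and exits at the same endpoint (encoding ``$e \notin D'$''). The gadget $G_{a^*}$ is an asymmetric variant whose through pattern is exactly one unit cheaper than the bypass. For each double arc pair $(a, a') \in X$ I would merge $G_a$ and $G_{a'}$ via a shared waypoint whose visitation cost forces $G_a$ and $G_{a'}$ to use the same pattern, encoding the double-arc constraint. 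A \emph{wiring layer} on $O(1)$ fresh hub vertices, together with short fixed-cost paths to $V(D)$, allows any collection of canonical patterns whose through-arcs form a balanced subgraph to be Eulerianized into a single closed walk covering every vertex of $G$. The budget $b$ is the total baseline cost of the gadgets and the wiring layer, minus one.

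For correctness, the forward direction turns a PBS solution $D'$ containing $a^*$ into a DTSP walk by playing the through pattern on $G_e$ for $e \in D'$ and the bypass otherwise; the balance of $D'$ ensures that this locally prescribed collection of traversals stitches together via the wiring into a single closed walk of weight exactly $b$. Conversely, a walk of weight at most $b$ must, by tightness, use every gadget in a canonical pattern; the through-gadgets then form a balanced multi-subgraph at each $v \in V(D)$ (since the closed walk balances in-degree and out-degree everywhere), respect the double-arc constraint (by the shared waypoint), and contain $a^*$ (since the one-unit saving is needed to meet the budget), yielding a PBS solution. For the pathwidth bound, set $M_G := M_D \cup H$, where $H$ is the $O(1)$-sized hub of the wiring layer; then $|M_G| \le k + O(1)$. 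Each connected piece of $G \setminus M_G$ is a constant-size gadget attached to at most two vertices of $D \setminus M_D$ along an edge of its path decomposition. Inserting the gadget's $O(1)$ vertices into the corresponding bag increases the width by at most $3$, yielding pathwidth at most $19$.

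The main obstacle is the joint gadget design: the two canonical patterns of each $G_e$ must have precisely equal cost, the synchronization waypoint for double arcs must not break this equality, the $a^*$ asymmetry must save exactly one unit, and the internal pathwidth of every gadget must stay within $3$. Ruling out non-canonical traversal patterns that might beat the intended budget---for example, a walk that ``half-uses'' a gadget by entering and leaving through unintended combinations of endpoints---will likely be the most delicate step and will dictate the precise combinatorial shape of the selector and synchronization gadgets.
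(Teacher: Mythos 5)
Your proposal does not prove the stated proposition; it proves something else entirely. \Cref{prop:pbsHardness} asserts the \textsf{W[1]}-hardness \emph{of} \textsc{PBS} (parameterized by distance to pathwidth~16, under the listed restrictions). Establishing it requires a parameterized reduction from a known \textsf{W[1]}-hard problem \emph{to} \textsc{PBS}. What you describe is a reduction \emph{from} \textsc{PBS} \emph{to} unweighted \DMTSPshort{} --- that is, you take the proposition as a hypothesis and use it to derive hardness of \DMTSPshort{}. This is (in spirit) the content of the paper's subsequent steps, \Cref{lem:wmcpp_hardness} and \Cref{thm:pw_hardness}, where \textsc{PBS} is reduced to \textsc{WMCPP} and then to \DMTSPshort{} via gadgets very much like your selector/through/bypass construction. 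But none of that addresses why \textsc{PBS} itself is hard.

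The paper's own justification of \Cref{prop:pbsHardness} is a citation: Gutin et al.~\cite{GJW16} prove \textsf{W[1]}-hardness of \textsc{PBS} by a reduction from \textsc{Multicolored Clique}, and the only new observation needed here is that their construction $D^*$ is in fact $2\ell$ vertices away from having pathwidth~16 (where $\ell$ is the number of edges of the sought clique), so the hardness holds for the parameter ``distance to pathwidth~16'' and not merely for pathwidth. To repair your write-up you would either need to reproduce (or cite and re-analyze) that \textsc{Multicolored Clique} $\to$ \textsc{PBS} reduction, checking the modulator bound and the four structural restrictions on weights and double arcs, or acknowledge that the proposition is imported from prior work. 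Your gadget ideas are not wasted --- they belong to the proof of \Cref{thm:pw_hardness} --- but as a proof of \Cref{prop:pbsHardness} the argument is circular.
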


 Gutin et al. prove the proposition by a reduction from \textsc{Multicolored Clique}. Although they formulate it for the parameterization by pathwidth, it is easy to observe (cf., ``$D^*$ has pathwidth at most 16'' \cite[Page 10--11]{GJW16}) that, in fact, the constructed graph is $2\ell$ vertices from being of pathwidth 16, where $\ell$ is the number of edges of the clique to be found.
 Also, it is easy to verify, that the reduction would also work starting from \textsc{Partitioned Subgraph Isomorphism}, giving an ETH lowerbound.

 As a next step, Gutin et al.~\cite{GJW16} provided a reduction from \textsc{PBS} (of the special form guaranteed by \Cref{prop:pbsHardness}) to the following problem, only increasing the pathwidth bound by 1.
 \problemQuestion{\textsc{Weighted Mixed Chinese Postman Problem (WMCPP)}}
{A  mixed multigraph $H = (V, A, E)$, where $A$ are arcs and $E$ are edges, a weight function $w: A \cup E \rightarrow \mathbb{N}$, a budget $b \in \mathbb{N}$}
{Is there a closed walk in $H$ of total weight at most $b$ which traverses each arc and each edge at least once?
     }

Since Gutin et al.\ do not prove the properties of this reduction separately but they include it as a part of a larger reduction, we repeat this part of the reduction here with some small modifications and simplifications for completeness.

\begin{lemma}\label{lem:wmcpp_hardness}
 \textsc{WMCPP} is W[1]-hard parameterized by distance to pathwidth 17,
  even if the resulting instance $(H=(V,A,E),w,b)$ satisfies the following restrictions:
  \begin{itemize}
   \item Each weight is either $2$, $3$, or $M$, where $M \le 3|A|+12$;
   \item weight of each edge is $M$;
   \item $b < M \cdot |w^{-1} (M)| +M -2$.
  \end{itemize}
\end{lemma}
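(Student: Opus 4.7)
The plan is to reduce the restricted \textsc{PBS} instances from \Cref{prop:pbsHardness} to \textsc{WMCPP} while preserving the parameter up to an additive constant. The guiding observation is that any closed walk in a mixed graph is automatically balanced at every vertex, so the balance condition imposed on a selected subgraph $D'$ in \textsc{PBS} is captured for free by the extra traversals (above the mandatory single traversal required by \textsc{WMCPP}) of the arcs of $H$.

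First, I would take $V(H)$ to be $V(D)$ together with one universal vertex $z$ (used to force connectivity of the underlying multigraph of $H$) and a constant number of auxiliary vertices per double arc. For each single arc $a=(u,v)\in A(D)$, I would place a corresponding arc $(u,v)$ in $A(H)$ of weight $3$, or of weight $2$ if $a=a^*$; one extra traversal of such an arc encodes including $a$ in $D'$, and the saving of $1$ gained by preferring the extra traversal of $a^*$ over that of any other arc encodes the weight $-1$ of $a^*$ in \textsc{PBS}. For each double arc $(a,a')\in X$ with endpoints $u,v$, I would install a gadget based on undirected edges of weight $M$ and internal auxiliary vertices so that the balance at those internal vertices forces the pair of copies of $(u,v)$ to be traversed the same number of extra times, thereby enforcing $|A(D')\cap\{a,a'\}|\in\{0,2\}$. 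The universal vertex $z$ would be attached to each $v\in V(D)$ by an undirected edge of weight $M$ to guarantee connectivity of the mandatory multigraph. Setting $M:=3|A|+12$ keeps all weights in $\{2,3,M\}$ with $M\le 3|A|+12$.

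Next, I would set $b$ to be the sum of all mandatory-traversal weights plus $2$ (for one extra traversal of $a^*$), so that $b<M\cdot|w^{-1}(M)|+M-2$ holds as required. In the forward direction, a \textsc{PBS} solution $D'\ni a^*$ lifts to a tour by traversing each edge once, each arc once, and each arc $a\in D'$ one additional time; the balance of $D'$ together with the balancing effect of the gadgets yields an Eulerian circuit of weight exactly $b$. In the reverse direction, the budget inequality forces each weight-$M$ edge to be traversed exactly once, so the extra arc traversals must form a balanced subset $D'\subseteq A(D)$ respecting the double-arc constraint (by the gadget mechanics), and the very tight extra-cost budget forces $D'$ to contain $a^*$. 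For the parameter, starting from a size-$k$ modulator $K$ witnessing pathwidth $16$ of $D\setminus K$, the set $K':=K\cup\{z\}$ is a modulator of $H$ of size $k+1$: each auxiliary vertex of a double-arc gadget has all its neighbors in $K\cup V(D)$ (in fact only two of them), and can be absorbed into a path decomposition of $D\setminus K$ with only a $+1$ increase in width.

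The principal obstacle I anticipate is the exact design of the double-arc gadget: it must enforce the ``both or neither'' condition purely through balance at internal auxiliary vertices, use only arc weights in $\{2,3\}$ and edge weight $M$, and consist of a constant number of vertices so that the pathwidth increases by at most a constant. A secondary subtlety is to make the budget arithmetic tight enough that the inequality $b<M\cdot|w^{-1}(M)|+M-2$ puts tours of weight at most $b$ in exact correspondence with \textsc{PBS} solutions containing $a^*$, rather than admitting spurious tours that exploit balance in ways unrelated to the original PBS structure.
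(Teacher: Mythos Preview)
Your proposal has a fundamental gap in the encoding of simple arcs. You suggest placing each single arc $a=(u,v)$ directly as an arc of weight~$3$ (or~$2$ for~$a^*$) in $H$, with ``one extra traversal'' encoding $a\in D'$. The problem is twofold. First, in \textsc{WMCPP} every arc must be traversed at least once; this mandatory traversal already creates an imbalance of $\deg^{\mathrm{in}}_D(v)-\deg^{\mathrm{out}}_D(v)$ at each original vertex~$v$, so the \emph{extra} traversals do not encode a balanced subgraph~$D'$ but rather a correction to the a~priori imbalance of~$D$ itself, and the single $\{z,v\}$-edge (traversed exactly once, contributing $\pm 1$) cannot absorb this in general. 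Second, and more decisively, your budget ``baseline plus~$2$'' admits at most one extra arc traversal in total. But any balanced $D'$ containing~$a^*$ will generally contain many arcs (e.g.\ a whole cycle through~$a^*$), and under your encoding each such arc costs an additional $2$ or~$3$. Hence the budget is too small for the forward direction whenever $|D'|>1$, and there is no way to fix a single budget that works independently of~$|D'|$.

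The paper's proof avoids both issues by replacing each simple arc $(u,v)$ with a small gadget: a new vertex $z_{uv}$, two heavy arcs $(u,z_{uv})$, one heavy arc $(v,z_{uv})$, and light arcs $(z_{uv},u)$, $(z_{uv},v)$. The heavy arcs force exactly three light traversals out of~$z_{uv}$, and the \emph{choice} of which light arc is used twice determines a passive mode (balanced at $u$ and $v$) versus an active mode (imbalance exactly as if $(u,v)$ were used once). Crucially, passive and active have \emph{equal} cost for every gadget except the one for~$a^*$, where active is cheaper by exactly~$1$; this makes the total cost independent of~$|D'|$, so a single budget suffices. Double arcs are handled by a heavy arc plus a heavy undirected edge, whose orientation is the passive/active switch, again at equal cost. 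This equal-cost-per-gadget mechanism is the missing ingredient in your plan; once one has it, neither the universal vertex nor any elaborate double-arc gadget is needed, and the pathwidth increases by only~$1$ because each~$z_{uv}$ can be inserted into a bag containing $\{u,v\}$.
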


\begin{proof}[Proof of \Cref{lem:wmcpp_hardness}]
We provide a reduction from \textsc{PBS} with the special properties as guaranteed by \Cref{prop:pbsHardness}.
We follow the approach of Gutin et al.~\cite{GJW16} with slight simplification and modification.
Let $(D, w, X, \emptyset)$ be an instance of \textsc{PBS} with the special properties as guaranteed by \Cref{prop:pbsHardness}.
Since there is only one weakly connected component containing the special arc of weight $-1$, we may ignore any parts of the solution in any other component.
I.e., we assume that $D$ is weakly connected.

We replace each arc, or double arc by a gadget which allows either \emph{a passive solution} (all arc and edges are traversed, but no imbalance is caused) or \emph{an active solution} (all vertices are traversed and the only disbalanced vertices are the original vertices---this corresponds to using the arc or the double arc).
The weights of these two solutions are the same for all gadgets except for the one replacing the special arc of negative weight, where the active solution is cheaper.

We present the gadgets in terms of normal arcs (weight 2), heavy arcs (weight $M$), and heavy edges (weight $M$).
We also introduce one arc of weight 3.
The value $M$ will be set later.
Due to the condition on $b$ any solution traverses each heavy arc and each heavy edge exactly once.
The gadgets are only connected by sharing the vertices of the original graph, i.e., all newly introduced vertices only have connections within their gadget.
Let us now present the gadgets.

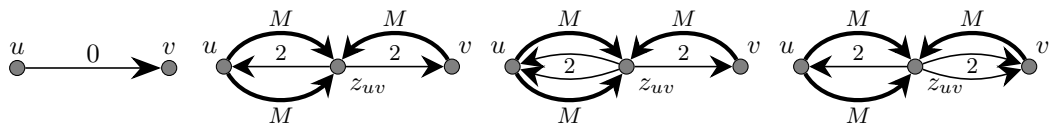
\begin{figure}[tbh]
\centering
\begin{tikzpicture}[every node/.style={vertex},every label/.style={rectangle,hide},>={Stealth[length=4mm,width=3mm]}]
\node[draw=none, fill=none] at (1,-.7) {};
\node[label=90:$u$] (u) at (0,0) {};
\node[label=90:$v$] (v) at (2,0) {};
\draw[->, semithick] (u) to node[hide, above] {$0$} (v);
\end{tikzpicture}%
\hspace{2mm}%
\begin{tikzpicture}[every node/.style={vertex},every label/.style={rectangle,hide},>={Stealth[length=3mm,width=3mm]}]
\node[label=95:$u$] (u) at (0,0) {};
\node[label=below right:$z_{uv}$] (z) at (1.5,0) {};
\node[label=85:$v$] (v) at (3,0) {};
\draw[->, ultra thick, bend left=60] (u) to node[hide, above] {\footnotesize $M$} (z);
\draw[->, ultra thick, bend right=60] (u) to node[hide, below] {\footnotesize $M$} (z);
\draw[->, ultra thick, bend right=60] (v) to node[hide, above] {\footnotesize $M$} (z);
\draw[->, semithick] (z) to node[hide,above] {\footnotesize $2$} (u);
\draw[->, semithick] (z) to node[hide, above] {\footnotesize $2$} (v);
\end{tikzpicture}%
\hspace{1mm}%
\begin{tikzpicture}[every node/.style={vertex},every label/.style={rectangle,hide},>={Stealth[length=3mm,width=3mm]}]
\node[label=95:$u$] (u) at (0,0) {};
\node[label=below right:$z_{uv}$] (z) at (1.5,0) {};
\node[label=85:$v$] (v) at (3,0) {};
\draw[->, ultra thick, bend left=60] (u) to node[hide, above] {\footnotesize $M$} (z);
\draw[->, ultra thick, bend right=60] (u) to node[hide, below] {\footnotesize $M$} (z);
\draw[->, ultra thick, bend right=60] (v) to node[hide, above] {\footnotesize $M$} (z);
\draw[->, semithick, bend left=20] (z) to node[hide,above] {\footnotesize $2$} (u);
\draw[->, semithick, bend right=20] (z) to (u);
\draw[->, semithick] (z) to node[hide, above] {\footnotesize $2$} (v);
\end{tikzpicture}%
\hspace{1mm}%
\begin{tikzpicture}[every node/.style={vertex},every label/.style={rectangle,hide},>={Stealth[length=3mm,width=3mm]}]
\node[label=95:$u$] (u) at (0,0) {};
\node[label=below right:$z_{uv}$] (z) at (1.5,0) {};
\node[label=85:$v$] (v) at (3,0) {};
\draw[->, ultra thick, bend left=60] (u) to node[hide, above] {\footnotesize $M$} (z);
\draw[->, ultra thick, bend right=60] (u) to node[hide, below] {\footnotesize $M$} (z);
\draw[->, ultra thick, bend right=60] (v) to node[hide, above] {\footnotesize $M$} (z);
\draw[->, semithick, bend right=20] (z) to node[hide,above] {\footnotesize $2$} (v);
\draw[->, semithick, bend left=20] (z) to (v);
\draw[->, semithick] (z) to node[hide, above] {\footnotesize $2$} (u);
\end{tikzpicture}
\caption{Replacing a simple arc $(u,v)$ of weight 0. Left to right: original arc, the replacement gadget, passive solution, active solution.}
\label{fig:replacing-simple-arc}
\end{figure}

\begin{description}
\item[Arc $(u,v)$ of weight 0 that is not part of a double arc:]
To construct the gadget, we remove the original arc, add a new vertex $z_{uv}$, add normal arcs $(z_{uv}, u)$ and $(z_{uv},v)$, two heavy arcs $(u, z_{uv})$, and a heavy arc $(v, z_{uv})$ (see \Cref{fig:replacing-simple-arc}).

In any solution that traverses each heavy arc exactly once we have three incoming arcs to $z_{uv}$ so we have to use one of the outgoing normal arcs once and the other one twice to make $z_{uv}$ balanced.
\begin{itemize}
\item In the \emph{passive solution} we use $(z_{uv}, u)$ twice, which makes all vertices of the gadget balanced.
\item In the \emph{active solution} we use $(z_{uv}, v)$ twice. Then $u$ has one more outgoing arc than incoming, while $v$ has one more incoming than outgoing---which exactly corresponds to using
the arc $(u,v)$.
\end{itemize}
Both solutions use $3$ normal arcs and $3$ heavy ones, i.e., they have the same weight $3M+6$.

\item[The arc $(u,v)$ of weight -1:]
We use the above gadget, but make the arc $(z_{uv}, u)$ of weight~$3$.
Then the weight of the active solution is $3M+7$, whereas the weight of the passive one is $3M+8$.

\begin{figure}[tbh]
\begin{center}
\begin{tikzpicture}[every node/.style={vertex},every label/.style={rectangle,hide},>={Stealth[length=4mm,width=3mm]}]
\node[label=95:$u$] (u) at (0,0) {};
\node[label=85:$v$] (v) at (2,0) {};
\draw[->, thick, bend left] (u) to node[hide, above] {$0$} (v);
\draw[->, thick, bend right] (u) to node[hide, below] {$0$} (v);
\end{tikzpicture}%
\hspace{6mm}%
\begin{tikzpicture}[every node/.style={vertex},every label/.style={rectangle,hide},>={Stealth[length=3mm,width=3mm]}]
\node[label=95:$u$] (u) at (0,0) {};
\node[label=85:$v$] (v) at (2,0) {};
\draw[ultra thick, bend left] (u) to node[hide, above] {\footnotesize $M$} (v);
\draw[->, ultra thick, bend right] (u) to node[hide, below] {\footnotesize $M$} (v);
\end{tikzpicture}%
\hspace{6mm}%
\begin{tikzpicture}[every node/.style={vertex},every label/.style={rectangle,hide},>={Stealth[length=3mm,width=3mm]}]
\node[label=95:$u$] (u) at (0,0) {};
\node[label=85:$v$] (v) at (2,0) {};
\draw[<-,ultra thick, bend left] (u) to node[hide, above] {\footnotesize $M$} (v);
\draw[->, ultra thick, bend right] (u) to node[hide, below] {\footnotesize $M$} (v);
\end{tikzpicture}%
\hspace{6mm}%
\begin{tikzpicture}[every node/.style={vertex},every label/.style={rectangle,hide},>={Stealth[length=3mm,width=3mm]}]
\node[label=95:$u$] (u) at (0,0) {};
\node[label=85:$v$] (v) at (2,0) {};
\draw[->, ultra thick, bend left] (u) to node[hide, above] {\footnotesize $M$} (v);
\draw[->, ultra thick, bend right] (u) to node[hide, below] {\footnotesize $M$} (v);
\end{tikzpicture}
\end{center}
\caption{Replacing a double arc $(u,v)$. Left to right: original arcs, the replacement gadget, passive solution, active solution.}
\label{fig:replacing-double-arc}
\end{figure}
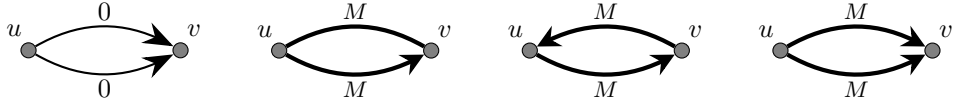

\item[Double arc from $u$ to $v$:]
We remove the original arcs and introduce a heavy arc $(u,v)$ and a heavy edge $\{u,v\}$ (see \Cref{fig:replacing-double-arc}).
For a solution that traverses each of them exactly once, we only need to know, which direction the edge is traversed.
\begin{itemize}
\item In the \emph{passive solution} we traverse it from $v$ to $u$ which makes both vertices balanced.
\item In the \emph{active solution} we traverse it from $u$ to $v$, which creates two outgoing arcs from $u$ and two incoming arcs to $v$, corresponding to using the double arc.
\end{itemize}
Both solutions use one heavy arc and one heavy edge, i.e., they have the same weight~$2M$.
\end{description}

We replace each arc and double arc by the appropriate gadget and call the obtained mixed multigraph $H$ and the corresponding weight function $\widehat{w}$.
Suppose that $D$ had $m_1$ simple arcs of weight $0$, $m_2$ double arcs, and the one arc of weight $-1$.
Then, taking the passive solution for each gadget, we obtain a solution of total weight
\[m_1 \cdot (3M+6) + m_2 \cdot 2M +3M+8 = M \cdot (3m_1+2m_2+3)+ 6m_1+8.\]
We let $M=6m_1+12$ and $b = M \cdot (3m_1+2m_2+3)+ 6m_1+7$.
Note that there are $|\widehat{w}^{-1}(M)|=3m_1+2m_2+3$ heavy arcs and edges in $H$.
Therefore, we have \[b = M \cdot (3m_1+2m_2+3)+ 6m_1+7 < M \cdot (3m_1+2m_2+3)+ 6m_1+12 -2 = M \cdot |\widehat{w}^{-1}(M)| + M-2.\]
This implies that every solution of weight at most $b$ traverses each heavy arc and each heavy edge exactly once.

Let $Z \subseteq V(D)$ be such that $D \setminus Z$ has pathwidth at most 16.
We claim, that the pathwidth of $H \setminus Z$ is at most 17.
Indeed, given a path decomposition for $D\setminus Z$, to obtain a path decomposition of $H\setminus Z$, we only need to place the new vertices $z_{uv}$ into the decomposition.
To do so, for each arc $(u,v)$ we find a bag of the original decomposition which contains $\{u,v\} \setminus Z$, make a copy of this bag right after the original bag and place $z_{uv}$ into this new copy.
This way every arc and every edge appears in a bag and the width is only increased by $1$.

As the reduction can be clearly carried out in polynomial time, it remains to verify its correctness.
Let $D'$ be a balanced subgraph of $D$ of negative weight such that $|A(D') \cap \{a_i,a_i'\}| \neq 1$ for each $(a_i, a_i') \in X$.
We construct a solution $S$ for the constructed instance $(H,w,b)$ of \textsc{WMCPP} by taking the passive solution in each gadget replacing an arc or double arc not being part of $D'$ and the active solution for each arc or double arc included in $D'$.
Obviously, $S$ contains each edge and each arc at least once.
Since the disbalance caused by $S$ at each original vertex of $D$ is exactly the same as the disbalance caused by the arcs of $D'$, $S$ is a balanced subgraph.
Since $D$ was weakly connected and $S$ traverses all arcs and edges of $H$, $S$ is weakly connected.
Therefore, it corresponds to a walk traversing each arc and each edge at least once.
Moreover, since $D'$ is of negative weight, $a^* \in A(D')$ and the active solution is used in the gadget replacing arc $a^*$.
Thus, the total weight of the solution is
\[m_1 \cdot (3M+6) + m_2 \cdot 2M +3M+7 = M \cdot (3m_1+2m_2+3)+ 6m_1+7=b.\]
Hence, $(H,w,b)$ is a yes-instance of \textsc{WMCPP}.

Conversely, if there is a solution $S$ for $(H,w,b)$, then, due to the budget constraint, it traverses every heavy edge and every heavy arc exactly once.
As vertices $z_{uv}$ must be balanced with respect to $S$, it follows, that $S$ either takes the passive or the active solution in each gadget.
In particular, it must take the active one in the gadget replacing $a^*$, as otherwise it would exceed the budget.
Now taking $D'$ as the subgraph of $D$ formed by the arcs and double arcs whose gadgets use the active solution in $S$, we obtain a balanced subgraph of $D$, since the imbalance for each vertex is the same in $S$ as in $D'$.
The subgraph $D'$ also respects double arcs by construction.
As $a^* \in D'$, $D'$ has a negative weight.
Therefore, $(D, w, X, \emptyset)$ is a yes-instance, finishing the proof.
\end{proof}

\begin{proof}[Proof of \Cref{thm:pw_hardness}]
We provide a parameterized reduction from \textsc{WMCPP} parameterized by distance to pathwidth~17.
Let $(H,w,b)$ be an instance with the special properties guaranteed by \Cref{lem:wmcpp_hardness}.

To produce an instance of \DMTSPshort, we replace each arc of weight $h$ by a directed path of length $h$.
We replace a heavy edge of weight $M$ by a bidirected path of length $M$, that is, we take a directed path of length $M$ from one endpoint to the other and then add the opposite arc to every arc of the path.
Note that there are no edges of weight other than $M$ in $H$ by assumption.
We denote the resulting directed graph $G$.

The reduction can be clearly carried out in polynomial time, as $M \le 3|A|+12$.

Let $Z \subseteq V(H)$ be such that $H \setminus Z$ has pathwidth at most 17.
Given a path decomposition for $H \setminus Z$ of width $17$, we can obtain a path decomposition of~$G \setminus Z$ of width at most $19$ in the following way.
For each arc or edge $(u,v)$ of $H$ of weight~$h$ find a bag in which $\{u,v\}  \setminus Z$ appears.
If $h=2$, then introduce a copy of the bag right after the original bag, and add the inner vertex of the path replacing the arc $(u,v)$ into that copy.
Otherwise introduce $h-2$ copies of the bag, subsequently containing adjacent pairs of inner vertices of the replacement path.
It is easy to verify that we indeed obtain a path decomposition of $G \setminus Z$.
Therefore, the pathwidth of $G \setminus Z$ is at most $2$ more than the pathwidth of~$H \setminus Z$ and the reduction preserves the parameter.

Finally, we claim that the instance $(H,w,b)$ of \textsc{WMCPP} is equivalent to the instance $(G, \mathbf{1}, b)$ of \DMTSPshort, where $\mathbf{1}$ assigns weight $1$ to every arc of $G$.
If $S$ is a solution for $H$, then replacing each traversal of an edge or an arc by the traversal of the corresponding path in $G$ we obtain a walk of the same weight that visits every vertex at least once.
This is the case for the vertices of $H$, since each of them is incident with at least one (traversed) arc or edge.
For the inner vertices of the paths it follows from the fact that each path is traversed at least once (in at least one of the directions).

Conversely, suppose that we have a walk $S'$ in $G$ of weight at most $b$ that visits all vertices of $G$.
First, to visit an inner vertex of a directed path, the walk has to traverse the whole path at least once.
Now, consider a bidirected path of length $M$.
As each inner vertex has to be visited, $S'$ contains for each such vertex $w$ at least one arc with head in $w$ and one arc with tail in $w$.
To reach also other vertices, the walk must also contain an arc of the path with tail in one of the endpoints and an arc of the path with head in one of the endpoints.
It follows that altogether $S'$ contains at least $M$ arcs from each such path.
If neither of the two arcs between two consecutive vertices of the path would be contained in~$S'$, then both arcs would have to be contained for all other pairs.
Therefore, $S'$ would contain at least $2M-2$ arcs from that path, i.e., $M-2$ more than $M$, which cannot happen as $b < M \cdot |w^{-1} (M)| +M-2$.
Hence, $S'$ traverses the path in at least one direction completely.
As it cannot traverse it completely twice, we can omit any other visits to the path.

Now consider a walk $S$ in $H$ obtained from $S'$ by replacing each traversal of a path by a traversal of the corresponding arc or edge of $H$.
By the above argument $S$ visit every edge and every arc of $H$ and the weight of $S$ is at most that of $S'$, which is at most $b$.
Hence, $(H,w,b)$ is a yes-instance, finishing the proof.
\end{proof}

}%

\section{Conclusions and Open Problems}

We have shown that \DWRPshort is \FPT w.r.t.\ vertex integrity as well as w.r.t.\ feedback edge set number, while it is \Wh{} w.r.t.\ distance to constant treedepth even in unweighted graphs.
\DTSPshort is \Wh w.r.t.\ distance to pathwidth~3, also even in unweighted graphs and we conjecture that the above hardness w.r.t.\ distance to constant treedepth holds also for this problem.
An immediate open problem stemming from our research is to close the gap by considering distances to even smaller constant values of the parameters.
Namely, we would like to know whether the problems are \FPT w.r.t. feedback vertex set (distance to treewidth~1), distance to disjoint paths (pathwidth~1), or distance to stars (treedepth~2).

We have also shown, that the problems are \FPT parameterized by treewidth combined with the number of visits.
Given that all the hardness reductions leverage vertices of high degrees, we are wondering whether a combination of treewidth with the maximum degree of the input graph might lead to an fpt-algorithm.

In this paper, we focused on structural parameters of the underlying undirected graph.
Nevertheless, there are many width measure actually designated for directed graphs~\cite{GanianHK0ORS16} and the tractability w.r.t. them is open.
In fact, currently we do not even know whether \DTSPshort is polynomial time solvable or \NPh on directed graphs that become acyclic (DAGs) after deleting a single vertex or even a single edge.

\bibliography{main}

\clearpage

\appendix
\appendixText

\end{document}